\definecolor{RRed}{HTML}{CD071E}
\definecolor{BBlue}{HTML}{0047AB}
\declaretheorem[style=definition,qed=$\blacklozenge$,numberwithin=section,name=Definition]{definition}
\declaretheorem[style=plain,sibling=definition,name=Theorem]{theorem}
\declaretheorem[style=plain,sibling=definition,name=Lemma]{lemma}
\declaretheorem[style=plain,sibling=definition,name=Corollary]{corollary}
\declaretheoremstyle[preheadhook=\renewcommand\@upn{},headfont=\itshape,qed=$\blacktriangle$]{example}
\declaretheorem[style=example,sibling=definition,name=Example]{example}
\newcommand{\ie}{\textit{i}.\textit{e}.}
\newcommand{\eg}{\textit{e}.\textit{g}.}
\newcommand{\ub}{\underline{b}}
\newcommand{\ul}{\underline{l}}
\newcommand{\uii}{\underline{i}}
\newcommand{\ujj}{\underline{j}}
\newcommand{\ux}{\underline{x}}
\newcommand{\uy}{\underline{y}}
\newcommand{\uL}{\underline{L}}
\newcommand{\uX}{\underline{X}}
\newcommand{\uY}{\underline{Y}}
\newcommand{\uone}{\underline{1}}
\newcommand{\ttd}{\mathtt{d}}
\newcommand{\tth}{\mathtt{h}}
\newcommand{\ttl}{\mathtt{l}}
\newcommand{\ttr}{\mathtt{r}}
\newcommand{\tG}{\mathtt{G}}
\newcommand{\tP}{\mathtt{P}}
\newcommand{\tR}{\mathtt{R}}
\newcommand{\EE}{\mathbb{E}}
\newcommand{\II}{\mathbb{I}}
\newcommand{\NN}{\mathbb{N}}
\newcommand{\PP}{\mathbb{P}}
\newcommand{\RR}{\mathbb{R}}
\newcommand{\TT}{\mathbb{T}}
\newcommand{\ZZ}{\mathbb{Z}}
\newcommand{\cC}{\mathcal{C}}
\newcommand{\cD}{\mathcal{D}}
\newcommand{\cI}{\mathcal{I}}
\newcommand{\cL}{\mathcal{L}}
\newcommand{\cO}{\mathcal{O}}
\newcommand{\cP}{\mathcal{P}}
\newcommand{\cX}{\mathcal{X}}
\newcommand{\cY}{\mathcal{Y}}
\newcommand{\rb}{\mathrm{b}}
\newcommand{\diff}{\mathrm{d}}
\newcommand{\ee}{\mathrm{e}}
\newcommand{\rB}{\mathrm{B}}
\newcommand{\rP}{\mathrm{P}}
\newcommand{\sH}{\mathsf{H}}
\newcommand{\bb}{\mathsf{b}}
\newcommand{\cc}{\mathsf{c}}
\newcommand{\xx}{\mathsf{x}}
\newcommand{\rrD}{\mathscr{D}}
\newcommand{\rrF}{\mathscr{F}}
\newcommand{\dCC}{\mathfrak{C}}
\newcommand{\error}{\mathfrak{E}}
\newcommand{\bhatta}{\mathfrak{B}}
\newcommand{\entropy}{\mathrm{H}}
\newcommand{\LDPC}{\mathrm{LDPC}}
\newcommand{\BP}{\mathrm{BP}}
\newcommand{\MAP}{\mathrm{MAP}}
\newcommand{\ML}{\mathrm{ML}}
\newcommand{\argmax}{\mathrm{argmax}}
\newcommand{\stab}{\mathrm{stab}}
\newcommand{\avg}{\mathrm{avg}}
\newcommand{\bec}{\mathrm{BEC}(\epsilon)}
\newcommand{\bscp}{\mathrm{BSC}(p)}
\newcommand{\bawgnc}{\mathrm{BAWGNC}(\sigma)}
\begin{document}


\title[Stability of Low-Density Parity-Check Codes]{The Stability of Low-Density Parity-Check Codes \\ and Some of Its Consequences}

\author[W.~Liu]{\footnotesize Wei Liu}
\address{\'{E}cole polytechnique f\'{e}d\'{e}rale de Lausanne, Switzerland}
\email{wei.liu@alumni.epfl.ch}

\author[R.~Urbanke]{R\"{u}diger Urbanke}
\address{\'{E}cole polytechnique f\'{e}d\'{e}rale de Lausanne, Switzerland}
\email{rudiger.urbanke@epfl.ch}


\begin{abstract}
We study the stability of low-density parity-check codes under blockwise or bitwise maximum \textit{a posteriori} decoding, where transmission takes place over a binary-input memoryless output-symmetric channel. Our study stems from the consideration of constructing universal capacity-achieving codes under low-complexity decoding algorithms, where universality refers to the fact that we are considering a family of channels with equal capacity. Consider, \eg, the right-regular sequence by Shokrollahi and the heavy-tail Poisson sequence by Luby \textit{et al}. Both sequences are provably capacity-achieving under belief propagation decoding when transmission takes place over the binary erasure channel. 

In this paper we show that many existing capacity-achieving sequences of low-density parity-check codes under belief propagation decoding are not universal under belief propagation decoding. We reveal that the key to showing this non-universality result is determined by the stability of the underlying codes. More concretely, for an ordered and complete channel family and a sequence of low-density parity-check code ensembles, we determine a stability threshold associated with them, which further gives rise to a sufficient condition such that the sequence is not universal under belief propagation decoding. Furthermore, we show that the same stability threshold applies to blockwise or bitwise maximum \textit{a posteriori} decoding as well. We then present how stability can determine an upper bound on the corresponding blockwise or bitwise maximum \textit{a posteriori} threshold, revealing the operational significance of the stability threshold.
\end{abstract}

\maketitle


\section{Introduction}
\label{sec:stab+intro}

\subsection{Motivation} 

Gallager introduced low-density parity-check (LDPC) codes in the 1960s~\cite{Gal60,Gal63}. However, due to the limitation of computational resources during that era, they were regrettably forgotten until the invention of Turbo codes~\cite{BGT93} and the rediscovery works by Mackay and Neal~\cite{MN96} as well as Luby \textit{et al}.~\cite{LMSSS97}. Since then, systematic developments of LDPC codes have been carried out by Luby \textit{et al}.~\cite{LMSS01a,LMSS01b} as well as Richardson and Urbanke~\cite{RU01a,RU01b}. Their capacity-approaching performances and low-complexity encoding and decoding algorithms have made LDPC codes of great practical interest and have become standardized in many communication protocols.

In their seminal paper~\cite{LMSSS97}, Luby \textit{et al}.~constructed the heavy-tail Poisson sequence (or its commercialized version, the Tornado sequence), and this sequence is well-known to achieve capacity for the binary erasure channel (BEC) under belief propagation (BP) decoding. Around the same time, Shokrollahi constructed the right-regular sequence~\cite{Sho99} and showed that this sequence is also capacity-achieving for the BEC under BP decoding. While it is relatively simple to show that these sequences are capacity-achieving for the BEC, it is not clear whether the same sequence can achieve the identical capacity for other important communication channels, \eg, the binary symmetric channel (BSC) and the binary additive white-Gaussian noise channel (BAWGNC), under BP decoding or even maximum \textit{a posteriori} (MAP) decoding. Perhaps the only codes known to date under the framework of LDPC codes that are provably capacity-achieving under BP decoding over the whole family of binary-input memoryless output-symmetric (BMS) channels with equal capacity are the sequence of spatially-coupled regular LDPC codes~\cite{KRU13}. This is possible because the sequence of uncoupled regular LDPC codes achieves capacity under MAP decoding. Basically, by increasing the ``density'' of the underlying parity-check matrices, the weight distributions of these regular LDPC codes behave more and more like completely random parity-check codes which are known to universally achieve capacity under MAP decoding. Consequently by the recent spatial coupling technique~\cite{KRU11}, one can universally achieve capacity under BP decoding. Other important contributions towards the design of such universal sequences of LDPC codes under BP decoding have been made in~\cite{SS11}, where the authors derived a lower bound on the universally achievable fraction of capacity under BP decoding, and linear programming techniques were employed in order to numerically find ``good'' degree distribution pairs. Moreover, it was shown in this paper that suitably chosen regular LDPC codes can achieve capacity under maximum likelihood (ML) decoding. 


\subsection{Main Results and Outline} 

The main results and outline of this paper are summarized as follows.

\begin{enumerate}[nosep,leftmargin=*]
\item[--] \textit{Universality}. We investigate the universality of LDPC codes under BP decoding. We show that many existing capacity-achieving sequences of LDPC codes designed for the BEC under BP decoding cannot achieve the same capacity for a broad family of BMS channels. The study of universality is carried out in Sections~\ref{sec:ca} and~\ref{sec:univ+bp}. More precisely, in Section~\ref{subsec:ca+exa} we investigate the heavy-tail Poisson sequence and the right-regular sequence and show a common characteristic shared between the two sequences. Such a characteristic is related to the fraction of degree-two variable nodes in the Tanner graph representation of the underlying code. Then in Section~\ref{subsec:ca+deg2} we determine bounds on the fraction of degree-two variable nodes for both sequences. Subsequently, in Section~\ref{subsec:univ} we determine a sufficient condition under which a capacity-achieving sequence for the BEC under BP decoding is not universal. In Section~\ref{subsec:poi+univ} we show that the heavy-tail Poisson sequence satisfies this condition, and thus is not universally capacity-achieving under BP decoding. Section~\ref{subsec:rr+univ} discusses the right-regular sequence and shows non-rigorously that the same conclusion also holds. Since these two sequences and many other capacity-achieving sequences are designed through the same principle, one can conclude that using such sequences cannot achieve the same capacity for other BMS channels under BP decoding.
\item[--] \textit{Stability}. We determine the stability threshold of LDPC codes under either blockwise or bitwise MAP decoding, when transmission takes place over a BMS channel. We present how stability can determine an upper bound on the corresponding blockwise or bitwise MAP threshold. The study of stability is carried out in Section~\ref{sec:stab+cond+map}. More precisely, in Section~\ref{subsec:exploration} we present an exploration process that reveals a particular type of neighboring structure around a generic variable node of degree two belonging to the underlying Tanner graph. Subsequently Section~\ref{subsec:cycle+deg2} shows  that, when the channel entropy is above the stability threshold, how this exploration process allows us to determine a global structure of the Tanner graph such that the channel realizations along this graphical structure satisfy a particular condition. As a consequence, Section~\ref{subsec:lbd+block+error+prob} shows that the stability threshold is an upper bound on the corresponding blockwise MAP threshold. Furthermore, in Section~\ref{subsec:lbd+bit+error+prob} we show how stability determines an upper bound on the corresponding bitwise MAP threshold. 
\end{enumerate} 


\subsection{Preliminaries}
\label{subsec:prelim}

In this section we give the preliminaries on the code and channel models and the reader is referred to~\cite{RU08} for an exposition of more details. We start by giving some useful notation and conventions which will be used throughout this paper. 

We denote by $\RR$ and $\ZZ$ the set of real numbers and integers, respectively. The set of natural numbers is denoted by $\NN \coloneqq \{i \in \ZZ | i \geqslant 1\}$. We use the shorthand notation $[n] \coloneqq \{i \in \NN \, | \, i \leqslant n\} = \{1,2,\dots,n\}$ for $n \in \NN$. For a scalar $x \in [0,1]$, very often we write $\bar{x}$ as a shorthand for $1-x$. In general we distinguish scalars from vectors; \eg, we write $x$ as a scalar and $\ux$ as a vector. For $\ux \in \RR^{n}$, we write $\ux_i$ as the $i$-th element of $\ux$. Let $\ee \coloneqq \lim_{n \to \infty} (1+1/n)^n$ be Euler's constant. The natural exponential function is defined by $x \mapsto \exp(x) \coloneqq \ee^x$. Its inverse function is the natural logarithmic function $x \mapsto \ln(x)$. We also adopt the Landau notation $\cO(f(x))$, $\varTheta(f(x))$ and $\varOmega(f(x))$ for a real-valued function $f$.


\subsubsection{Degree Distributions}

We first review the degree distributions associated with an ensemble of LDPC block codes. The number of \textit{variable} nodes is denoted by $n \in \NN$, among which the number of variable nodes of \textit{degree} $i$ is denoted by $\varLambda_{i,n}$, where $2 \leqslant i \leqslant \ttl$ for some $\ttl \in \NN$. We refer to $\ttl$ as the \textit{maximum} degree of variable nodes. The number of \textit{check} nodes is denoted by $m \in \NN$, among which the number of check nodes of \textit{degree} $i$ is denoted by $P_{i,m}$, where $2 \leqslant i \leqslant \ttr$ for some $\ttr \in \NN$. We refer to $\ttr$ as the \textit{maximum} degree of check nodes. When $\sum_{i=2}^{\ttl} i \varLambda_{i,n} = \sum_{i=2}^{\ttr} i P_{i,m}$, we call $n$ the \textit{blocklength}, write $P_{i,m}$ as $P_{i,n}$, and call $r_n = 1 - m/n$ the \textit{rate}. 

Let $L_{i,n} = \varLambda_{i,n} / n$ be the fraction of variable nodes of degree $i$. Then the variable \textit{degree distribution from a node perspective}, $L_n$, is defined via the generating function $L_n(x) = \sum_{i=2}^{\ttl} L_{i,n} x^i$. Let $R_{i,n} = P_{i,n} / (n (1 - r_n))$ be the fraction of check nodes of degree $i$. Then the check \textit{degree distribution from a node perspective}, $R_n$, is defined via the generating function $R_n(x) = \sum_{i=2}^{\ttr} R_{i,n} x^i$. The rate $r_n$ can be expressed as $1 - L_n ^{\prime}(1) / R_n ^{\prime} (1)$.

Given degree distributions $L_n$ and $R_n$, the standard approach to construct the underlying Tanner graph $\tG$ is the following. We regard each variable or check node of degree $i$ as consisting of $i$ half-edges and each half-edge has one socket. Label the half-edges belonging to variable nodes with the set of integers $\{1,\dots,nL^{\prime} _{n} (1)\}$ in an arbitrary way and do the same for half-edges belonging to check nodes. Let $\Pi$ be a uniform permutation on this set of integers, and connect the $i$-th half-edge on the variable node side to the $\Pi(i)$-th half-edge on the check node side. This is referred to as the \textit{configuration model} in the random graphs literature; see, \eg,~\cite{Bol01} for an exposition and Figure~\ref{fig:stab+ldpc} for an illustration. Very often in the sequel we assume that the maximum degrees $\ttl$ and $\ttr$ are fixed and do not depend on $n$. We write $L_n \longrightarrow_{\rrD} L$ to indicate that for every $i$, $L_{i,n}$ converges to a limit $L_i$ as $n$ tends to $\infty$. Let $L(x) = \sum_{i=2}^{\ttl} L_i x^i$. We refer to $L$ as the variable \textit{design} degree distribution from a node perspective, and call $\ttl_{\avg} = L^{\prime}(1)$ the \textit{average} variable-node degree. Similarly, we write $R_n \longrightarrow_{\rrD} R$ to indicate that for every $i$, $R_{i,n}$ converges to a limit $R_i$ as $n$ tends to $\infty$. Let $R(x) = \sum_{i=2}^{\ttr} R_i x^i$. We refer to $R$ as the check \textit{design} degree distribution from a node perspective, and call $\ttr_{\avg} = R^{\prime}(1)$ the \textit{average} check-node degree. The \textit{design} rate $r$ is defined as $r = 1 - \ttl_{\avg} / \ttr_{\avg}$. We also refer to $(L,R)$ as a design degree distribution \textit{pair} from a node perspective.

\begin{figure}[H]
\centering
\includegraphics{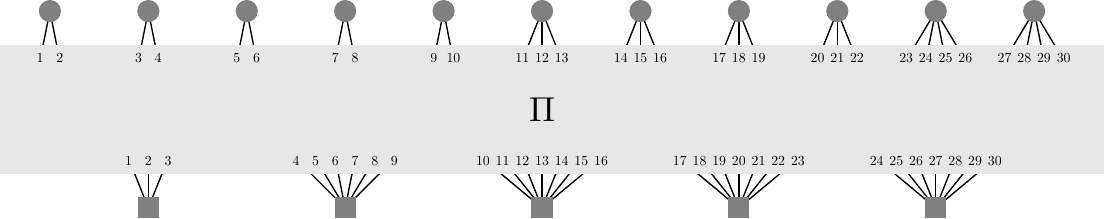}
\caption{Illustration of the configuration of half-edges via $\Pi$.}
\label{fig:stab+ldpc}
\end{figure}

Given a degree distribution $L_n$, we can construct the degree distribution $\lambda_n$ \textit{from an edge perspective} as follows. For every $i$, let $\lambda_{i,n} = i L_{i,n} / L_n ^{\prime}(1)$. Then the variable degree distribution from an edge perspective is defined via the generating function $\lambda_n(x) = \sum_{i=2}^{\ttl} \lambda_{i,n} x^{i-1}$. Similarly, for a degree distribution $R_n$, we can construct the degree distribution $\rho_n$ \textit{from an edge perspective} as follows. For every $i$, we define $\rho_{i,n} = i R_{i,n} / R_n ^{\prime}(1)$. Then the check degree distribution from an edge perspective is defined via the generating function $\rho_n(x) = \sum_{i=2}^{\ttr} \rho_{i,n} x^{i-1}$. Moreover, we denote by $\LDPC(\lambda_n,\rho_n)$ or $\LDPC(L_n, R_n)$ the ensemble of all Tanner graphs that are constructed according to the configuration model, and we use $\tG$ to indicate either a generic element from the ensemble or a random element endowed with a uniform distribution over the ensemble.

For fixed $\ttl$ and $\ttr$, $\lambda_n \longrightarrow_{\rrD} \lambda$ if and only if $L_n \longrightarrow_{\rrD} L$, where $\lambda_i = i L_i / L^{\prime}(1)$. We refer to $\lambda$ as the variable \textit{design} degree distribution from an edge perspective, and write $\lambda(x) = \sum_{i=2}^{\ttl} \lambda_i x^{i-1}$. Similarly, $\rho_n \longrightarrow_{\rrD} \rho$ if and only if $R_n \longrightarrow_{\rrD} R$, where $\rho_i = i R_i / R^{\prime}(1)$. We refer to $\rho$ as the check \textit{design} degree distribution from an edge perspective. We also refer to $(\lambda,\rho)$ as a design degree distribution \textit{pair} from an edge perspective. Furthermore, we use the subscript ``$_{(N)}$'' to denote a \textit{sequence} of design degree distribution pairs $\{(\lambda_{(N)}, \rho_{(N)})\}$ indexed by $N \in \NN$.


\subsubsection{BMS Channel Family}

We consider transmission over BMS channels, and denote by $p(y | x)$ the channel transition probability characterizing the BMS channel, where $x \in \cX = \{{-1}, 1\}$ and $y \in \cY \subseteq \RR$. An alternative description is through the $L$-\textit{distribution} which is the distribution of the log-likelihood ratio
\begin{equation}
\label{eqn:defn+llr}
L = l(Y) = \ln\big(p(Y|1) / p(Y|{-1})\big),
\end{equation}
conditioned on $X = 1$. The associated $L$-\textit{density} is denoted by $\cc$.

For an $L$-density $\cc$, the \textit{entropy}, \textit{Bhattacharyya}, and \textit{error probability} functionals are denoted by $\entropy(\cc)$, $\bhatta(\cc)$, and $\error(\cc)$, respectively, and they are given by
\begin{equation*}
\entropy(\cc) = \int_{-\infty}^{\infty} \cc(y) \log_2(1 + \ee^{-y}) \diff y; \text{   } \bhatta(\cc) = \int_{-\infty}^{\infty} \cc(y) \ee^{-y/2} \diff y; \text{   } \error(\cc) = \frac{1}{2} \int_{-\infty}^{\infty} \cc(y) \ee^{-(|y/2| + y/2)} \diff y.
\end{equation*}

\begin{example}
\label{exa:three+bms}
Let $\epsilon \in [0,1]$ and let $p \in [0,1/2]$. The $L$-densities for $\bec$ and $\bscp$ are denoted by 
\begin{equation*}
\mathsf{c}_{\epsilon}(y) = \epsilon \Delta_0(y) + \bar{\epsilon} \Delta_{\infty}(y) \text{ and } \mathsf{c}_p(y) = p \Delta_{\ln(p/\bar{p})}(y) + \bar{p} \Delta_{\ln(\bar{p}/p)}(y), 
\end{equation*}
respectively, where $\bar{\epsilon} = 1-\epsilon$ and $\bar{p} = 1-p$, and $\Delta_z$ is the distributional derivative of the Heaviside step function $\sH_z$ defined by $\sH_z(x) = \II_{{[z,\infty]}}(x)$. Furthermore, $\epsilon$ is referred to as the erasure parameter and $p$ is referred to as the flipping parameter. The $L$-density for $\bawgnc$ is
\begin{equation*}
\cc_{\sigma}(y) = \sqrt{\sigma^2/(8\pi)} \ee^{- (y - 2/\sigma^2)^2 \sigma^2/ 8}.
\end{equation*}
This is a Gaussian density with mean $2/\sigma^2$ and variance $4/\sigma^2$.
\end{example}

\begin{example}
\label{exa:three+functional}
The entropy, Bhattacharyya and error probability functionals at $\cc_{\epsilon}$ for the $\bec$ are given by $\entropy(\cc_{\epsilon}) = \epsilon$, $\bhatta(\cc_{\epsilon}) = \epsilon$ and $\error(\cc_{\epsilon}) = \epsilon/2$, respectively. These functionals evaluated at $\cc_p$ for the $\bscp$ are respectively given by $\entropy(\cc_p) = h_2(p)$, $\bhatta(\cc_{\epsilon}) = 2 \sqrt{p(1-p)}$ and $\error(\cc_p) = p$.
\end{example}

Very often in the sequel, instead of considering a single BMS channel, we will consider a \textit{family} of BMS channels. Under the $L$-distribution representation, we denote the family by $\{\cc_{\sigma}\}_{\sigma \in \cI}$, where $\sigma$ is a scalar parameter that characterizes the family and $\cI$ is the index set. The channel family $\{\cc_{\sigma}\}_{\sigma \in \cI}$ is said to be \textit{complete} if $\cI = [\underline{\sigma},\overline{\sigma}]$, $\entropy(\cc_{\underline{\sigma}}) = 0$ and $\entropy(\cc_{\overline{\sigma}}) = 1$, and for every $\tth \in [0,1]$, there exists a $\sigma \in \cI$ such that $\entropy(\cc_{\sigma}) = \tth$.

Frequently we need certain relationships between members in a channel family. A particularly important notion is \textit{degradation} and we typically consider channel families that can be partially ordered by degradation. The notation ``$\leftarrowtriangle$'' is used to indicate the relationship imposed by degradation. Specifically, $\cc \leftarrowtriangle \cc^{\prime}$ indicates that the BMS channel characterized by $\cc$ is \textit{degraded} with respect to the BMS channel characterized by $\cc^{\prime}$. Under the $|D|$-distribution representation~\cite{RU08}, the claim that $\cc$ with $|D|$-distribution $|\dCC|$ is degraded with respect to $\cc^{\prime}$ with $|D|$-distribution $|\dCC^{\prime}|$ is equivalent to
\begin{equation*}
\int_{z}^1 |\dCC|(x) \diff x \geqslant \int_{z}^1 |\dCC^{\prime}|(x) \diff x,
\end{equation*}
for all $z \in [0,1]$. A family of BMS channels $\{\cc_{\sigma}\}_{\underline{\sigma}} ^{\overline{\sigma}}$ is said to be \textit{ordered by degradation} if $\sigma \geqslant \sigma^{\prime}$ implies $\cc_{\sigma} \leftarrowtriangle \cc_{\sigma^{\prime}}$.  


\subsubsection{BP Decoding and MAP Decoding}

The generic transmission of information using an ensemble of LDPC block codes is the following. Given a degree distribution pair $(\lambda_n, \rho_n)$, let $\tG$ be the Tanner graph chosen uniformly at random from the $\LDPC(\lambda_n, \rho_n)$ ensemble. Then a codeword $\uX = \uX(\tG)$ is chosen uniformly at random and is transmitted through a BMS channel with transition $p_{Y|X}$. Denote by $\uY$ the channel output and let $\uL$ be the log-likelihood ratio, \ie, $\uL_i = l(\uY_i)$ for all $i \in [n]$. 

Let $\rP_{\rb} ^{\BP}(\tG, \cc, \ell, \ux)$ be the fraction of bits that are decoded incorrectly after $\ell$ iterations of BP decoding, conditioned on $\uX = \ux$. Then it holds that $\rP_{\rb} ^{\BP}(\tG, \cc, \ell, \ux) = \rP_{\rb} ^{\BP}(\tG, \cc, \ell)$ for all $\ux \in \cC(\tG)$. We further denote by $\EE_{\LDPC(\lambda_n, \rho_n)}[\rP_{\rb} ^{\BP}(\tG, \cc, \ell)]$ the average probability of bit error, where the randomness is averaged over the $\LDPC(\lambda_n, \rho_n)$ ensemble. Then for fixed $\ell \geqslant 1$,
\begin{equation}
\label{eqn:de+Pb}
\lim_{n \to \infty} \EE_{\LDPC(\lambda_n, \rho_n)}[\rP_{\rb} ^{\BP}(\tG, \cc, \ell)] = \cc \varoast L^{\varoast}\big( \rho^{\boxast}(\xx_{\ell-1}) \big),
\end{equation}
where $\rho^{\boxast}(\xx) \coloneqq \sum_{i=2}^{\ttr} \rho_i \xx^{\boxast (i-1)}$ and $L^{\varoast}(\xx) \coloneqq \sum_{i=2}^{\ttl} L_i \xx^{\varoast i}$. Here $\varoast$ is referred to as variable-node convolution and $\boxast$ is referred to as check-node convolution. The sequence of $L$-densities $\{\xx_{\ell}\}_{\ell \geqslant 0}$ in (\ref{eqn:de+Pb}) is defined by the recursion 
\begin{equation}
\label{eqn:density+evolution+def}
\xx_{\ell} = \TT(\xx_{\ell-1}) \coloneqq \cc \varoast \lambda^{\varoast} \big( \rho^{\boxast} (\xx_{\ell-1}) \big),
\end{equation} 
where $\xx_0 = \cc$ and $\lambda^{\varoast}(\xx) \coloneqq \sum_{i=2}^{\ttl} \lambda_i \xx^{\varoast (i-1)}$. We refer to (\ref{eqn:density+evolution+def}) as the \textit{density evolution} (DE) recursion associated with $\cc$ and $(\lambda,\rho)$, and we call $\TT$ the corresponding density evolution operator.

For an ordered and complete family of BMS channels $\{\cc_{\tth}\}$ indexed by entropy $\tth$ and a design degree distribution pair $(\lambda,\rho)$, let $\{\xx_{\ell}(\tth)\}$ be the sequence of $L$-densities corresponding to the density evolution operator associated with $\cc_{\tth}$ and $(\lambda, \rho)$. Then the \textit{BP threshold} of $\{\cc_{\tth}\}$ and $(\lambda,\rho)$ is defined as
\begin{equation*}
\tth^{\BP}(\{\cc_{\tth}\}, \lambda, \rho) = \sup\Big\{ \tth \,\Big|\, \lim_{\ell \to \infty} \error\big(\xx_{\ell}(\cc_{\tth})\big) = 0 \Big\}.
\end{equation*}

Finally we consider MAP decoding under the same setting. Let $\rP_{\rB} ^{\ML}(\cc_{\tth}, \lambda_n, \rho_n) = \EE_{\LDPC(\lambda_n, \rho_n)}[\rP^{\ML} _{\rB} (\tG, \tth)]$ be the average probability of block error under blockwise MAP decoding. The \textit{blockwise MAP threshold} of $\{\cc_{\tth}\}$ and $(\lambda, \rho)$ is defined as
\begin{equation}
\label{eqn:defn+blockmap}
\tth^{\ML}(\{\cc_{\tth}\}, \lambda, \rho) = \sup\Big\{ \tth \,\Big|\, \lim_{n \to \infty} \rP_{\rB} ^{\ML}(\cc_{\tth}, \lambda_n, \rho_n) = 0 \Big\}.
\end{equation}
Further let $\rP_{\rb} ^{\MAP}(\cc_{\tth}, \lambda_n, \rho_n) = \EE_{\LDPC(\lambda_n, \rho_n)}[\rP^{\MAP} _{\rb} (\tG, \tth)]$ be the average probability of bit error under bitwise MAP decoding. The \textit{bitwise MAP threshold} of $\{\cc_{\tth}\}$ and $(\lambda, \rho)$ is defined as
\begin{equation}
\label{eqn:defn+bitmap}
\tth^{\MAP}(\{\cc_{\tth}\}, \lambda, \rho) = \sup\Big\{ \tth \,\Big|\, \lim_{n \to \infty} \rP_{\rb} ^{\MAP}(\cc_{\tth}, \lambda_n, \rho_n) = 0 \Big\}.
\end{equation}
This concludes Section~\ref{subsec:prelim} on the preliminaries that we are going to use subsequently.

\section{Capacity-Achieving Sequences}
\label{sec:ca}
Capacity-achieving sequences do exist at least for transmission over the BEC, and up to now there are many explicit constructions in the literature and are provably capacity-achieving for transmission over the BEC. Two well-known such sequences, among others, are the heavy-tail Poisson (or Tornado) sequence~\cite{LMSSS97}, and the right-regular (or check-concentrated) sequence~\cite{Sho99}. Both sequences are provably capacity-achieving for the BEC with vanishing probability of bit error under BP decoding.


\subsection{Two Prototype Examples}
\label{subsec:ca+exa}

\begin{example}
\label{exa:ca+poi}
For $\epsilon \in (0,1)$, consider transmission on $\bec$. The heavy-tail Poisson sequence $\{(\lambda_{(N)},\rho_{(N)})\}$ in \cite{LMSS01a} satisfies for every integer $N \geqslant 2$ that
\begin{equation}
\label{eqn:poi+lambda+rho+def}
\lambda_{(N)}(x) = \frac{1}{H_{N-1}} \sum_{i=1}^{N-1} \frac{x^i}{i} \text{ and } \rho_{(N)}(x) = \ee^{(x-1)H_{N-1}/\epsilon},
\end{equation}
where $H_N$ denotes the $N$-th harmonic number.
\end{example}

We define the function $f_{(N)}: [0,1] \to [0,1]$ for every $N$ by setting $f_{(N)}(x) = \epsilon \lambda_{(N)}( 1 - \rho_{(N)}(1-x) )$. Figure~\ref{fig:univ+poi+f+f1} shows the characteristic behaviors of $\{f_{(N)}\}$ and $\{f_{(N)} ^{\prime}\}$ for the case $\epsilon=0.5$. 

\begin{figure}[H]
\centering
\subfigure{
\includegraphics{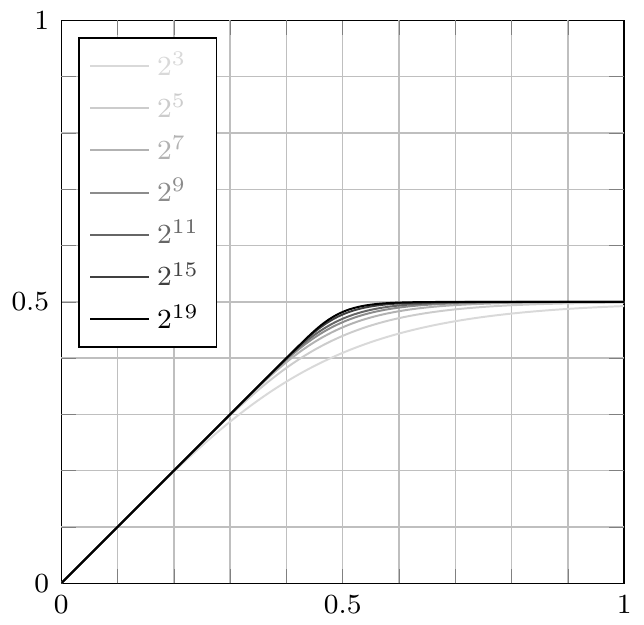}
}
\subfigure{
\includegraphics{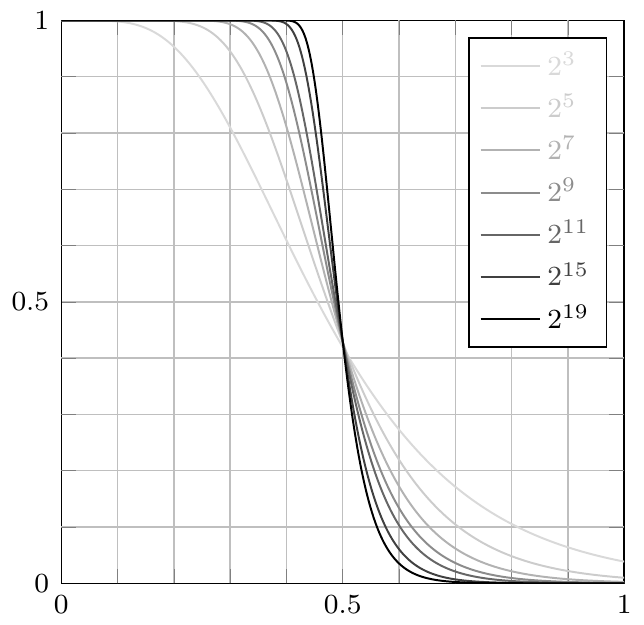}
}
\caption{$\{f_{(N)}\}$ and $\{f_{(N)} ^{\prime}\}$ for the heavy-tail Poisson sequence.}
\label{fig:univ+poi+f+f1}
\end{figure}

As we can observe from the left subfigure of Figure~\ref{fig:univ+poi+f+f1}, the sequence of functions $\{f_{(N)}\}$ approaches the ``limiting'' function $x \mapsto x \II_{\in {[0,\epsilon)}}(x) + \epsilon \II_{\in {[\epsilon,1]}}(x)$. As a consequence, the corresponding sequence of first-order derivatives changes ``quickly'' from $1$ to $0$ at the point $x = \epsilon$, as shown in the right subfigure of Figure~\ref{fig:univ+poi+f+f1}. The following Lemma~\ref{lem:poi+f1+limit} reveals that the sequence of first-order derivatives at $x = \epsilon$ converges to a limit, and the proof of this lemma is given in Appendix~\ref{appdix:inter+ca}.

\begin{lemma}
\label{lem:poi+f1+limit}
Given $\epsilon \in (0,1)$, let the heavy-tail Poisson sequence $\{(\lambda_{(N)}, \rho_{(N)})\}$ be specified according to (\ref{eqn:poi+lambda+rho+def}). Then the sequence of first-order derivatives $\{f_{(N)} ^{\prime}\}$ on $[0,1]$ satisfies
\begin{equation}
\label{eqn:poi+f1+limit}
\lim_{N \to \infty} f_{(N)} ^{\prime} (x) = \II_{[0,\epsilon)}(x) + (1 - \ee^{-\ee^{-\gamma}}) \II_{\{\epsilon\}}(x),
\end{equation}
where $\gamma$ denotes the Euler-Mascheroni constant.
\end{lemma}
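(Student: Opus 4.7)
The plan is to exploit the geometric-series structure of $\lambda_{(N)}$ to derive a clean closed-form expression for $f_{(N)}'$, and then to perform pointwise asymptotic analysis in three regimes separated by $x = \epsilon$.

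First, I would apply the chain rule to obtain
$$f_{(N)}'(x) = \epsilon\,\lambda_{(N)}'\bigl(1 - \rho_{(N)}(1-x)\bigr)\,\rho_{(N)}'(1-x).$$
The telescoping identity $\sum_{i=1}^{N-1} y^{i-1} = (1 - y^{N-1})/(1-y)$ gives
$$\lambda_{(N)}'(1 - y) \;=\; \frac{1}{H_{N-1}}\,\frac{1 - (1-y)^{N-1}}{y},$$
while direct differentiation yields $\rho_{(N)}'(1-x) = (H_{N-1}/\epsilon)\,\ee^{-xH_{N-1}/\epsilon}$. Abbreviating $y_N(x) \coloneqq \rho_{(N)}(1-x) = \ee^{-xH_{N-1}/\epsilon}$, the factor $y_N$ in the denominator of $\lambda_{(N)}'(1-y_N)$ cancels the matching $y_N$ in $\rho_{(N)}'(1-x)$, the constants $\epsilon$ and $H_{N-1}$ telescope, and what remains is the clean identity
$$f_{(N)}'(x) \;=\; 1 - \bigl(1 - y_N(x)\bigr)^{N-1}.$$

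Once this expression is in hand, the rest is asymptotics based on $H_{N-1} = \ln(N-1) + \gamma + o(1)$, from which $y_N(x) = (N-1)^{-x/\epsilon}\,\ee^{-x\gamma/\epsilon}(1+o(1))$ on $(0,1]$. Since $y_N(x) \to 0$ pointwise there, the standard expansion $\ln(1 - y_N) = -y_N(1 + o(1))$ lets me write $(1 - y_N)^{N-1} = \exp\!\bigl(-(N-1)y_N(x)(1+o(1))\bigr)$, so the whole analysis reduces to tracking $(N-1)\,y_N(x) \sim (N-1)^{1 - x/\epsilon}\,\ee^{-x\gamma/\epsilon}$. I would then split into three cases. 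For $x \in (0,\epsilon)$ the exponent $1 - x/\epsilon$ is strictly positive, so $(N-1)y_N(x) \to \infty$, forcing $(1 - y_N)^{N-1} \to 0$ and $f_{(N)}'(x) \to 1$; the endpoint $x = 0$ gives $y_N \equiv 1$ so that $f_{(N)}'(0) = 1$ identically. For $x = \epsilon$ the quantity $(N-1)y_N(\epsilon) = (N-1)\ee^{-H_{N-1}}$ tends to $\ee^{-\gamma}$, hence $(1 - y_N)^{N-1} \to \ee^{-\ee^{-\gamma}}$ and $f_{(N)}'(\epsilon) \to 1 - \ee^{-\ee^{-\gamma}}$. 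For $x \in (\epsilon,1]$ the exponent is negative, so $(N-1)y_N(x) \to 0$, $(1 - y_N)^{N-1} \to 1$, and $f_{(N)}'(x) \to 0$. Collecting these three cases yields the claimed limit.

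The main delicacy is the boundary point $x = \epsilon$: producing the \emph{exact} constant $1 - \ee^{-\ee^{-\gamma}}$ requires using the next-order term $\gamma$ in the harmonic expansion and a careful bookkeeping of the $o(1)$ remainders inside $\ln(1 - y_N)$, since a sloppy treatment would wash out the double exponential and leave either $0$ or $1 - \ee^{-1}$. By contrast, the other two regimes are robust and follow from any expansion of the form $H_{N-1} = \ln(N-1) + O(1)$.
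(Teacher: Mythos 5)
Your proof is correct and follows essentially the same route as the paper: both derive the closed form $f_{(N)}'(x) = 1 - (1 - \ee^{-x H_{N-1}/\epsilon})^{N-1}$ (you via the chain rule and the geometric-series identity for $\lambda_{(N)}'$, the paper via term-by-term differentiation and telescoping) and then analyze the three regimes $x < \epsilon$, $x = \epsilon$, $x > \epsilon$ using $H_{N-1} = \ln(N-1) + \gamma + O(1/N)$. The paper phrases the bounds with explicit two-sided inequalities (also yielding uniform convergence on closed subintervals away from $\epsilon$, which it reuses in Lemma~\ref{lem:poi+f2}), whereas you use little-$o$ bookkeeping, but the pointwise statement of the lemma is equally well served by either.
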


Let us notice that, since $\gamma \approx 0.57722$, we can get $1 - \ee^{-\ee^{-\gamma}} \approx 0.42962$. This is also illustrated in the right subfigure of Figure~\ref{fig:univ+poi+f+f1}. As we shall see in later Sections~\ref{sec:univ+bp} and \ref{sec:stab+cond+map}, a critical quantity for our purpose is the first-order derivative evaluated at $x = 0$. In the case of the heavy-tail Poisson sequence, we can obtain for every $N \geqslant 2$ that $\lambda_{(N)} ^{\prime} (0) = 1 / H_{N-1}$ and $\rho_{(N)} ^{\prime}(1) = H_{N-1}/\epsilon$ so that $f_{(N)} ^{\prime}(0) = 1$.

\begin{example}
\label{exa:right+regular}
Consider transmission over $\bec$. The right-regular sequence $\{(\lambda_{(N)},\rho_{(N)})\}$ in \cite{Sho99} satisfies for every integer $N \geqslant 2$ that
\begin{equation}
\label{eqn:rr+lambda+rho+def}
\lambda_{(N)}(x) = \sum_{i=1}^{N-1} \frac{\binom{\alpha}{i}({-1})^{i-1}}{1 - \frac{N}{\alpha}\binom{\alpha}{N}({-1})^{N-1}} x^i \text{ and } \rho_{(N)}(x) = x^{1 / \alpha},
\end{equation}
where with $\bar{\epsilon} = 1 - \epsilon$,
\begin{equation}
\label{eqn:ca+rr+alpha+def}
\alpha = \alpha(\epsilon, N) \coloneqq \frac{\ln(1/\bar{\epsilon})}{\ln(N)} \text{ and } \binom{\alpha}{i} \coloneqq \frac{\prod_{j=0}^{i-1} (\alpha - j)}{i!},
\end{equation}
which is defined for every integer $i \geqslant 1$.
\end{example}

If we specify the function $f_{(N)} = \epsilon \lambda_{(N)}( 1 - \rho_{(N)}(1-x) )$ according to the right-regular sequence in (\ref{eqn:rr+lambda+rho+def}), then one can observe a similar behavior of $\{f_{(N)}\}$ as the heavy-tail Poisson sequence, \ie, the sequence of functions $\{f_{(N)}\}$ approaches $x \mapsto x \II_{\in {[0,\epsilon)}}(x) + \epsilon \II_{\in {[\epsilon,1]}}(x)$. Furthermore, we have the following lemma on the sequence of first-order derivatives at $x=0$, the proof of which is given in Appendix~\ref{appdix:inter+ca}.

\begin{lemma}
\label{lem:ca+rr+mu}
The right-regular sequence specified according to (\ref{eqn:rr+lambda+rho+def}) satisfies
\begin{equation}
\label{eqn:rr+flatness}
\lim_{N \to \infty} f_{(N)} ^{\prime}(0) = \lim_{N \to \infty} \epsilon \lambda_{(N)} ^{\prime}(0) \rho_{(N)} ^{\prime}(1) = 1.
\end{equation}
\end{lemma}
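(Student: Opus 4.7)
The first equality is just the chain rule applied to $f_{(N)}(x)=\epsilon\lambda_{(N)}(1-\rho_{(N)}(1-x))$ at $x=0$, using that $\rho_{(N)}(1)=1$. So everything reduces to computing
\[
\epsilon\,\lambda_{(N)}'(0)\,\rho_{(N)}'(1)
=\frac{\epsilon\cdot\alpha\cdot\tfrac{1}{\alpha}}{1-\tfrac{N}{\alpha}\binom{\alpha}{N}(-1)^{N-1}}
=\frac{\epsilon}{1-\tfrac{N}{\alpha}\binom{\alpha}{N}(-1)^{N-1}},
\]
where I read $\lambda_{(N)}'(0)$ off as the coefficient of $x$ in~(\ref{eqn:rr+lambda+rho+def}) and use $\rho_{(N)}'(x)=\tfrac{1}{\alpha}x^{1/\alpha-1}$. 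Hence it suffices to show that $\tfrac{N}{\alpha}\binom{\alpha}{N}(-1)^{N-1}\to\bar{\epsilon}$ as $N\to\infty$, since then the right-hand side converges to $\epsilon/\epsilon=1$.

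The next step is to rewrite the product in a form amenable to asymptotic analysis. Expanding $\binom{\alpha}{N}=\tfrac{\alpha(\alpha-1)\cdots(\alpha-N+1)}{N!}$ and pulling a sign out of each of the $N-1$ factors $(\alpha-j)$ with $j\geqslant 1$, one gets
\[
\frac{N}{\alpha}\binom{\alpha}{N}(-1)^{N-1}
=\prod_{j=1}^{N-1}\left(1-\frac{\alpha}{j}\right),
\]
which is a clean telescoped form. Taking logarithms, the target becomes
\[
\lim_{N\to\infty}\sum_{j=1}^{N-1}\ln\!\left(1-\frac{\alpha}{j}\right)=\ln\bar{\epsilon}.
\]
I would split $\ln(1-\alpha/j)=-\alpha/j+r(\alpha/j)$ with $|r(u)|\leqslant Cu^{2}$ uniformly for $u$ in a small neighborhood of $0$. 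The leading piece is $-\alpha\,H_{N-1}$; plugging in $\alpha=\ln(1/\bar\epsilon)/\ln N$ and using the standard $H_{N-1}=\ln N+\gamma+O(1/N)$, this contributes $-\ln(1/\bar{\epsilon})+O(1/\ln N)=\ln\bar{\epsilon}+o(1)$. The remainder is bounded in absolute value by $C\alpha^{2}\sum_{j\geqslant 1}1/j^{2}=O(\alpha^{2})=O(1/\ln^{2}N)$, which vanishes. Exponentiating yields the desired limit $\bar\epsilon$.

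A minor technical point is that the expansion $\ln(1-u)=-u+O(u^{2})$ requires $\alpha/j$ to stay bounded away from $1$; for this I would note that for every fixed $\epsilon\in(0,1)$ one has $\alpha=\alpha(\epsilon,N)\to 0$, so $\alpha/j\leqslant\alpha$ is eventually arbitrarily small uniformly in $j\geqslant 1$. With this in hand all the estimates above are legitimate for $N$ large enough.

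\textbf{Main obstacle.} Nothing is deep here; the only thing to be slightly careful about is bookkeeping the two asymptotic regimes that appear simultaneously, namely $\alpha\to 0$ and $N\to\infty$, so that the leading-order cancellation $\alpha\cdot H_{N-1}=\alpha\ln N+\alpha\gamma+o(1)\to\ln(1/\bar\epsilon)$ is captured cleanly. Once that is done, the remainder bound $O(\alpha^{2})$ disposes of the higher-order terms immediately and the claim follows.
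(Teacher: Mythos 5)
Your proof is correct, but the route differs from the paper's. The paper's proof quotes explicit two-sided bounds from Shokrollahi~\cite{Sho99}, namely $c_0\alpha/N^{\alpha+1}\leqslant\binom{\alpha}{N}(-1)^{N-1}\leqslant c_1\alpha/N^{\alpha+1}$ with explicit constants $c_0(\alpha,N),c_1(\alpha,N)\to 1$, and then exploits the algebraic identity $N^{\alpha}=1/\bar\epsilon$ so that the squeeze $1-c_1\bar\epsilon\leqslant 1/(\lambda'_{(N)}(0)\rho'_{(N)}(1))\leqslant 1-c_0\bar\epsilon$ gives the limit $\epsilon$ in the denominator immediately. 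You instead derive the asymptotics from scratch: you telescope $(N/\alpha)\binom{\alpha}{N}(-1)^{N-1}=\prod_{j=1}^{N-1}(1-\alpha/j)$, take logarithms, isolate the harmonic sum $-\alpha H_{N-1}\to\ln\bar\epsilon$, and control the quadratic remainder by $O(\alpha^2)\to 0$. The two approaches are essentially the same calculation viewed at different levels of abstraction --- the Euler--Mascheroni constant lurking in the paper's $c_0,c_1$ is a tell that Shokrollahi's bounds come from the same product-to-log-to-harmonic-sum estimate you reproduce --- but yours is self-contained and does not require citing an external quantitative bound, which is a modest but genuine advantage for a reader. One minor remark: you should state explicitly that for $j\geqslant 1$ the worst case of $\alpha/j$ is $j=1$, so the uniform smallness needed for $\ln(1-u)=-u+O(u^2)$ is governed solely by $\alpha\to 0$; you do mention this but it is worth making the quantifier order unambiguous. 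Otherwise the argument is complete.
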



\subsection{Fraction of Degree-Two Variable Nodes}
\label{subsec:ca+deg2}

An important quantity showing up in (\ref{eqn:rr+flatness}) is $\lambda_2 ^{(N)}$, the fraction of degree-two variable nodes from an edge perspective, which is also equal to $\lambda^{\prime} _{(N)}(0)$. 

\begin{table}[H]
\centering
\begin{tabular}{ l c c }
$N$                  & right-regular $\lambda^{\prime} _{(N)} (0)$ & Poisson $\lambda^{\prime} _{(N)} (0)$ \\
\midrule[0.5pt]
$2^7$                & $0.2609$                                    & $0.1843$ \\
$2^9$                & $0.2073$                                    & $0.1467$ \\
$2^{11}$             & $0.1719$                                    & $0.1219$ \\
$2^{15}$             & $0.1280$                                    & $0.0911$ \\
$2^{19}$             & $0.1019$                                    & $0.0727$
\end{tabular}
\caption{The fraction $\lambda_{(N)} ^{\prime}(0)$ for the right-regular and heavy-tail Poisson sequence.}
\label{tab:inter+rr+poi+lambda2}
\end{table}

Table~\ref{tab:inter+rr+poi+lambda2} records this quantity for several design pairs $(\lambda_{(N)}, \rho_{(N)})$ from the right-regular and heavy-tail Poisson sequence, where $\epsilon=0.5$. As we can see from Table~\ref{tab:inter+rr+poi+lambda2}, the fractions of degree-two variable nodes from an edge perspective are decreasing for both sequences. In fact, it is not hard to show that the limit in both cases is $0$ as $N$ tends to $\infty$. On the other hand, the fraction of degree-two variable nodes from a node perspective remains large, as the next lemma illustrates.

\begin{lemma}
\label{lem:ca+frac+deg2}
Consider the heavy-tail Poisson sequence in (\ref{eqn:poi+lambda+rho+def}) and denote by $\{L_2 ^{(N)}\}$ the associated fractions of degree-two variable nodes from a node perspective. Then for all $\epsilon \in (0,1)$, it holds that
\begin{equation}
\label{eqn:ca+poi+l2}
\lim_{N \to \infty} L_2 ^{(N)} = \frac{1}{2}.
\end{equation}
Consider the right-regular sequence in (\ref{eqn:rr+lambda+rho+def}) and denote by $\{L_2 ^{(N)}\}$ the fractions of degree-two variable nodes from a node perspective. Then
\begin{equation}
\label{eqn:ca+rr+l2}
\frac{1}{\ln(9 \ee^{4/3} / 4)} \leqslant \liminf_{N \to \infty} L_2 ^{(N)} \leqslant \limsup_{N \to \infty} L_2 ^{(N)} \leqslant \frac{1}{\ln(9 \ee / 4)},
\end{equation}
which holds for all $\epsilon \in (0,1)$.
\end{lemma}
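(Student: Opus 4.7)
Both claims reduce to computing $\lambda_2^{(N)}$ and $\int_0^1 \lambda_{(N)}(x)\,dx$ and combining them via the edge-to-node identity
\[
L_2 \;=\; \frac{\lambda_2}{2\int_0^1 \lambda(x)\,dx},
\]
which follows from the relation $L_i = \lambda_i L'(1)/i$ together with $L'(1) = 1/\int_0^1 \lambda(x)\,dx$ (itself a consequence of $L(1) = 1$). In both sequences the polynomial $\lambda_{(N)}$ carries a scalar normalization in its numerator that cancels in this ratio, leaving a clean expression amenable to limit analysis.

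\textbf{Poisson case.} I would read off $\lambda_2^{(N)} = 1/H_{N-1}$ directly from (\ref{eqn:poi+lambda+rho+def}), and evaluate
\[
\int_0^1 \lambda_{(N)}(x)\,dx \;=\; \frac{1}{H_{N-1}}\sum_{i=1}^{N-1}\frac{1}{i(i+1)} \;=\; \frac{1 - 1/N}{H_{N-1}}
\]
by term-wise integration and the telescoping identity $\tfrac{1}{i(i+1)} = \tfrac{1}{i} - \tfrac{1}{i+1}$. The factor $H_{N-1}$ cancels in the ratio and one obtains the exact formula $L_2^{(N)} = N/(2(N-1))$, from which (\ref{eqn:ca+poi+l2}) is immediate.

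\textbf{Right-regular case.} The same procedure applied to (\ref{eqn:rr+lambda+rho+def}) yields $\lambda_2^{(N)} = \alpha/D$ and $\int_0^1 \lambda_{(N)}(x)\,dx = T/D$, where $T := \sum_{i=1}^{N-1}\binom{\alpha}{i}(-1)^{i-1}/(i+1)$, so that $L_2^{(N)} = \alpha/(2T)$ and the common factor $D$ is again harmless. The main obstacle here is that $T$ is a truncated alternating series without an elementary closed form; my plan is to compute the \emph{untruncated} sum explicitly by integrating the binomial expansion of $1-(1-x)^\alpha$ on $[0,1]$,
\[
\sum_{i=1}^{\infty}\binom{\alpha}{i}(-1)^{i-1}\frac{1}{i+1} \;=\; \int_0^1 \bigl(1-(1-x)^\alpha\bigr)\,dx \;=\; \frac{\alpha}{\alpha+1},
\]
and then to control the truncation tail via the factorization $\binom{\alpha}{i}(-1)^{i-1} = (\alpha/i)\prod_{j=1}^{i-1}(1-\alpha/j)$, which (since every factor lies in $(0,1]$ when $\alpha \in (0,1)$) gives $0 \leq \sum_{i \geq N}\binom{\alpha}{i}(-1)^{i-1}/(i+1) \leq \alpha \sum_{i \geq N}1/(i(i+1)) = \alpha/N$ by a second telescoping. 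Substituting the resulting sandwich $\alpha/(\alpha+1) - \alpha/N \leq T \leq \alpha/(\alpha+1)$ into $L_2^{(N)} = \alpha/(2T)$ pinches $L_2^{(N)}$ between $(\alpha+1)/2$ and $(\alpha+1)N/\bigl(2(N-\alpha-1)\bigr)$; since $\alpha = \ln(1/\bar\epsilon)/\ln N \to 0$ for every fixed $\epsilon \in (0,1)$, both bounds tend to $1/2$. Thus in fact $\lim_{N \to \infty} L_2^{(N)} = 1/2$, and (\ref{eqn:ca+rr+l2}) follows from the numerical observation $1/\ln(9\ee^{4/3}/4) < 1/2 < 1/\ln(9\ee/4)$, so the stated interval is satisfied with considerable slack.
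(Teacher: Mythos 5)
Your Poisson argument is identical to the paper's: the same edge-to-node identity $L_2^{(N)} = (\lambda_2^{(N)}/2)\big/\sum_{i\geqslant 1}\lambda_{i+1}^{(N)}/(i+1)$, the same telescoping $\sum_{i=1}^{N-1}\tfrac{1}{i(i+1)} = 1 - 1/N$, and the exact formula $L_2^{(N)} = N/(2(N-1))\to 1/2$.

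For the right-regular case you take a genuinely different and cleaner route, and in fact prove something \emph{stronger} than the statement. You cancel the normalization $\lambda_{\alpha}^{(N)} = 1 - (N/\alpha)\binom{\alpha}{N}(-1)^{N-1}$ to arrive at $L_2^{(N)} = \alpha/(2T)$ with $T = \sum_{i=1}^{N-1}\binom{\alpha}{i}(-1)^{i-1}/(i+1)$ (the paper's $S_\alpha^{(N)}$), then compute the untruncated sum in closed form via
\[
\sum_{i=1}^{\infty}\binom{\alpha}{i}(-1)^{i-1}\frac{1}{i+1} \;=\; \int_0^1\bigl(1-(1-x)^\alpha\bigr)\,dx \;=\; \frac{\alpha}{\alpha+1},
\]
(the interchange is legitimate since every term is nonnegative for $\alpha\in(0,1)$), and control the truncation tail by $\binom{\alpha}{i}(-1)^{i-1} = (\alpha/i)\prod_{j=1}^{i-1}(1-\alpha/j)\leqslant \alpha/i$, giving a tail $\leqslant \alpha/N$ by a second telescope. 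This pinches $L_2^{(N)}$ between $(\alpha+1)/2$ and $(\alpha+1)N/(2(N-\alpha-1))$, and since $\alpha = \ln(1/\bar\epsilon)/\ln N \to 0$ you obtain the exact limit $\lim_{N\to\infty}L_2^{(N)} = 1/2$, from which (\ref{eqn:ca+rr+l2}) follows trivially because $1/\ln(9\ee^{4/3}/4) \approx 0.4664 < 1/2 < 0.5522 \approx 1/\ln(9\ee/4)$. The paper, by contrast, replaces $\binom{\alpha}{i}(-1)^{i-1}$ by the two-sided estimate $c_0\alpha/i^{\alpha+1}\leqslant \binom{\alpha}{i}(-1)^{i-1}\leqslant c_1\alpha/i^{\alpha+1}$, expands $1/(i+1)$ as a geometric series, and compares with integrals; the integral comparisons are not tight, which is why the paper only lands in the interval $[0.4664, 0.5522]$ rather than at $1/2$. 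Your argument is both shorter (no uniform-in-$\alpha$ constants $c^\pm$, no series-rearrangement and integral-comparison machinery) and sharper, so it improves the lemma. The only thing worth spelling out is the implicit assumption $\alpha < 1$, i.e.\ $N > 1/\bar\epsilon$, needed so that every factor $1-\alpha/j$ is positive; since you're taking $N\to\infty$ with $\epsilon$ fixed this is harmless, but it should be stated.
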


Notice that $1/\ln(9 \ee^{4/3} / 4) \approx 0.4664$ and $1/\ln(9 \ee / 4) \approx 0.5522$. Thus both sequences contain asymptotically a large fraction of degree-two variable nodes and these bounds hold uniformly for all $\epsilon \in (0,1)$; see Table~\ref{tab:inter+rr+poi+l2} for an illustration of the fractions of degree-two variable nodes from a node perspective for the right-regular sequence. The proof of Lemma~\ref{lem:ca+frac+deg2} is given in Appendix~\ref{appdix:ca+deg2}.

\begin{table}[H]
\centering
\begin{tabular}{ l c c c }
$N$                  & right-regular $L_2 ^{(N)}(0.3)$ & right-regular $L_2 ^{(N)}(0.5)$ & right-regular $L_2 ^{(N)}(0.7)$ \\
\midrule[0.5pt]
$2^7$                & $0.5396$                        & $0.5735$                        & $0.6253$ \\
$2^9$                & $0.5293$                        & $0.5561$                        & $0.5968$ \\
$2^{11}$             & $0.5236$                        & $0.5456$                        & $0.5790$ \\
$2^{15}$             & $0.5172$                        & $0.5333$                        & $0.5579$ \\
$2^{19}$             & $0.5135$                        & $0.5263$                        & $0.5457$
\end{tabular}
\caption{The fractions $L_2 ^{(N)}(\epsilon)$ for the right-regular sequence with $\epsilon \in \{0.3, 0.5, 0.7\}$.}
\label{tab:inter+rr+poi+l2}
\end{table}

\section{Universality under BP Decoding}
\label{sec:univ+bp}
Designing capacity-achieving codes under low-complexity encoding and decoding algorithms for different types of channels is of great practical importance, since in reality we are often faced with different physical constraints where different channel models are needed. Furthermore, from the viewpoint of complexity and performance tradeoff, we need degrees of freedom on the structure of the codes and it is preferable to have a design principle in order to systematically construct such codes. Since designing capacity-achieving sequences for the BEC under BP decoding is relatively tractable due the low-dimensionality of the density evolution recursion for the BEC, perhaps the simplest nontrivial solution of constructing universally capacity-achieving sequences under BP decoding is to directly employ these existing ones systematically designed for the BEC under BP decoding. This approach also represents the starting point of the investigation of this section. 

However, communication channels often exhibit extremal properties in one way or another. An important case for our purpose is the family of BMS channels having the same capacity. Consider, \eg, transmission over the $\bec$ or the $\bscp$ such that $\epsilon = \entropy(\cc_{\epsilon}) = \entropy(\cc_p) = h_2(p)$. We claim that $\bhatta(\cc_{\epsilon}) < \bhatta(\cc_p)$ and this can be seen as follows. Let $h_2 ^{-1} : [0,1/2] \to [0,1]$ denote the inverse of the binary entropy function, and define $b: [0,1] \to [0,1]$ by setting $b(x) = 2 \sqrt{x (1-x)}$. With $\epsilon = h_2(p)$, we can write $\bhatta(\cc_p) = b(h_2 ^{-1}(\epsilon))$. The function $x \mapsto b(h_2 ^{-1}(x))$ is strictly $\cap$-convex on $(0,1)$, and this can be verified by showing the function $x \mapsto h_2(b^{-1}(x))$ is strictly $\cup$-convex on $(0,1)$, where $b^{-1}: [0,1] \to [0,1/2]$ is the inverse of $b$.

\begin{lemma}
\label{lem:fix+H+max+B}
Consider the BEC with $L$-density $\cc_{\epsilon}$ and the BSC with $L$-density $\cc_p$. Let $\epsilon \in (0,1)$ and $p \in (0,1/2)$ be such that $\entropy(\cc_{\epsilon}) = \entropy(\cc_p)$, \ie, $\epsilon = h_2(p)$. Then it holds that $\bhatta(\cc_{\epsilon}) < \bhatta(\cc_p)$.
\end{lemma}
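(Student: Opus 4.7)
The plan is to reduce the inequality, via the setup given in the paragraph preceding the lemma, to a single-variable strict-concavity statement, and then verify the latter by elementary calculus.

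First, by Example~\ref{exa:three+functional}, $\bhatta(\cc_{\epsilon}) = \epsilon$ and $\bhatta(\cc_{p}) = b(p) = 2\sqrt{p\bar{p}}$. Under the hypothesis $\epsilon = h_2(p)$ these become $\epsilon$ and $\phi(\epsilon)$ respectively, where $\phi \coloneqq b \circ h_2^{-1} : [0,1] \to [0,1]$; so the claim is equivalent to $\phi(\epsilon) > \epsilon$ for every $\epsilon \in (0,1)$. The map $\phi$ is continuous and strictly increasing with $\phi(0) = 0$ and $\phi(1) = 1$. Consequently, once one knows $\phi$ is strictly $\cap$-convex on $(0,1)$, the strict Jensen inequality applied to the convex combination $\epsilon = \epsilon \cdot 1 + (1-\epsilon) \cdot 0$ immediately yields $\phi(\epsilon) > \epsilon \, \phi(1) + (1-\epsilon) \phi(0) = \epsilon$, which is exactly what we want.

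It remains to establish strict $\cap$-convexity of $\phi$, which is the fact asserted before the lemma. The plan is to use the parametric representation $p \mapsto (h_2(p), b(p))$ of the graph of $\phi$ for $p \in [0, 1/2]$. Since $\dot h_2(p) = \log_2(\bar p/p) > 0$ on $(0, 1/2)$, the standard parametric formula shows that $\phi''(x)$ has the same sign as $\ddot b \, \dot h_2 - \dot b \, \ddot h_2$, where dots denote $d/dp$. Substituting the explicit closed forms for these four quantities and simplifying, the required negativity reduces to the one-variable inequality
\begin{equation*}
\ln(\bar p / p) > 2(1 - 2p), \qquad p \in (0, 1/2).
\end{equation*}

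This elementary inequality is routine: setting $k(p) \coloneqq \ln(\bar p/p) - 2(1-2p)$, one checks $k(1/2) = k'(1/2) = 0$ while $k''(p) = (1-2p)/(p\bar p)^2 > 0$ on $(0,1/2)$; hence $k'$ increases strictly to $0$, so $k' < 0$ and $k > 0$ on $(0,1/2)$. I do not foresee any genuine obstacle beyond identifying this chain of reductions; an equally workable alternative, following the hint in the paragraph preceding the lemma, is to prove strict $\cup$-convexity of the inverse $\psi \coloneqq h_2 \circ b^{-1}$ by direct computation of $\psi''$, which bottoms out at the same elementary inequality.
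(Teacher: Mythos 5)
Your proposal is correct and follows the same strategy the paper sketches in the paragraph preceding the lemma: reduce to strict $\cap$-convexity of $\phi = b \circ h_2^{-1}$ and conclude $\phi(\epsilon) > \epsilon$ by comparing with the chord from $(0,0)$ to $(1,1)$. The only cosmetic difference is that the paper suggests verifying concavity via strict $\cup$-convexity of the inverse $h_2 \circ b^{-1}$, whereas you compute $\phi''$ directly from the parametric representation $p \mapsto (h_2(p), b(p))$ and correctly reduce it to the elementary inequality $\ln(\bar p/p) > 2(1-2p)$ on $(0,1/2)$; both routes bottom out at the same calculation, and your version supplies the details the paper leaves implicit.
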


We recall from Section~\ref{subsec:ca+exa} that one common characteristic of the heavy-tail Poisson and the right-regular sequence designed for the $\bec$ is that they both satisfy the condition in (\ref{eqn:rr+flatness}). Consider now transmission over the BSC with capacity $1-\epsilon$. For the BEC having the same capacity, we have $\entropy(\cc_{\epsilon}) = \bhatta(\cc_{\epsilon}) = \epsilon$. By Lemma~\ref{lem:fix+H+max+B} we know that $\bhatta(\cc_p) > \epsilon$. Consequently for any sequence of design degree distribution pairs $\{(\lambda_{(N)},\rho_{(N)})\}_{N \in \NN}$ and for the $\bscp$, the condition in (\ref{eqn:rr+flatness}) implies that 
\begin{equation}
\label{eqn:bsc+fail}
\lim_{N \to \infty} \bhatta(\cc_p) \lambda_{(N)}^{\prime}(0) \rho_{(N)}^{\prime}(1) > 1.  
\end{equation}
As we shall see in Section~\ref{subsec:univ}, under further conditions on $\{(\lambda_{(N)},\rho_{(N)})\}$, the strict inequality in (\ref{eqn:bsc+fail}) is sufficient to make the probability of bit error under BP decoding be bounded away from $0$. We then use the result in Section~\ref{subsec:univ} to show why the heavy-tail Poisson sequence and the right-regular sequence are not universal under BP decoding; these are done in Sections~\ref{subsec:poi+univ} and \ref{subsec:rr+univ}, respectively. We give the proofs of lemmas and theorems of this section in Appendix~\ref{appdix:bp}.


\subsection{Generic Statement}
\label{subsec:univ}

We first recall that for a \textit{single} design degree distribution pair $(\lambda, \rho)$ and an $L$-density $\mathsf{c}$, we have $\Delta_{\infty} = \mathbb{T}(\Delta_{\infty})$, where $\mathbb{T} = \mathbb{T}(\mathsf{c}, \lambda, \rho)$ denotes the associated density evolution operator. We say that $\Delta_{\infty}$ is a fixed point of $\mathbb{T}$. Then a ``linearization'' of $\mathbb{T}$ around the fixed point $\Delta_{\infty}$ determines the stability under BP decoding. More precisely, if the condition $\mathfrak{B}(\mathsf{c}) \lambda^{\prime}(0) \rho^{\prime}(1) > 1$ holds, then there exists a strictly positive constant $\xi = \xi(\mathsf{c}, \lambda, \rho)$ such that $\liminf_{\ell \to \infty} \mathfrak{E}(\mathsf{x}(\ell)) > \xi$ for all $\mathsf{x}(0) \neq \Delta_{\infty}$; see~\cite{RU08} for more details. 

However, when we consider a \textit{sequence} of design degree distribution pairs, it is not clear \textit{a priori} whether or not the stability of the fixed point $\Delta_{\infty}$ is still determined by first-order approximation. Furthermore, the aforementioned lower bound on error probability $\xi$ implicitly depends on the sequence of degree distribution pairs, so that when we consider a sequence of degree distribution pairs, it is not clear \textit{a priori} the sequence of lower bounds is still bounded away from $0$.

The main result of this section is that we determine sufficient conditions on any sequence $\{(\lambda_{(N)},\rho_{(N)})\}$ of degree distribution pairs such that the stability of the fixed point $\Delta_{\infty}$ is still characterized by first-order approximation. To verify this, for a sequence $\{(\lambda_{(N)},\rho_{(N)})\}_{N \in \NN}$ and a BMS channel with $L$-density $\cc$, we let $\TT_{(N)}$ be the density evolution operator associated with the $N$-th pair $(\lambda_{(N)},\rho_{(N)})$ and $\cc$. For $\ell \geqslant 1$ and with $\xx_N (0)$ an $L$-density, we recursively define
\begin{equation*}
\xx_N(\ell) = \TT_{(N)}\big( \xx_N(\ell-1) \big).
\end{equation*}
Then we define a sequence of functions $\{g_{(N)}\}$ such that for each $N$, let $g_{(N)}: [0,1] \to [0,1]$ be defined as 
\begin{equation*}
g_{(N)}(x) = \lambda_{(N)} \big( 1 - \rho_{(N)}(1 - x)  \big).
\end{equation*}
Furthermore, for each $N \in \NN$, we let
\begin{equation*}
\mu_N \coloneqq g_{(N)} ^{\prime} (0) = \lambda_{(N)} ^{\prime} (0) \rho_{(N)} ^{\prime}(1).
\end{equation*}
The main result of this section is the following theorem, the proof of which is given in Appendix~\ref{appdix:generic}.

\begin{theorem}
\label{thm:univ+generic}
Let $\{(\lambda_{(N)}, \rho_{(N)})\}$ be a sequence of design degree distribution pairs such that the limit $\lim_{N \to \infty} \mu_N \eqqcolon \mu_{\infty}$ exists. Moreover, there exist a constant $\kappa \in (0,1)$ and an $M = M(\kappa) \in (0,\infty)$ such that the sequence of second-order derivatives of $\{g_{(N)}\}$ satisfies
\begin{equation}
\label{eqn:generic+univ+g2}
\big| g_{(N)} ^{\prime\prime} (x) \big| \leqslant M,
\end{equation}
for every $x \in [0,\kappa]$ and $N \geqslant 2$. Consider transmission over a BMS channel with $L$-density $\cc$ and let $\{\TT_{(N)}\}$ be the sequence of density evolution operators associated with $\cc$ and $\{(\lambda_{(N)}, \rho_{(N)})\}$. If 
\begin{equation*}
\lim_{N \to \infty} \bhatta(\cc) \mu_N > 1, 
\end{equation*}
then there exist a constant $\xi = \xi(\cc, \mu_{\infty}) > 0$ and an $N_{\infty} = N_{\infty}(\xi) \in \NN$ so that for all $N \geqslant N_{\infty}$, it holds that
\begin{equation*}
\liminf_{\ell \to \infty} \error\big( \xx_N(\ell) \big) \geqslant \xi,
\end{equation*}
where $\xx_N(0)$ is arbitrary and satisfies $\error(\xx_N(0)) \in {(0, \xi]}$.
\end{theorem}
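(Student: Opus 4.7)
The strategy is to reduce density evolution to a scalar recursion on the Bhattacharyya parameter $b_{\ell} \coloneqq \bhatta(\xx_N(\ell))$, exploit the repulsion of the dynamics at the unstable fixed point $b=0$, and convert the conclusion to an error-probability bound. Concretely, the target inequality is
\begin{equation*}
b_{\ell+1} \;\geq\; \bhatta(\cc)\,\mu_N\,b_{\ell} \;-\; C\,b_{\ell}^2,
\end{equation*}
with a constant $C$ uniform in all sufficiently large $N$. Given such a bound, pick $\delta > 0$ with $\bhatta(\cc)\mu_{\infty} > 1+2\delta$; the convergence $\mu_N\to\mu_{\infty}$ then yields $\bhatta(\cc)\mu_N > 1+\delta$ for every $N \geq N_{\infty}$, and the recursion gives $b_{\ell+1} \geq (1+\delta/2)\,b_{\ell}$ whenever $b_{\ell} \leq \xi' \coloneqq \delta/(2C)$. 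This precludes $b_\ell\to 0$; a standard monotone-orbit argument for DE then upgrades this to $\liminf_{\ell} b_{\ell} \geq \xi'' > 0$, and the classical sandwich $\error(\cdot) \geq \tfrac{1}{4}\bhatta(\cdot)^2$ translates the Bhattacharyya bound into the claimed error-probability lower bound.

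Setting up the scalar recursion starts from the exact identity $\bhatta(\TT_{(N)}(\xx)) = \bhatta(\cc)\,\lambda_{(N)}(\bhatta(\rho_{(N)}^{\boxast}(\xx)))$, which follows from multiplicativity of $\bhatta$ under variable-node convolution and its linearity on mixtures. The essential nonelementary step is a lower bound of the form $\bhatta(\rho_{(N)}^{\boxast}(\xx)) \geq \rho_{(N)}^{\prime}(1)\,\bhatta(\xx) - c_N\,\bhatta(\xx)^2$. I would obtain this by decomposing $\xx = (1-q)\Delta_{\infty} + q\,\mathsf{y}$ around the $\boxast$-identity $\Delta_{\infty}$: at a check node of degree $i$, the ``exactly one non-$\Delta_{\infty}$ input'' event has mass $(i-1)q(1-q)^{i-2}$ and contributes $\mathsf{y}$, while all remaining events contribute densities with nonnegative Bhattacharyya. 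Summing against $\rho_{(N)}$ and then composing with the Taylor expansion of $\lambda_{(N)}$ near zero produces the desired scalar recursion; the quadratic remainder is controlled by the second derivative of $g_{(N)} = \lambda_{(N)}(1-\rho_{(N)}(1-\cdot))$, so hypothesis~(\ref{eqn:generic+univ+g2}) supplies the uniform-in-$N$ constant $C$.

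The main obstacle is the lower-bound expansion for $\bhatta(\rho_{(N)}^{\boxast}(\xx))$: unlike variable-node convolution, where $\bhatta$ is exactly multiplicative, check-node convolution has no clean algebraic behavior under $\bhatta$, so a direct Taylor expansion in $\bhatta(\xx)$ is delicate. The mixture decomposition above sidesteps this by working in the parameter $q$, but at the cost of verifying that the quadratic correction reduces uniformly across $N$ to the assumption on $g_{(N)}^{\prime\prime}$. A subsidiary difficulty is that the scalar repulsion argument directly yields only ``$b_\ell$ does not converge to zero''; upgrading this to the stated $\liminf$ bound relies on the monotone-orbit structure of DE in the degraded order, together with a comparison argument covering arbitrary initial densities $\xx_N(0)$ satisfying $\error(\xx_N(0)) \in (0,\xi]$.
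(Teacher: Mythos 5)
The central technical step of your proposal fails. You want the scalar recursion $b_{\ell+1} \geq \bhatta(\cc)\mu_N b_\ell - C b_\ell^2$ via the mixture decomposition $\xx = (1-q)\Delta_\infty + q\,\mathsf{y}$. Retaining only the ``exactly one non-$\Delta_\infty$'' term at a degree-$i$ check node gives the valid lower bound $\bhatta(\xx^{\boxast(i-1)}) \geq (i-1)(1-q)^{i-2}q\,\bhatta(\mathsf{y}) = (i-1)(1-q)^{i-2}\bhatta(\xx)$, and summing against $\rho$ yields $\bhatta(\rho^{\boxast}(\xx)) \geq \rho'(1-q)\,\bhatta(\xx)$. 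To turn this into $\rho'(1)\bhatta(\xx) - O(\bhatta(\xx)^2)$ you need $q = O(\bhatta(\xx))$, i.e.\ $\bhatta(\mathsf{y}) = \Theta(1)$. But $q$ and $\bhatta(\xx)$ are not comparable in general: for $\cc$ a BSC density, no $L$-density in the DE trajectory carries a point mass at $+\infty$, so the only admissible decomposition has $q=1$, and your bound degenerates to $\bhatta(\rho^{\boxast}(\xx)) \geq \rho'(0)\bhatta(\xx)$, which is vacuous whenever $\rho_2 = 0$. (Note also that the ``Bhattacharyya of BEC under $\boxast$'' expression $1-\rho(1-\bhatta(\xx))$ is an \emph{upper} bound for $\bhatta(\rho^{\boxast}(\xx))$ by BEC extremality, not a lower bound, so no naive BEC comparison rescues this direction.) Thus the quadratic correction cannot be tied to the hypothesis on $g_{(N)}''$ by this route, and the scalar $\bhatta$-recursion as stated has no proof.

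The paper avoids this obstruction by never attempting a $\bhatta$ recursion. Instead it uses the erasure decomposition lemma to compare an arbitrary initial density $\xx_N(0)$ with $\error(\xx_N(0))=\epsilon$ to the BEC density $\bb^\epsilon(0) = \epsilon\Delta_0 + (1-\epsilon)\Delta_\infty$ in the degradation order, then tracks densities of the special form $p\,\cc^{\varoast\ell} + (1-p)\Delta_\infty$ through the DE operator. At each iteration the coefficient $p$ evolves through $g_{(N)}$, and the uniform bound on $g_{(N)}''$ feeds into a Taylor estimate of $g_{(N)}(\beta\epsilon)$ — so the hypothesis (\ref{eqn:generic+univ+g2}) enters exactly as in your plan, but applied to the BEC-mixture coefficient rather than to $\bhatta$. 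This sidesteps the lack of algebraic structure of $\bhatta$ under $\boxast$ entirely. Separately, your acknowledged ``subsidiary difficulty'' — upgrading ``$b_\ell\not\to 0$'' to the stated $\liminf$ bound — is not a routine monotone-orbit argument either: for a generic initial density the DE orbit need not be monotone, and the theorem demands a uniform-in-$N$ constant $\xi$. The paper handles this with a fixed-point comparison showing the limiting density $\bb_N^\epsilon(\infty)$ is independent of $\epsilon\in(0,\xi]$ for $N\geq N_\infty$, again via the erasure decomposition lemma. That lemma is the missing tool in your proposal, and without it both of the substantive steps you flag as ``obstacles'' remain open.
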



\subsection{Heavy-Tail Poisson Sequence}
\label{subsec:poi+univ}

We start by recalling that, for transmission on $\bec$, the function $f_{(N)}$ on $[0,1]$ characterizing density evolution for $(\lambda_{(N)}, \rho_{(N)})$ is $f_{(N)}(x) = \epsilon \lambda_{(N)} ( 1 - \rho_{(N)}(1 - x) )$. In the case of the heavy-tail Poisson sequence, it simplifies to
\begin{equation}
\label{eqn:poi+f+def}
f_{(N)}(x) = \frac{\epsilon}{H_{N-1}} \sum_{i=1}^{N-1} \frac{1}{i} ( 1 - \ee^{- \alpha x} )^i.
\end{equation}

For $\epsilon \in (0,1)$, we have $\mu_N = 1/\epsilon \in \RR$ for all $N \in \NN$ so that $\mu_{\infty} \in \RR$ as well. The extra and technical condition that Theorem~\ref{thm:univ+generic} imposes is that the sequence of second-order derivatives of $\{f_{(N)}\}$ needs to be uniformly equibounded on some interval $[0,\kappa] \subseteq {[0,\epsilon)}$ with $0 < \kappa < \epsilon$. In the sequel we verify that this condition holds for the heavy-tail Poisson sequence. More precisely, the following Lemma~\ref{lem:poi+f2} characterizes the behavior of the sequence $\{f_{(N)} ^{\prime\prime}\}$. The proof of this lemma is given in Appendix~\ref{appdix:poi+univ}.

\begin{lemma}
\label{lem:poi+f2}
Given $\epsilon \in (0,1)$, let the heavy-tail Poisson sequence $\{(\lambda_{(N)}, \rho_{(N)})\}$ be specified in (\ref{eqn:poi+lambda+rho+def}). Then 
\begin{equation}
\limsup_{N \to \infty} f_{(N)} ^{\prime\prime} (\epsilon) = -\infty.
\end{equation}
On the other hand, let $\kappa \in [0,\epsilon)$ be arbitrary and fixed. Then for every $N \geqslant 3$ and $x \in [0,\kappa]$, it holds that
\begin{equation}
\label{eqn:poi+f2+lemma}
\big| f_{(N)} ^{\prime \prime} (x) \big| \leqslant \frac{2\ee^2}{\epsilon} N \ln(N) \exp\big(-\ee^{-(\gamma+1/2)\kappa/\epsilon}N^{1-\kappa/\epsilon}\big),
\end{equation}
\ie, the sequence of second-order derivatives of $\{f_{(N)}\}$ converges uniformly to $0$ on every closed interval in $[0,\epsilon)$. Therefore for every $\kappa \in [0,\epsilon)$, there exists an $M = M(\epsilon, \kappa) \in (0, \infty)$ such that for every $N \geqslant 2$ and every $x \in [0,\kappa]$, it holds that $|f_{(N)} ^{\prime \prime} (x)| \leqslant M$.
\end{lemma}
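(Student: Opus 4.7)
\medskip

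\noindent\textbf{Proof plan for Lemma~\ref{lem:poi+f2}.} The approach rests on a single algebraic miracle: the sum defining $f_{(N)}$ is geometric after differentiation. Recalling from (\ref{eqn:poi+f+def}) that $f_{(N)}(x) = (\epsilon/H_{N-1}) \sum_{i=1}^{N-1}(1-\ee^{-\alpha x})^i/i$ with $\alpha = H_{N-1}/\epsilon$, term-by-term differentiation and summation of the resulting geometric series yields
\[
f_{(N)}'(x) \;=\; \frac{\epsilon \alpha}{H_{N-1}} \big(1 - (1-\ee^{-\alpha x})^{N-1}\big) \;=\; 1 - (1-\ee^{-\alpha x})^{N-1},
\]
and hence
\[
f_{(N)}''(x) \;=\; -(N-1)\alpha\, \ee^{-\alpha x}\,(1-\ee^{-\alpha x})^{N-2}.
\]
Every subsequent estimate reduces to controlling the three factors on the right.

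For the first claim, I evaluate at $x=\epsilon$, where $\alpha x = H_{N-1}$. Using the classical asymptotic $H_{N-1} = \ln N + \gamma + o(1)$, we get $\ee^{-H_{N-1}} \sim \ee^{-\gamma}/N$, so $(1-\ee^{-H_{N-1}})^{N-2} \to \ee^{-\ee^{-\gamma}} \in (0,1)$ while $(N-1)\alpha \ee^{-H_{N-1}} = (N-1)(H_{N-1}/\epsilon)\ee^{-H_{N-1}} \sim \ee^{-\gamma}\ln(N)/\epsilon \to \infty$. Together with the leading minus sign, $f_{(N)}''(\epsilon) \to -\infty$.

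For the uniform bound on $[0,\kappa]$, I exploit that $x\mapsto (1-\ee^{-\alpha x})^{N-2}$ is increasing, so its value is maximized at $x=\kappa$. Using the one-sided estimate $H_{N-1}\le \ln N + \gamma + 1/2$ (a standard consequence of $H_n=\ln n + \gamma + \frac{1}{2n}-\cdots$, valid for $N\ge 3$), I get $\alpha\kappa \le (\ln N + \gamma + 1/2)\kappa/\epsilon$, hence
\[
\ee^{-\alpha\kappa} \;\ge\; N^{-\kappa/\epsilon}\,\ee^{-(\gamma+1/2)\kappa/\epsilon}.
\]
Combined with $1-t\le \ee^{-t}$ this yields $(1-\ee^{-\alpha\kappa})^{N-2}\le \exp\!\big(-(N-2)N^{-\kappa/\epsilon}\ee^{-(\gamma+1/2)\kappa/\epsilon}\big)$; the replacement of $N-2$ by $N$ inside the exponent costs a multiplicative factor of at most $\ee^{2 N^{-\kappa/\epsilon} \ee^{-(\gamma+1/2)\kappa/\epsilon}}\le \ee^2$. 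For the prefactor, $(N-1)\alpha\ee^{-\alpha x}\le NH_{N-1}/\epsilon \le 2N\ln(N)/\epsilon$ for $N\ge 3$, since then $\gamma+1/2 < \ln N$. Assembling the pieces gives exactly the stated bound (\ref{eqn:poi+f2+lemma}).

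The final assertion on uniform equiboundedness is then automatic: since $\kappa<\epsilon$ we have $1-\kappa/\epsilon>0$, so the exponential factor decays doubly-exponentially in $N$ and dominates the $N\ln N$ prefactor, forcing $\sup_{x\in[0,\kappa]}|f_{(N)}''(x)|\to 0$. A finite $M=M(\epsilon,\kappa)$ absorbs the remaining small values of $N$ (namely $N=2$, handled by a direct finite bound). The only slightly delicate step is the calibration of the constants $\gamma+1/2$ and $2\ee^2/\epsilon$ in the closed-form inequality, which hinges on using the right one-sided estimate for $H_{N-1}$; everything else is bookkeeping.
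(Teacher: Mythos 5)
Your proof is correct and follows essentially the same route as the paper: compute $f_{(N)}''(x) = -(N-1)\alpha\,\ee^{-\alpha x}(1-\ee^{-\alpha x})^{N-2}$, bound the $(1-\ee^{-\alpha x})^{N-2}$ factor via the harmonic-number estimate $H_{N-1}\le\ln N+\gamma+1/2$ and $1-t\le\ee^{-t}$ (absorbing the $N-2$ vs.\ $N$ mismatch into the $\ee^2$ prefactor), and bound $\alpha\le 2\ln(N)/\epsilon$ for $N\ge 3$. The only cosmetic difference is at $x=\epsilon$: you argue by asymptotics of $H_{N-1}$, whereas the paper uses explicit two-sided bounds $\ln(N-1)+1\le H_{N-1}\le\ln(N-1)+2$ to get a quantitative lower bound $-\epsilon f_{(N)}''(\epsilon)\ge \ee^{-2}(1-\ee^{-1})H_{N-1}$; both suffice for the conclusion.
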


\begin{corollary}
Consider the heavy-tail Poisson sequence $\{(\lambda_{(N)}, \rho_{(N)})\}$ which is designed for the $\bec$ according to (\ref{eqn:poi+lambda+rho+def}). Then under BP decoding, it does not achieve capacity for any BMS channel with $L$-density $\cc$ such that $\bhatta(\cc) > \epsilon$.
\end{corollary}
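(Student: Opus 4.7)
The plan is to directly apply Theorem~\ref{thm:univ+generic} to the heavy-tail Poisson sequence. Three hypotheses must be verified: (i) existence of $\mu_\infty$, (ii) uniform equiboundedness of the second derivatives of $g_{(N)} = f_{(N)}/\epsilon$ on some $[0,\kappa]$, and (iii) the strict inequality $\lim_N \bhatta(\cc)\mu_N > 1$. For (i), I would observe from (\ref{eqn:poi+lambda+rho+def}) that $\lambda_{(N)}^\prime(0) = 1/H_{N-1}$ and $\rho_{(N)}^\prime(1) = H_{N-1}/\epsilon$, so that $\mu_N = \lambda_{(N)}^\prime(0)\rho_{(N)}^\prime(1) = 1/\epsilon$ for every $N \geqslant 2$. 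Thus $\mu_\infty = 1/\epsilon$ exists trivially, and (iii) reduces to the hypothesis $\bhatta(\cc) > \epsilon$.

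For (ii), I would pick any $\kappa \in (0,\epsilon)$; Lemma~\ref{lem:poi+f2} provides $M = M(\epsilon,\kappa) \in (0,\infty)$ such that $|f_{(N)}^{\prime\prime}(x)| \leqslant M$ for every $N \geqslant 2$ and $x \in [0,\kappa]$. Since $g_{(N)} = f_{(N)}/\epsilon$, this yields $|g_{(N)}^{\prime\prime}(x)| \leqslant M/\epsilon$ on $[0,\kappa]$, uniformly in $N$, which is exactly (\ref{eqn:generic+univ+g2}).

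With the three hypotheses in place, Theorem~\ref{thm:univ+generic} delivers a constant $\xi = \xi(\cc,1/\epsilon) > 0$ and an $N_\infty \in \NN$ such that, for every $N \geqslant N_\infty$ and every $L$-density $\xx_N(0)$ with $\error(\xx_N(0)) \in (0,\xi]$, the DE iterates satisfy $\liminf_{\ell\to\infty}\error(\xx_N(\ell)) \geqslant \xi$. To turn this instability of the fixed point $\Delta_\infty$ into the desired non-capacity-achieving statement, I would argue by contradiction. Suppose that for some BMS channel with $\bhatta(\cc)>\epsilon$ the heavy-tail Poisson sequence achieves capacity under BP, i.e.\ $\lim_{\ell\to\infty}\error(\xx_N(\ell)) = 0$ for every sufficiently large $N$, where $\xx_N(0) = \cc$. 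By the standard monotonicity of BP density evolution on BMS channels~\cite{RU08}, the sequence $\{\error(\xx_N(\ell))\}_\ell$ is non-increasing, so for each such $N$ there exists $\ell_0 = \ell_0(N)$ with $\error(\xx_N(\ell_0)) \in (0,\xi]$. Applying the theorem with $\xx_N(\ell_0)$ as the starting density then gives $\liminf_{\ell\to\infty}\error(\xx_N(\ell_0+\ell)) \geqslant \xi > 0$, contradicting the assumed convergence to zero.

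None of the above steps is substantive by itself; all the technical content has been packaged into Theorem~\ref{thm:univ+generic} and Lemma~\ref{lem:poi+f2}. The only step requiring any care is the contradiction argument via monotonicity of BP density evolution, since Theorem~\ref{thm:univ+generic} is an instability statement valid only inside a $\xi$-neighbourhood of $\Delta_\infty$; the monotonicity guarantees that a trajectory converging to $\Delta_\infty$ must eventually enter this neighbourhood, at which point the theorem prevents it from remaining there.
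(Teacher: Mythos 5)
Your proof is correct and follows the intended route: verify the three hypotheses of Theorem~\ref{thm:univ+generic} for the heavy-tail Poisson sequence (with $\mu_N = 1/\epsilon$ by direct computation, and (\ref{eqn:generic+univ+g2}) supplied by the uniform bound in Lemma~\ref{lem:poi+f2}) and conclude that the limiting BP error probability is bounded away from zero whenever $\bhatta(\cc) > \epsilon$. The paper states the corollary without a separate proof precisely because it is this immediate combination, so you have supplied the details that the paper leaves tacit.

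One small observation: the extra contradiction step via monotonicity of density evolution, used to first drive $\error(\xx_N(\ell))$ into the interval $(0,\xi]$ before invoking the theorem, is correct but not strictly needed. The final sentence of the paper's proof of Theorem~\ref{thm:univ+generic} already disposes of the case $\error(\xx_N(0)) > \xi$ by the erasure decomposition lemma and induction, so the theorem's conclusion holds for any $\xx_N(0)$ with $\error(\xx_N(0)) > 0$, in particular for $\xx_N(0) = \cc$. Your workaround is a legitimate alternative reading of the theorem as literally stated, and it is carefully argued (including the implicit fact that $\error(\xx_N(\ell)) > 0$ for every finite $\ell$ because $\cc \neq \Delta_\infty$ when $\bhatta(\cc) > \epsilon > 0$), but it adds a step the paper's proof quietly renders unnecessary.
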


Figure~\ref{fig:univ+poi+f2} shows the sequence of second-order derivatives of $\{f_{(N)}\}$ in (\ref{eqn:poi+f+def}) for the cases $\epsilon \in \{0.3,0.5,0.7\}$. As we can see, such curves are close to $0$ when $x$ is bounded away from $\epsilon$, while when $x$ is close to $\epsilon$, the sequence of such curves takes negative and decreasing values.

\begin{figure}[H]
\centering
\subfigure{
\includegraphics{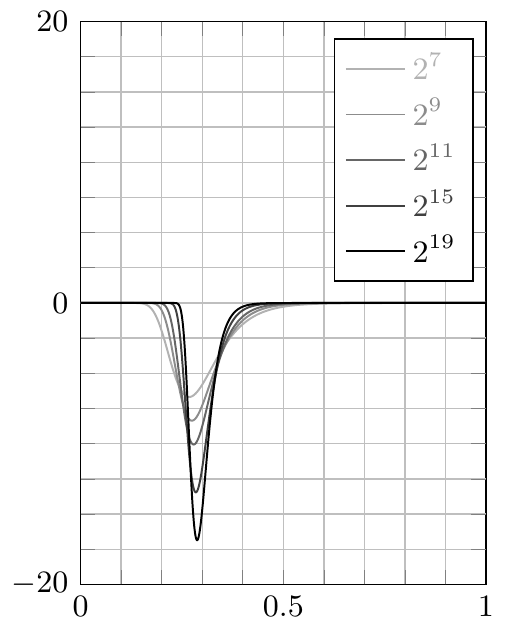}
}
\subfigure{
\includegraphics{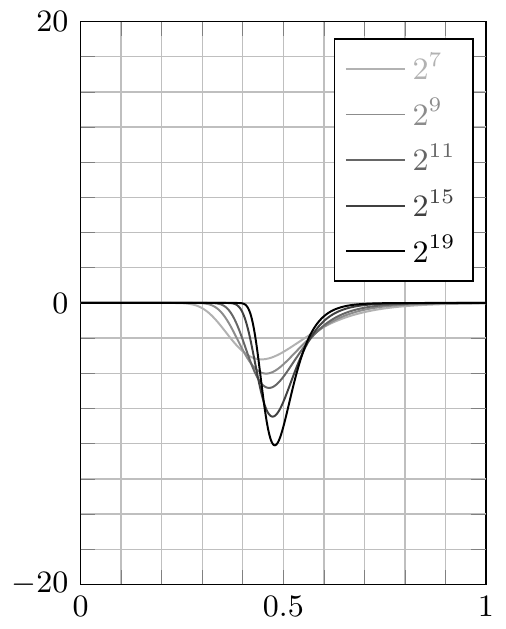}
}
\subfigure{
\includegraphics{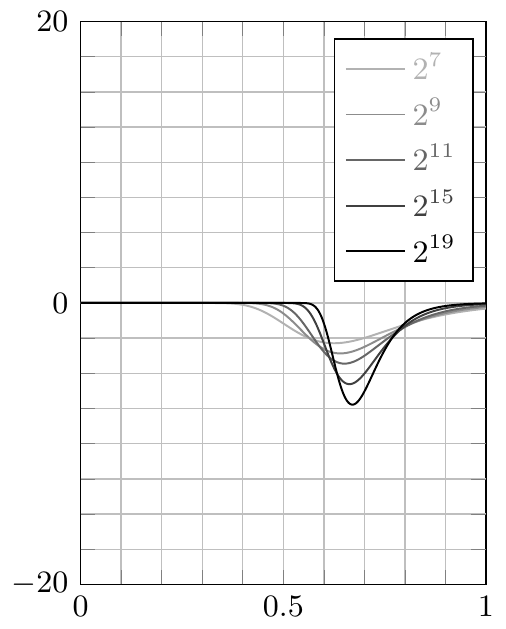}
}
\caption{$\{f_{(N)} ^{\prime\prime}\}$ of the heavy-tail Poisson sequence for $\epsilon \in \{0.3, 0.5, 0.7\}$.}
\label{fig:univ+poi+f2}
\end{figure}


\subsection{Right-Regular Sequence}
\label{subsec:rr+univ}

In this section we discuss universality of the right-regular sequence. Rather than rigorously proving that this sequence is not universal under BP decoding, we proceed informally and show by illustrations why the conditions in Theorem~\ref{thm:univ+generic} do hold under the current setting. We recall that the right-regular sequence $\{(\lambda_{(N)}, \rho_{(N)})\}$ is specified in Example~\ref{exa:right+regular}. With
\begin{equation}
\label{eqn:rr+f+def}
f_{(N)}(x) = \epsilon \lambda_{(N)} \big( 1 - (1 - x)^{\ttr_N} \big),
\end{equation}
where $\ttr_N = \ttr(\epsilon, N) = \log_{1/\bar{\epsilon}}(N)$ and $\lambda_{i+1} ^{(N)}$ is given in (\ref{eqn:rr+lambda+rho+def}), in the following we show that the sequence $\{f_{(N)} ^{\prime\prime}\}$ of second-order derivatives exhibits a similar behavior as the one for the heavy-tail Poisson sequence. More precisely, the following Lemma~\ref{lem:rr+f2+formula} characterizes the sequence of second-order derivatives of $\{f_{(N)}\}$ in (\ref{eqn:rr+f+def}). The proof of this lemma is given in Appendix~\ref{appdix:rr+univ}.

\begin{lemma}
\label{lem:rr+f2+formula}
Given $\epsilon \in (0,1)$, consider the right-regular sequence $\{(\lambda_{(N)}, \rho_{(N)})\}$ which is specified according to  (\ref{eqn:rr+lambda+rho+def}). For every integer $N \geqslant 3$, define functions $F_{(N)} ^0$ and $F_{(N)} ^1$ over $[0,1]$ as
\begin{align}
\label{eqn:rr+F0+def}
F_{(N)} ^0 (x) &= - \lambda_2 ^{(N)} (\ttr_N - 1) ; \\
\label{eqn:rr+F1+def}
F_{(N)} ^1 (x) &= \sum_{i=2}^{N-1} i \lambda_{i+1} ^{(N)} \big( 1 - (1-x)^{\ttr_N} \big)^{i-2} \big( (i\ttr_N - 1)(1-x)^{\ttr_N} - \ttr_N + 1 \big).
\end{align}
Then the second-order derivative of each $f_{(N)}$ in (\ref{eqn:rr+f+def}) can be expressed as
\begin{equation}
\label{eqn:rr+F012}
f_{(N)} ^{\prime \prime} (x) = \epsilon \ttr_N (1-x)^{\ttr_N - 2} \big( F_{(N)} ^0 (x) + F_{(N)} ^1 (x) \big).
\end{equation}
\end{lemma}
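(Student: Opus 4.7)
The proof is a direct chain-rule computation, so my plan is to set up the composition cleanly, apply the chain rule, and then reorganize the resulting sums so that the degree-two contribution (which behaves differently than the rest) is isolated into $F_{(N)}^0$ and the remaining contributions are collected into $F_{(N)}^1$.

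First I would introduce the inner function $g_{(N)}(x) \coloneqq 1 - (1-x)^{\ttr_N}$ and write $f_{(N)}(x) = \epsilon \lambda_{(N)}(g_{(N)}(x))$. A direct differentiation gives $g_{(N)}'(x) = \ttr_N (1-x)^{\ttr_N - 1}$ and $g_{(N)}''(x) = -\ttr_N(\ttr_N - 1)(1-x)^{\ttr_N - 2}$. Expanding $\lambda_{(N)}(y) = \sum_{i=1}^{N-1} \lambda_{i+1}^{(N)} y^i$ from (\ref{eqn:rr+lambda+rho+def}), I would then record the two derivatives
\[
\lambda_{(N)}'(y) = \sum_{i=1}^{N-1} i\, \lambda_{i+1}^{(N)}\, y^{i-1}, \qquad \lambda_{(N)}''(y) = \sum_{i=2}^{N-1} i(i-1)\, \lambda_{i+1}^{(N)}\, y^{i-2}.
\]

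Second, I would apply the chain rule $f_{(N)}''(x) = \epsilon \bigl[ \lambda_{(N)}''(g_{(N)}(x)) \, (g_{(N)}'(x))^2 + \lambda_{(N)}'(g_{(N)}(x)) \, g_{(N)}''(x) \bigr]$, substitute the expressions above, and factor out the common prefactor $\epsilon \ttr_N (1-x)^{\ttr_N - 2}$. This leaves, inside the brackets, the expression
\[
\ttr_N (1-x)^{\ttr_N} \!\!\sum_{i=2}^{N-1}\!\! i(i-1)\lambda_{i+1}^{(N)} \bigl(1 - (1-x)^{\ttr_N}\bigr)^{i-2} - (\ttr_N - 1)\!\!\sum_{i=1}^{N-1}\!\! i\, \lambda_{i+1}^{(N)} \bigl(1 - (1-x)^{\ttr_N}\bigr)^{i-1}.
\]

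Third, I would split off the $i=1$ summand from the second sum: since $\bigl(1-(1-x)^{\ttr_N}\bigr)^{0} = 1$, this contributes exactly $-(\ttr_N - 1)\lambda_2^{(N)}$, which is the definition of $F_{(N)}^0(x)$. The remaining terms, which run over $i \geqslant 2$ in both sums, can be combined by pulling out $i\, \lambda_{i+1}^{(N)} \bigl(1-(1-x)^{\ttr_N}\bigr)^{i-2}$ as a common factor. The residual scalar is $\ttr_N(i-1)(1-x)^{\ttr_N} - (\ttr_N - 1)\bigl(1-(1-x)^{\ttr_N}\bigr)$, and a one-line expansion regrouping powers of $(1-x)^{\ttr_N}$ rewrites this as $(i\ttr_N - 1)(1-x)^{\ttr_N} - \ttr_N + 1$, which matches $F_{(N)}^1(x)$ termwise. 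Combining the two pieces yields (\ref{eqn:rr+F012}).

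There is no real obstacle here: every step is a routine differentiation or algebraic regrouping, and the only place where care is needed is tracking the index shift between the edge-perspective coefficients $\lambda_{i+1}^{(N)}$ and the exponents in $\lambda_{(N)}(y)$, and correctly separating the $i=1$ contribution (which has no counterpart in the sum coming from $\lambda_{(N)}''$) into the stand-alone term $F_{(N)}^0$.
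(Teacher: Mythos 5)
Your proposal is correct and is essentially the same calculation the paper performs: a direct second-order chain-rule differentiation of $f_{(N)} = \epsilon\,\lambda_{(N)}\circ g_{(N)}$ with $g_{(N)}(x)=1-(1-x)^{\ttr_N}$, factoring out $\epsilon\,\ttr_N(1-x)^{\ttr_N-2}$, isolating the $i=1$ term as $F_{(N)}^0$, and regrouping the scalar in the $i\geqslant 2$ terms into $(i\ttr_N-1)(1-x)^{\ttr_N}-\ttr_N+1$ to obtain $F_{(N)}^1$. The only cosmetic difference is that the paper separates the degree-two term before differentiating and applies the product rule termwise to $(1-(1-x)^{\ttr_N})^{i-1}(1-x)^{\ttr_N-1}$, whereas you invoke the composite chain rule $f''=\epsilon[\lambda''(g)(g')^2+\lambda'(g)g'']$ and split afterward; the algebra is identical.
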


We notice that the second-order derivative of each $f_{(N)}$ at $x=0$ is $0$. Indeed, by (\ref{eqn:rr+F0+def}) and (\ref{eqn:rr+F1+def}) we have 
\begin{equation}
\label{eqn:rr+F0+F1}
F^0 _{(N)}(0) = - \lambda_2 ^{(N)} (\ttr_N - 1) \text{ and } F^1 _{(N)}(0) = 2 \lambda_3 ^{(N)} \ttr_N.
\end{equation}
Now if we let $\lambda_{\alpha} ^{(N)} = 1 - (N/\alpha) \binom{\alpha}{N} (-1)^{N-1}$ and use the fact that $\alpha \ttr_N = 1$, then by (\ref{eqn:rr+lambda+rho+def}) and (\ref{eqn:rr+F0+F1}) we can obtain
\begin{align*}
F^0 _{(N)}(0) + F^1 _{(N)}(0) &= \big( -\alpha (\ttr_N - 1) - 2 \alpha(\alpha-1)/2 \ttr_N \big) \big/ \lambda_{\alpha} ^{(N)} \\
&= ( -1 + \alpha + 1 - \alpha ) / \lambda_{\alpha} ^{(N)} = 0.
\end{align*}

Figure~\ref{fig:univ+rr+f2+component} illustrates (\ref{eqn:rr+F012}) for the case $\epsilon = 0.5$. As we can observe in Figure~\ref{fig:univ+rr+f2+component}, when $x$ is below a certain value, the contribution to ``$f^{\prime\prime}$'' from ``$F^0$,'' which is shown in the left subfigure of Figure~\ref{fig:univ+rr+f2+component}, is almost identical to the negative of the contribution to ``$f^{\prime\prime}$'' from ``$F^1$,'' which is shown in the middle subfigure of Figure~\ref{fig:univ+rr+f2+component}. Moreover, when $x$ is close to $\epsilon$, the only contribution to ``$f^{\prime\prime}$'' are those from ``$F^1$.'' 

\begin{figure}[H]
\centering
\subfigure{
\includegraphics{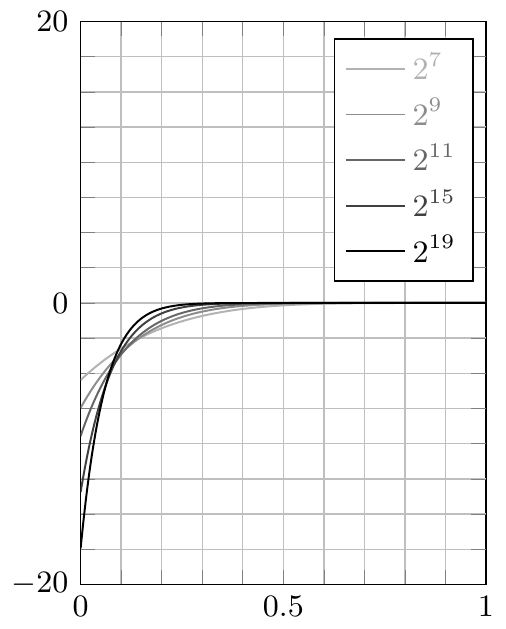}
}
\subfigure{
\includegraphics{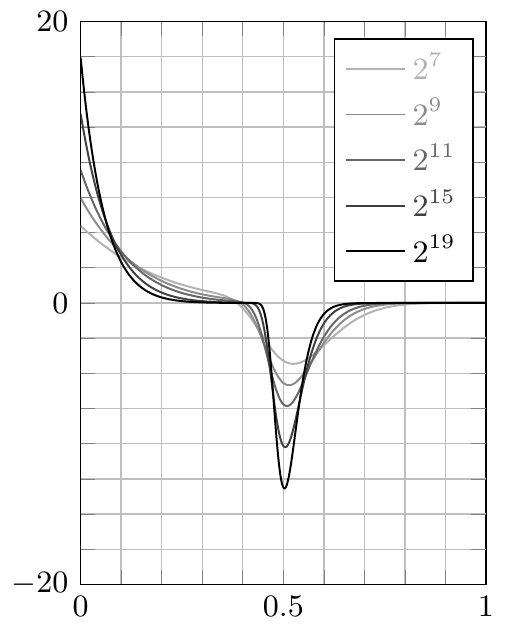}
}
\caption{Illustration of (\ref{eqn:rr+F012}) for the right-regular sequence with $\epsilon=0.5$.}
\label{fig:univ+rr+f2+component}
\end{figure}

Figure~\ref{fig:univ+rr+f2} plots the sequence $\{f_{(N)} ^{\prime\prime}\}$ of second-order derivatives in (\ref{eqn:rr+F012}) for the cases where $\epsilon \in \{0.3,0.5,0.7\}$. As we can see from Figure~\ref{fig:univ+rr+f2}, these curves are extremely close to $0$ when $x$ is bounded away from $\epsilon$, which suggests that in this case the sequence of second-order derivatives is indeed uniformly equibounded on some closed interval in ${[0,\epsilon)}$, and thus the conditions in Theorem~\ref{thm:univ+generic} applies to the right-regular sequence as well.

\begin{figure}[H]
\centering
\subfigure{
\includegraphics{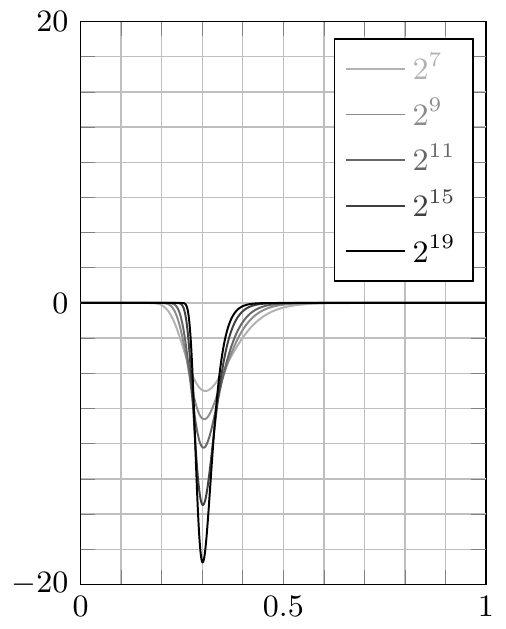}
}
\subfigure{
\includegraphics{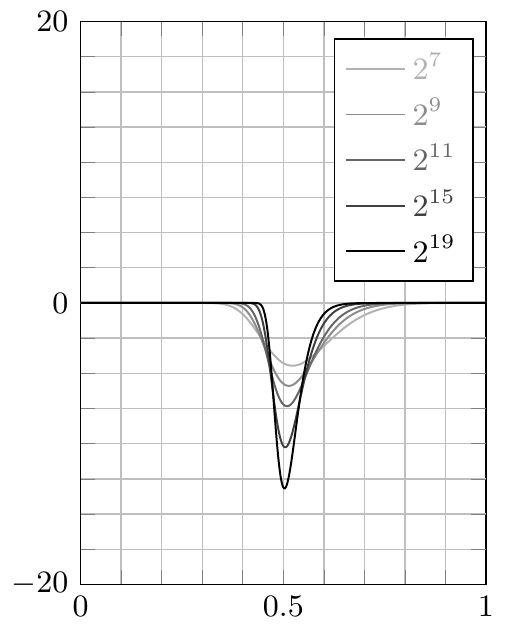}
}
\subfigure{
\includegraphics{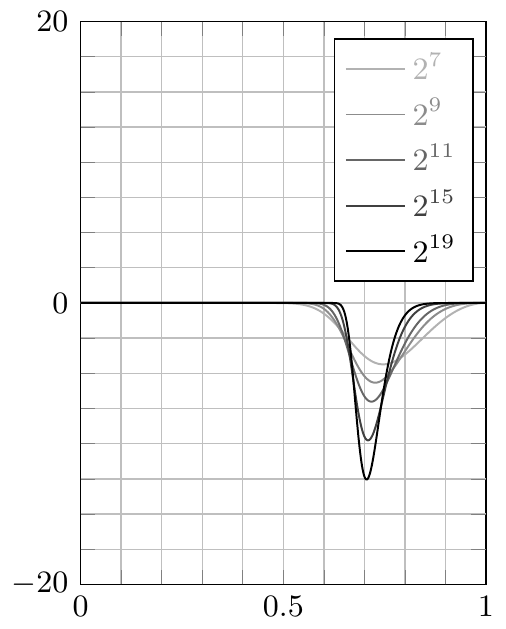}
}
\caption{$\{f_{(N)} ^{\prime\prime}\}$ of the right-regular sequence for $\epsilon \in \{0.3, 0.5, 0.7\}$.}
\label{fig:univ+rr+f2}
\end{figure}

We close this section by pointing out that for LDPC codes, one way of designing capacity-achieving sequences for the BEC is by the \textit{matching condition}~\cite{Sho01}. The first one of this matching condition corresponds exactly to the equality in (\ref{eqn:rr+flatness}), and in particular both heavy-tail Poisson sequence and right-regular sequence satisfy this condition. In fact, it was proved in~\cite{Sho01} that any capacity-achieving sequence of LDPC codes for the BEC under BP decoding necessarily satisfy this condition. We leave it as future work to investigate the universality of other capacity-achieving sequences under BP decoding such as the one proposed in \cite{SB10}, which will give us a better understanding of design principles on how to construct universal capacity-achieving sequences under BP decoding.

\section{Stability under MAP Decoding}
\label{sec:stab+cond+map}
For a design degree distribution pair $(\lambda,\rho)$, the quantity $\lambda^{\prime}(0) \rho^{\prime}(1)$ captures many important characteristics of the underlying LDPC codes as well as their encoding operations. More precisely, if $\lambda^{\prime}(0) \rho^{\prime}(1) > 1$, then such codes are linear-time encodable with high probability \cite{RU01b}, while the minimum distance grows sublinearly with respect to the blocklength; see, \eg, \cite{DRU06} for more details. Meanwhile, capacity-achieving LDPC codes under BP decoding known to date necessarily satisfy this condition \cite{Sho01}. On the other hand, if $\lambda^{\prime}(0) \rho^{\prime}(1) < 1$, then with probability $\sqrt{1 - \lambda^{\prime}(0) \rho^{\prime}(1)}$ the minimum distance grows linearly with respect to the blocklength; see, \eg, \cite{DRU06} for more details.

The quantity $\lambda^{\prime}(0) \rho^{\prime}(1)$ is also crucial for the analysis of BP decoding and as we have seen in Section~\ref{sec:univ+bp}, it is a key component for the universality result of modern capacity-achieving sequences of LDPC codes. In this section we connect the quantity $\lambda^{\prime}(0) \rho^{\prime}(1)$ to the analysis of blockwise MAP or bitwise MAP decoding when transmission takes place over a BMS channel. Our main result of this section depends on the following definition of stability threshold.

\begin{definition}
\label{def:stab}
Given a design degree distribution pair $(\lambda,\rho)$ and an ordered and complete BMS channel family $\{\cc_{\tth}\}$ indexed by entropy $\tth$, the \textit{stability threshold} $\tth^{\stab} = \tth^{\stab}(\{\cc_{\tth}\}, \lambda, \rho)$ associated with $\{\cc_{\tth}\}$ and $(\lambda,\rho)$ is defined as
\begin{equation*}
\tth^{\stab} = \inf \big\{\tth \in [0,1] \, \big| \, \bhatta(\cc_{\tth}) \lambda^{\prime}(0) \rho^{\prime}(1) > 1 \big\}.
\end{equation*}
If there is no $\tth \in [0,1]$ such that the condition is fulfilled, $\tth^{\stab}$ is defined to be $1$.
\end{definition}

Here is a discussion of the notation that we will use in the rest of this section. When we speak of a \textit{single design} degree distribution, we typically write $\lambda = \lambda_{(N)}$ or $\rho = \rho_{(N)}$ for some $N \in \NN$ and do not explicitly mention the underlying sequence of design degree distribution pairs that $(\lambda,\rho)$ belongs to. This will keep the notational burden to a minimum. We also recall from Section~\ref{subsec:prelim} that the degree distribution pairs $(\lambda_n, \rho_n)$ are such that $\lambda_n \longrightarrow_{\rrD} \lambda$ and $\rho_n \longrightarrow_{\rrD} \rho$. Here the subscript ``$_n$'' stands for the blocklength and $\LDPC(\lambda_n,\rho_n)$ denotes the ensemble of Tanner graphs constructed according to the configuration model. Moreover, $\ttl$ is the maximum variable-node degree and $\ttr$ is the maximum check-node degree.

The organization of this section is the following. In Section~\ref{subsec:exploration} we describe an exploration process that reveals the randomness of channel log-likelihood ratios as well as the connections of half-edges of variable nodes and check nodes in the Tanner graph. Basically, the exploration process reveals a specific graphical structure around a generic variable node in the Tanner graph. Notice that the order of the revelation in such an exploration does not corresponds to the actual transmission, but is very crucial for our analysis of blockwise or bitwise MAP decoding. Then in Section~\ref{subsec:cycle+deg2} we analyze the exploration process, where we will see an operational significance of the stability threshold defined in Definition~\ref{def:stab}. More precisely, the exploration process behaves in a fundamentally different way, depending on whether the channel entropy is larger than or smaller than the corresponding stability threshold. Subsequently in Section~\ref{subsec:lbd+block+error+prob}, we show that the stability threshold is an upper bound on the corresponding blockwise MAP threshold that was defined in (\ref{eqn:defn+blockmap}). Finally in Section~\ref{subsec:lbd+bit+error+prob}, we connect the stability threshold to the corresponding bitwise MAP threshold which was defined in (\ref{eqn:defn+bitmap}). More concretely, we present how the stability threshold determines an upper bound on the bitwise MAP threshold, and to illustrate the idea, we concentrate on transmission over the BSC. We then use a particular type of symmetry for a generic code in the ensemble to transform the original problem to the problem of finding a maximum matching for a suitably constructed graph, and we present how the cardinality of the maximum matching is related to bit error probability under bitwise MAP decoding, which concludes this section. 


\subsection{Exploration Process}
\label{subsec:exploration}

In the exploration process we will start with an arbitrarily chosen variable node of degree two, call it $v$, and then reveal the connections of edges as well as the channel log-likelihood ratios. The exploration is performed in discrete \textit{stage}s indexed by $k \in \NN$ and each stage consists of $l$ discrete \textit{step}s indexed by $t \in \NN$, where $l \in \NN$ is a prescribed parameter that depends on the underlying BMS channel and the design degree distribution pair $(\lambda,\rho)$. As we explore the neighborhoods of $v$, we label the half-edges of variable and check nodes as \textit{explored}, \textit{neutral}, \textit{open} or \textit{active}. Moreover, a half-edge of a variable or check node will be called \textit{unexplored} if its status is either neutral, or open, or active. At any particular stage or step, any half-edge only has one status. Initially, all half-edges are neutral and we only observe the node degrees and the half-edges, but not how these half-edges are connected. The following gives the precise description of the exploration process.

\begin{enumerate}[label*={\scshape Stage \arabic*},leftmargin=*,align=left]
\item[\refstepcounter{enumi}\scshape Stage $0$] Reveal the channel log-likelihood ratio $L_v$ associated with variable node $v$. Denote by $e_1$ and $e_2$ the two variable half-edges associated with $v$. Now uniformly at random choose a check half-edge and connect it to $e_1$ and change the status of these two half-edges from neutral to explored. Also change the status of the other check half-edges of this particular check node from neutral to active; we also mark this newly explored check node as \textit{active}. This corresponds to the initialization stage $k=0$ of the exploration process.
\item[\refstepcounter{enumi}\scshape Stage $k$] At stage $k \geqslant 1$, as long as there is any active check half-edge remaining at stage $k-1$, the exploration takes the following steps. 
\begin{enumerate}[label={\scshape Step \arabic*},nosep,leftmargin=*,align=left]
\item[\refstepcounter{enumi}\scshape Step\;$0$] Select the active check half-edge according to the breadth-first order.
\item[\refstepcounter{enumi}\scshape Step $t$] At step $t \in [l] \coloneqq \{1,\dots,l\}$, as long as there is any active check half-edge that is remaining at step $t-1$, the exploration is performed according to the following description.
\begin{enumerate}[label={\arabic*)},leftmargin=*]
\item Pick an active check half-edge according to the breadth-first order and then connect it to an unexplored variable half-edge uniformly at random. If this variable half-edge is neutral and belongs to a variable node of degree two, then the status of this variable half-edge and the active check half-edge are changed to explored, and then we change the status of the remaining variable half-edge of the same variable node of degree two to active. Otherwise, \ie, if this variable half-edge is either open or active, or if this variable half-edge belongs to a variable node of degree strictly larger than two, then we change the status of this variable half-edge and the active check half-edge to explored, and change the status of the remaining variable half-edges of the same variable node to open as well. 
\item As long as there is any variable half-edge that is active, we pick an variable half-edge according to the breadth-first order and then we uniformly at random connect it to an unexplored check half-edge. If this check half-edge is neutral and the variable half-edge is active, then the status of these two half-edges are changed to explored, and we also change the status of the remaining check half-edges of the same check node to active. Otherwise, that is, if this check half-edge is either open or active, then the status of these two half-edges are changed to explored, and we also change the status of the remaining check half-edges of the same check node to open. Now we mark a check node as active if and only if it has active check half-edges.
\item If $t = l$, an additional step is performed. Specifically, we reveal all channel log-likelihood ratios of the variable nodes between step $0$ and step $l$. Notice that for any given active check half-edge after performing step $l$, there exists a unique path of length $2l$, call it $\cP$, involving active check nodes only. We then sum up the log-likelihood ratios along this path. If the sum is strictly negative, the status of that particular check half-edge is kept unchanged. If the sum is strictly positive, the status of that check half-edge is changed to open. Otherwise, we independently flip a coin with uniform probability and let $B_{\cP} \in {\{0,1\}}$ denote the result to decide whether to change its status or not.
\end{enumerate}
\end{enumerate}

We continue this procedure until we have used $\varepsilon n$ check half-edges, where $\varepsilon$ is a prescribed parameter that depends on the underlying BMS channel and the pair $(\lambda,\rho)$. At this point we start to reveal the edge connection  to the variable half-edge $e_2$ of $v$ in the standard way. In this case we will not use labels anymore. If we have prematurely used up active check half-edges, meaning that the number of active check half-edges becomes zero before the total number of explored check half-edges reaches $\varepsilon n$, then we pick a degree-two variable node with neutral half-edges only and then restart the exploration process from {\scshape Stage 0}.
\end{enumerate}

This puts an end to the exploration process. Let us remark the reader that the only randomness in the exploration process are those due to channel fluctuations and the Tanner graph generation. Therefore if the Tanner graph in the ensemble and the channel realizations are fixed and if we run the exploration process starting from $v$ several times, then the orders of the half-edges that the process have revealed are the same.

Figure~\ref{fig:stab+onestage} exhibits an illustration of the ``one-step construction.'' In this example $l=3$ so that it consists of ``three steps'' that reveal the local neighborhoods of a check half-edge in the breadth-fist order. At step $t=l$ the channel log-likelihood ratios are revealed and among all three paths revealed just after step $3$, there are two paths such that each path contains $3$ variable nodes and the sum of log-likelihood ratios along these $3$ variable nodes is negative.

\begin{figure}[H]
\centering
\includegraphics{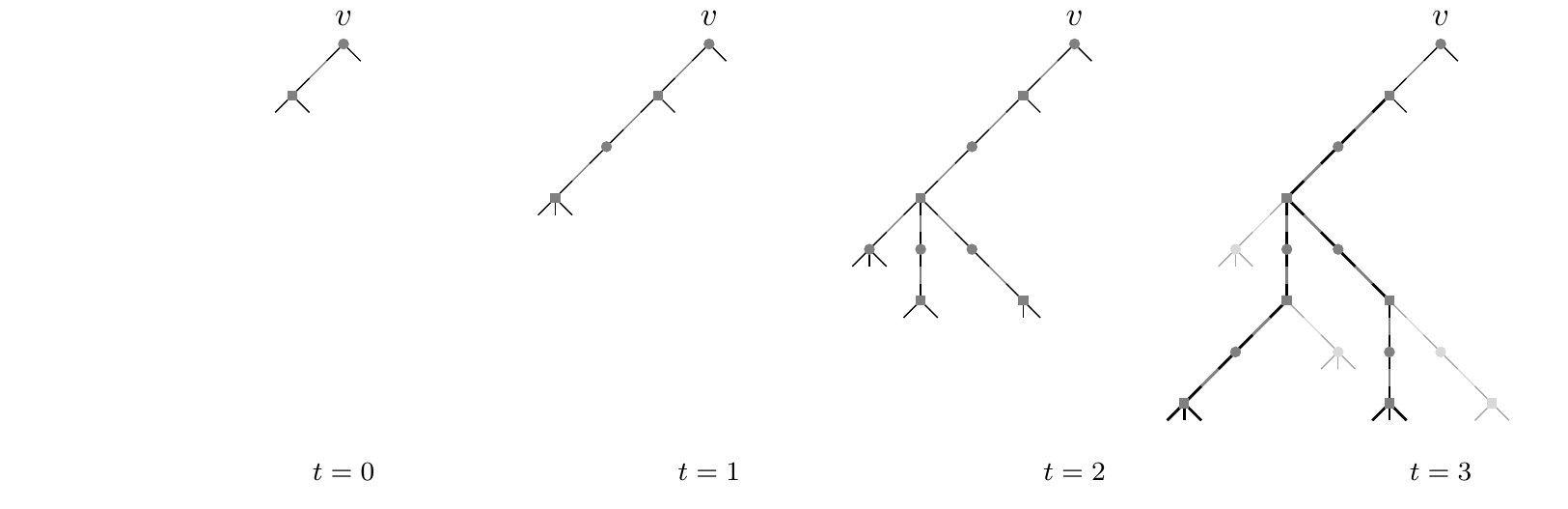}
\caption{Illustration of ``one-stage'' in the exploration process.}
\label{fig:stab+onestage}
\end{figure}

Figure~\ref{fig:stab+multistages} shows an illustration of ``multiple stages'' in the exploration process. In this illustration there are in total $k=13$ stages and those paths where the sum of log-likelihood ratios is negative are highlighted; \eg, there are $2$ such new paths revealed in stage $12$ and $3$ such new paths revealed in stage $13$. After these $13$ stages of exploration, there are in total $10$ paths, each of which connects the variable half-edge $e_1$ to an active check half-edge. Here we recall that $e_1$ is one of the two half-edges belonging to the variable node $v$.

\begin{figure}[H]
\centering
\includegraphics{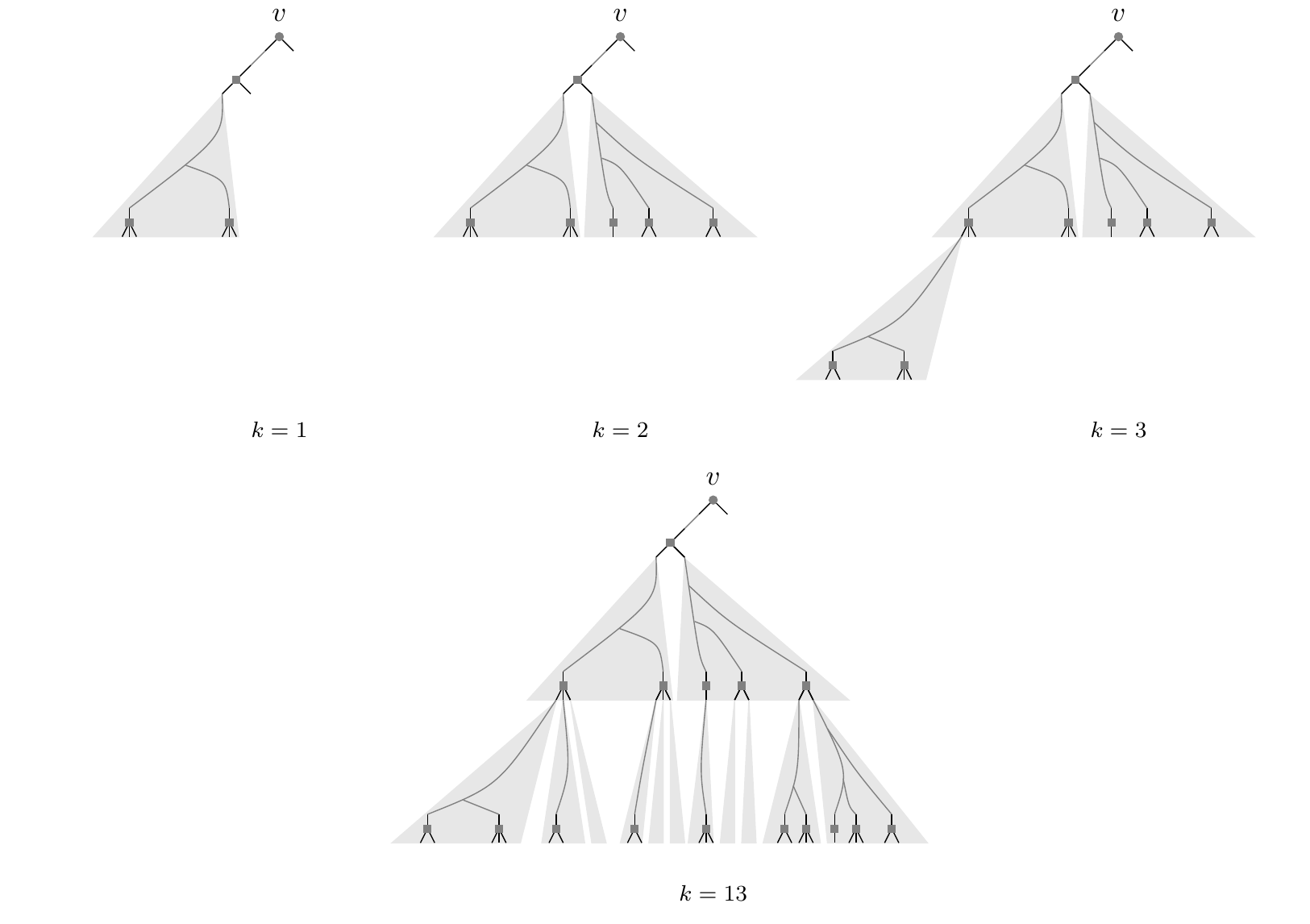}
\caption{Illustration of ``multiple stages'' in the exploration process.}
\label{fig:stab+multistages}
\end{figure}

In the next section we analyze the exploration process, where we will see, after performing a certain stage and depending on how large the channel entropy is, how the number of active check half-edges behaves. This has an important consequence for the analysis of MAP decoding.  


\subsection{Cycles with Degree-Two Variable Nodes}
\label{subsec:cycle+deg2}

We define a stochastic process $(A_k)_{k \geqslant 0}$ characterizing the exploration process that evolves in stage, where $A_k$ is the number of active check half-edges at stage $k$. In particular, at stage $0$ and for each $i$, $A_0$ is equal to $i-1$ with probability $\rho_{i,n}$. By the construction of the exploration process in Section~\ref{subsec:exploration}, at stage $k \geqslant 1$, \textit{either} there exists at least one active check half-edge remaining at stage $k-1$, \ie, $A_{k-1} \geqslant 1$, so that we perform $l$ steps of exploration, \textit{or} $A_{k-1} = 0$ so that we will immediately restart the exploration process from a degree-two variable node with neutral half-edges only. In the first case we denote by $Z_k$ the increment of active check half-edges, while in the second case $Z_k-1$ plays the same role as $A_0$ and the process restarts at stage $k$. Therefore, with $A_{-1} \coloneqq 0$, the stochastic process $(A_k)_{k \geqslant 0}$ satisfies the recursion
\begin{equation}
\label{eqn:A+process}
A_k = A_{k-1} + Z_k - 1.
\end{equation}
Notice that in the first case where $A_{k-1} \geqslant 1$, the ``$-1$ term'' in (\ref{eqn:A+process}) takes into account the fact that we need to change the status of the check half-edge we started at stage $k-1$ from active to explored. Further define 
\begin{equation}
\label{eqn:first+zero}
K = \inf\{k \, | \, A_k = 0\},
\end{equation}
which indicates the first stage that the process $(A_k)_{k \geqslant 0}$ returns to $0$. We set $K$ to be $\infty$ if no such $k$ exists.

To see why $\bhatta(\cc) \lambda^{\prime}(0) \rho^{\prime} (1)$ is a meaningful parameter to analyze, we recall that $\error(\cc^{\varoast l}) \leqslant \bhatta(\cc)^l$ for any $l \geqslant 1$. Given a design degree distribution pair $(\lambda, \rho)$ such that $\bhatta(\cc) \lambda^{\prime}(0) \rho^{\prime} (1) < 1$, it holds that $\error(\cc^{\varoast l}) (\lambda^{\prime}(0) \rho^{\prime} (1))^l < 1$. As we will see next, this condition gives rise to an operational meaning to the process $(A_k)_{k \geqslant 0}$. In the following and for notational simplicity, we define 
\begin{equation*}
\ttd = \ttr-1, 
\end{equation*}
where we recall from Section~\ref{subsec:prelim} that $\ttr$ is the maximum check-node degree.

\begin{theorem}
\label{thm:subcritical}
Given a design degree distribution pair $(\lambda,\rho)$, consider transmission over a BMS channel that is characterized by its $L$-density $\cc$. If $\bhatta(\cc) \lambda^{\prime}(0) \rho^{\prime} (1) < 1$, then there exist constants $\gamma = \gamma(\cc,\lambda,\rho) \in (0,1)$ and $\delta = \delta(\cc,\lambda,\rho) > 0$ such that for any integer $l \geqslant 1$ and any $a \in (0,1/2)$, it holds for all sufficiently large $n$ that
\begin{equation}
\PP(K > n^a) \leqslant \ee^{-n^a \delta^2 \gamma^{2l}/\ttd^{2l}}.
\end{equation}
\end{theorem}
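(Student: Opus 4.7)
The plan is to view $(A_k)_{k \geq 0}$ as a random walk with negative drift and control the hitting time $K$ via Azuma--Hoeffding concentration. Writing $\gamma_0 = \bhatta(\cc) \lambda^{\prime}(0) \rho^{\prime}(1) < 1$ and $\ttd = \ttr - 1$, the argument reduces to two estimates on the increments $Z_k$: a conditional-mean (drift) bound $\EE[Z_k \mid \cF_{k-1}] \leq \gamma_0^l (1 + o(1))$ for $k \leq n^a$, and a deterministic upper bound $Z_k \leq \ttd^l$. Here $\cF_k$ denotes the $\sigma$-algebra generated by the first $k$ stages of the exploration process.

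For the drift bound, which is the main technical step, I would analyze one stage of the exploration as the growth of a rooted subtree of depth $2l$ in the Tanner graph. Conditioning on $\cF_{k-1}$ and using that we have revealed at most $\ttd^l n^a = o(n)$ half-edges to date (since $a < 1/2$ and $\ttd,l$ are fixed), the configuration model gives that each newly probed variable (resp.\ check) half-edge belongs to a degree-$i$ node with probability $\lambda_i + O(n^{a-1})$ (resp.\ $\rho_i + O(n^{a-1})$). Only degree-two variable nodes let the exploration branch, giving a factor $\lambda^{\prime}(0)$ per layer, and each surviving branch then proceeds to a fresh check node contributing on average $\rho^{\prime}(1)$ new active check half-edges. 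Iterating over $l$ layers, the expected number of paths reaching depth $2l$ is at most $(\lambda^{\prime}(0) \rho^{\prime}(1))^l (1+o(1))$. Although the log-likelihood sums along distinct paths are correlated (they share variable nodes near the root), the marginal probability that any single path's sum is non-positive—with the coin flip rule resolving ties at zero—is exactly $\error(\cc^{\varoast l})$ by definition of the error probability functional. Linearity of expectation, applied to the sum of these indicators, combined with the Bhattacharyya bound $\error(\cc^{\varoast l}) \leq \bhatta(\cc)^l$, yields the claimed drift bound.

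With drift and boundedness in hand, the Doob martingale $M_k = \sum_{j=1}^k (Z_j - \EE[Z_j \mid \cF_{j-1}])$ has increments bounded in absolute value by $\ttd^l$. Since $A_0 \leq \ttd$ and the event $\{K > n^a\}$ forces $\sum_{j=1}^{n^a}(Z_j - 1) \geq -\ttd$, the drift bound yields $M_{n^a} \geq n^a(1-\gamma_0^l)/2$ for all sufficiently large $n$, and Azuma--Hoeffding then gives $\PP(K > n^a) \leq \exp\bigl(-n^a(1-\gamma_0^l)^2/(8 \ttd^{2l})\bigr)$. To recover the stated form $\exp(-n^a \delta^2 \gamma^{2l}/\ttd^{2l})$, it suffices to select constants with $\delta^2 \gamma^{2l} \leq (1-\gamma_0^l)^2/8$ for every $l \geq 1$; since $(1-\gamma_0^l)^2 \geq (1-\gamma_0)^2$, the choice $\gamma = 1/2$ and $\delta = (1-\gamma_0)/\sqrt{2}$ works.

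The principal obstacle is the drift estimate, specifically controlling the $o(1)$ correction to the configuration-model approximation uniformly across all $n^a$ stages. In each stage one must simultaneously absorb the depletion of unexplored half-edges and the small probability that cycles form within the revealed subtree—the latter of which would break the tree-based branching analysis. Both effects contribute only $O((\ttd^l)^2 n^a / n) = o(1)$ per stage for $a < 1/2$, and a union bound over stages absorbs them into the $o(1)$ error term; this is what forces the ``sufficiently large $n$'' qualifier and why the result is restricted to $a < 1/2$ rather than allowing $a = 1/2$ or larger.
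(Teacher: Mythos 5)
Your argument is correct, and it takes a route that differs from the paper's in a way that is arguably cleaner. The paper first bounds $\PP(K>n^a)\leq\PP(\sum_{i\leq n^a}Z_i\geq n^a)$, then couples $(A_k)_{k\leq n^a}$ with an i.i.d.\ branching process $(\tilde A_k)$ and applies the classical Hoeffding inequality to $\sum\tilde Z_i$; the coupling step contributes an \emph{additive} $o(1)$ error (of order $\cO(n^{2a-1})$ coming from the configuration model), which formally dominates $\ee^{-\Theta(n^a)}$ for large $n$, so the written argument does not quite deliver the stated tail without further massaging. Your route---centering $Z_k$ by its conditional mean, passing to the Doob martingale $M_k=\sum_{j\leq k}(Z_j-\EE[Z_j\mid\rrF_{j-1}])$, and invoking Azuma--Hoeffding---absorbs the depletion error into a \emph{multiplicative} $(1+o(1))$ correction to the drift $\gamma_0^l$ rather than an additive error in the final probability, and thus sidesteps that issue. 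Both proofs rest on the same drift estimate $\EE[Z_k\mid\rrF_{k-1}]\leq\error(\cc^{\varoast l})(\lambda'(0)\rho'(1))^l(1+o(1))\leq\gamma_0^l(1+o(1))$, and you are right that cycle collisions during the exploration only kill paths and hence only reduce $Z_k$, so the tree-based estimate is a safe one-sided upper bound. Your final selection $\gamma=1/2$ and $\delta=(1-\gamma_0)/\sqrt{2}$ also honors the theorem's quantifier order (the constants must be independent of $l$), which the paper's own choice $\gamma^l=\error(\cc^{\varoast l})(\lambda'(0)\rho'(1))^l$ does not literally satisfy since it makes $\gamma$ vary with $l$.
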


Theorem~\ref{thm:subcritical} reveals that with probability tending to $1$ as the blocklength tends to $\infty$, the first stage that the process $(A_k)_{k \geqslant 0}$ returns to $0$ cannot be later than $\cO(n^a)$ for any $a \in (0,1/2)$, as long as $\bhatta(\cc) \lambda^{\prime}(0) \rho^{\prime} (1) < 1$ holds. The proof of this theorem is given in Appendix~\ref{appdix:stab+map+cycle}.

The rest of this section focuses on the case $\bhatta(\cc) \lambda^{\prime}(0) \rho^{\prime} (1) > 1$. First we notice that since $\bhatta(\cc) \geqslant \error(\cc)$, $\error(\cc) \lambda^{\prime}(0) \rho^{\prime} (1) > 1$ implies $\bhatta(\cc) \lambda^{\prime}(0) \rho^{\prime} (1) > 1$ while the opposite direction does not necessarily holds. Consequently, $\bhatta(\cc) \lambda^{\prime}(0) \rho^{\prime} (1) > 1$ gives a more general condition on the pair $(\lambda, \rho)$. On the other hand, $\error(\cc^{\varoast \ell})$ behaves essentially like a constant multiple of $\bhatta(\cc) ^\ell$ so that it is natural to expect that there exists an $l = l(\cc, \lambda, \rho) \in \NN$ such that $\bhatta(\cc) \lambda^{\prime}(0) \rho^{\prime}(1) > 1$ implies $\error(\cc^{\varoast l}) (\lambda^{\prime}(0) \rho^{\prime}(1))^l > 1$. As we shall see, contrary to Theorem~\ref{thm:subcritical}, this condition implies that with strictly positive probability, the first stage that the process $(A_k)_{k \geqslant 0}$ returns to $0$ is linear in the blocklength $n$.

We start by introducing a notion of residual degree distribution which quantifies how degree distribution changes after some very small perturbation on the ``original'' degree distribution has been added, and this notion will be used for subsequent analysis of the exploration process. We recall from Section~\ref{subsec:prelim} that $L$/$R$ is the design degree distribution from a node perspective corresponding to $\lambda$/$\rho$, and $r = 1 - L^{\prime}(1)/R^{\prime}(1)$ is the design rate. Meanwhile, the generating function $L(x)$/$R(x)$ has finite maximum degree $\ttl$/$\ttr$.

\begin{definition}[Residual Degree Distribution]
\label{defn:residual}
Given an $\varepsilon = \varepsilon(\cc,\lambda,\rho) > 0$, let $\{ \phi_i | 2 \leqslant i \leqslant \ttl \}$ and $\{ \psi_i | 2 \leqslant i \leqslant \ttr \}$ be such that $0 \leqslant \phi_i \leqslant i L_i$ and $0 \leqslant \psi_i \leqslant i R_i (1 - r)$ and satisfy $\sum_{i=2}^{\ttl} \phi_i = \sum_{i=2}^{\ttr} \psi_i \leqslant \varepsilon$. For a blocklength $n \in \NN$ and for every $i$, define
\begin{align}
\label{eqn:hat+lambda}
\hat{\lambda}_{i,n} &= \frac{i L_{i,n} - \phi_i}{L^{\prime} _n (1)}; \\ 
\label{eqn:hat+rho}
\hat{\rho}_{i,n} &= \frac{i R_{i,n} - \psi_i/(1-r_n)}{R^{\prime} _n (1)}.
\end{align}
Further define $\hat{\lambda}_{1,n} = 1 - \sum_{i=2}^{\ttl} \hat{\lambda}_{i,n}$ and define $\hat{\rho}_{1,n} = 1 - \sum_{i=2}^{\ttr} \hat{\rho}_{i,n}$. Denote by $\hat{\lambda}_n$ and $\hat{\rho}_n$ the variable and check \textit{residual degree distributions from an edge perspective}, respectively. The variable and check residual \textit{design} degree distributions from an edge perspective, $\hat{\lambda}$ and $\hat{\rho}$, are defined similarly.
\end{definition}

The next lemma and theorem make the above discussion precise and the proofs are given in Appendix~\ref{appdix:stab+map+cycle}. Let us remind the reader that in (\ref{eqn:hat+lambda}) and (\ref{eqn:hat+rho}), the residual degree distributions are normalized with respect to the exact number of edges for the ``original'' degree distributions $\lambda_n$ and $\rho_n$. As we shall see later, these normalizations simplify our calculations when we analyze the exploration process.

\begin{lemma}
\label{lem:bhatta+error}
Suppose that $\bhatta(\cc) \lambda^{\prime}(0) \rho^{\prime} (1) > 1$. Then there exist $\gamma = \gamma(\cc, \lambda, \rho) > 1$, $l = l(\cc, \lambda, \rho) \in \NN$, and $\varepsilon = \varepsilon(\cc, \lambda, \rho) > 0$ such that the degree distribution pairs $(\lambda_n,\rho_n)$ satisfy
\begin{equation}
\label{eqn:lem:E+ver+B}
\error(\cc^{\varoast l}) \big( \hat{\lambda}^{\prime} _n (0) \hat{\rho}^{\prime} _n (1) \big)^l \geqslant \gamma^l,
\end{equation}
for all large enough blocklength $n$.
\end{lemma}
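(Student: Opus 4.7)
My plan is to bound the left-hand side of (\ref{eqn:lem:E+ver+B}) from below by chaining three ingredients: a sharp exponential asymptotic for $\error(\cc^{\varoast l})$ as $l \to \infty$, the coefficient-wise convergence of $(\lambda_n,\rho_n)$ to $(\lambda,\rho)$, and a direct perturbation estimate comparing $(\hat{\lambda}_n,\hat{\rho}_n)$ of Definition~\ref{defn:residual} with $(\lambda_n,\rho_n)$. Starting from the strict inequality $\bhatta(\cc)\lambda^{\prime}(0)\rho^{\prime}(1) > 1$, I would pick $\eta > 0$ so that $\bhatta(\cc)\lambda^{\prime}(0)\rho^{\prime}(1) \geqslant (1+\eta)^{3}$ and then arrange each of the three ingredients to cost at most one factor of $1+\eta$; the final $\gamma$ will be $1+\eta$.

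The main technical step is to prove $\lim_{l\to\infty}\error(\cc^{\varoast l})^{1/l} = \bhatta(\cc)$. The plan is to exploit the symmetry $\cc(y) = \ee^{y}\cc(-y)$: under this symmetry the cumulant-generating function $\Lambda(s) = \ln\EE[\ee^{sL} \mid X = +1]$ of the log-likelihood ratio satisfies $\Lambda(0) = \Lambda(-1) = 0$ and $\Lambda(s) = \Lambda(-1-s)$, and is convex, so its minimum on $[-1,0]$ is attained at $s = -1/2$ with value $\ln\bhatta(\cc)$. Chernoff at $s = -1/2$ gives the easy upper bound $\error(\cc^{\varoast l}) \leqslant \bhatta(\cc)^{l}$, while an appropriate exponential tilting / change-of-measure at $s = -1/2$ yields the matching exponential lower bound via a Cramer-type large-deviation argument. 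This is the main obstacle of the proof, because care must be taken with densities that have atomic parts (\eg, $\Delta_{0}$ or $\Delta_{\infty}$ in the BEC case); one possibility is to verify the asymptotic separately on the discrete channel families and then invoke degradation / monotonicity of $\bhatta$ and $\error$ to extend it. With this asymptotic in hand, I can choose $l = l(\cc) \in \NN$ such that $\error(\cc^{\varoast l})^{1/l} \geqslant \bhatta(\cc)/(1+\eta)$.

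The second ingredient is immediate: $\lambda_n \longrightarrow_{\rrD} \lambda$ and $\rho_n \longrightarrow_{\rrD} \rho$ imply coefficient-wise convergence, whence $\lambda_n^{\prime}(0) = \lambda_{2,n} \to \lambda_{2} = \lambda^{\prime}(0)$ and $\rho_n^{\prime}(1) = \sum_{i=2}^{\ttr}(i-1)\rho_{i,n} \to \rho^{\prime}(1)$, so $\lambda_n^{\prime}(0)\rho_n^{\prime}(1) \geqslant \lambda^{\prime}(0)\rho^{\prime}(1)/(1+\eta)^{1/2}$ for all large enough $n$.

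The third ingredient is a brief perturbation calculation. From (\ref{eqn:hat+lambda}) I read off $\hat{\lambda}_{n}^{\prime}(0) = \hat{\lambda}_{2,n} = \lambda_{2,n} - \phi_{2}/L_{n}^{\prime}(1)$, and a short manipulation of (\ref{eqn:hat+rho}) yields $\hat{\rho}_{n}^{\prime}(1) = \rho_{n}^{\prime}(1) - \sum_{i=2}^{\ttr}(i-1)\psi_{i}/[(1-r_{n})R_{n}^{\prime}(1)]$. Because $\sum_{i}\phi_{i},\sum_{i}\psi_{i} \leqslant \varepsilon$ and $L_{n}^{\prime}(1)$, $R_{n}^{\prime}(1)$, and $1-r_{n}$ are bounded away from zero for large $n$ (thanks to fixed $\ttl,\ttr$ and $r<1$), choosing $\varepsilon = \varepsilon(\cc,\lambda,\rho) > 0$ small enough forces $\hat{\lambda}_{n}^{\prime}(0)\hat{\rho}_{n}^{\prime}(1) \geqslant \lambda_{n}^{\prime}(0)\rho_{n}^{\prime}(1)/(1+\eta)^{1/2}$. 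Chaining the three estimates gives $\error(\cc^{\varoast l})^{1/l}\hat{\lambda}_{n}^{\prime}(0)\hat{\rho}_{n}^{\prime}(1) \geqslant (1+\eta)^{3}/(1+\eta)^{2} = 1+\eta$, and raising to the $l$-th power produces (\ref{eqn:lem:E+ver+B}) with $\gamma = 1+\eta$.
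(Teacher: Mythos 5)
Your proof follows the same three-part decomposition as the paper's: an exponential lower bound on $\error(\cc^{\varoast l})$ matching $\bhatta(\cc)^l$ up to a subexponential loss, a perturbation estimate showing $\hat\lambda_n^{\prime}(0)\hat\rho_n^{\prime}(1)$ is within an $\varepsilon$-controlled additive error of $\lambda_n^{\prime}(0)\rho_n^{\prime}(1)$, and coefficient-wise convergence of $(\lambda_n,\rho_n)$ to $(\lambda,\rho)$, all chained with a $(1+\eta)$ budget and $\gamma = 1+\eta$. The single place you diverge is the first ingredient: you propose to re-prove $\lim_{l\to\infty}\error(\cc^{\varoast l})^{1/l} = \bhatta(\cc)$ from scratch via a Cram\'{e}r / change-of-measure argument tilted at $s=-1/2$, whereas the paper simply invokes a quantitative estimate $\error(\cc^{\varoast l}) \geqslant \beta\,\bhatta^l$, valid for any $\bhatta < \bhatta(\cc)$ with an explicit constant $\beta = \beta(\cc,\bhatta)$, from \cite{RU08}. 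For the lemma the two are interchangeable, but the citation sidesteps the technical issue you correctly flag: for $L$-densities with atoms at $0$ or $+\infty$ the open-set Cram\'{e}r lower bound on $\PP(\sum_i L_i < 0)$ can be vacuous (for the BEC it is identically zero, and all of $\error(\cc^{\varoast l})$ sits on the atom at $0$), so the atom must be handled separately or the channel approximated by a degraded/upgraded discrete one, as you sketch. Apart from this choice of reference versus re-derivation, your argument and the paper's coincide, and your perturbation formulas $\hat\lambda_n^{\prime}(0) = \lambda_{2,n} - \phi_2/L_n^{\prime}(1)$ and $\hat\rho_n^{\prime}(1) = \rho_n^{\prime}(1) - \sum_{i\geqslant 2}(i-1)\psi_i/[(1-r_n)R_n^{\prime}(1)]$ check out against Definition~\ref{defn:residual}.
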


Theorem~\ref{thm:supercritical} below is one of the main results of this Section~\ref{subsec:cycle+deg2}. It characterizes the exploration process in Section~\ref{subsec:exploration} when the condition $\bhatta(\cc) \lambda^{\prime}(0) \rho^{\prime} (1) > 1$ holds and also reveals that the parameter $\bhatta(\cc) \lambda^{\prime}(0) \rho^{\prime} (1)$ has an operational significance to the decoding analysis.

\begin{theorem}
\label{thm:supercritical}
Given a design degree distribution pair $(\lambda,\rho)$, consider transmission over a BMS channel characterized by its $L$-density $\cc$. If $\bhatta(\cc) \lambda^{\prime}(0) \rho^{\prime} (1) > 1$ holds, then there exist constants $\gamma = \gamma(\cc,\lambda,\rho) > 1$, $l = l(\cc, \lambda, \rho) \in \NN$, and $\epsilon = \epsilon(\cc, \lambda, \rho) > 0$ such that for any $\delta \in (0,1)$ satisfying $\bar{\delta}\gamma^l > 1$ with $\bar{\delta} \coloneqq 1 - \delta$, it holds for all large enough blocklength $n$ that
\begin{equation}
\label{eqn:exp+bound+Ak}
\PP\big( A_{\epsilon n} \leqslant (\bar{\delta} \gamma^l - 1) \epsilon n\big) \leqslant \ee^{- \epsilon n \delta^2 \gamma^l / (2 \ttd^l)}.
\end{equation}
Further let $K = \inf\{k \, | \, A_k = 0\}$. Then there exist a $\kappa = \kappa(\cc,\lambda,\rho) \in \NN$ and a constant $c_{\kappa} = c_{\kappa}(\cc,\lambda,\rho) \in (0,1)$ such that
\begin{equation}
\label{eqn:exp+bound+K}
\PP( K \leqslant \epsilon n \, | \, K > \kappa ) \leqslant c_{\kappa} \big( 1 - \ee^{- (\epsilon n - \kappa) \delta^2 \gamma^l / (2 \ttd^l)} \big),
\end{equation}
for all large enough blocklength $n$.
\end{theorem}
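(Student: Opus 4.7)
The plan is to exploit the supercritical branching structure guaranteed by Lemma~\ref{lem:bhatta+error}. I first extract constants $\gamma > 1$, $l \in \NN$, and $\varepsilon > 0$ such that $\error(\cc^{\varoast l}) (\hat{\lambda}_n^{\prime}(0) \hat{\rho}_n^{\prime}(1))^l \geqslant \gamma^l$ holds for every residual degree distribution obtained by removing at most $\varepsilon n$ edges. I then choose $\epsilon > 0$ small enough that running the exploration process for $\epsilon n$ stages consumes at most $\varepsilon n$ half-edges; since each stage touches at most $\cO(\ttd^l)$ half-edges, taking $\epsilon = \varTheta(\varepsilon/\ttd^l)$ suffices. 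This guarantees that Lemma~\ref{lem:bhatta+error} remains applicable throughout the exploration.

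For the first assertion, I introduce the stage filtration $(\cF_k)_{k \geqslant 0}$ containing all channel log-likelihood ratios and graph edges revealed through stage $k$. A direct tracking of the one-stage exploration — $l$ layers of breadth-first traversal followed by the sign-of-sum test at step $l$ — shows that the conditional expected number of new active check half-edges factors as the product of (i) the per-layer probability $\hat{\lambda}_n^{\prime}(0)$ of continuing through a degree-two variable node, (ii) the per-layer branching factor $\hat{\rho}_n^{\prime}(1)$ through check nodes, and (iii) the survival probability $\error(\cc^{\varoast l})$ at the final step. The hypothesis of Lemma~\ref{lem:bhatta+error} therefore yields $\EE[Z_k \mid \cF_{k-1}, A_{k-1} \geqslant 1] \geqslant \gamma^l$ uniformly for $k \leqslant \epsilon n$. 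Since each stage involves at most $\ttd^l$ descendants, the martingale differences $Z_k - \EE[Z_k \mid \cF_{k-1}]$ are bounded by $\ttd^l$ in absolute value, and Azuma--Hoeffding applied to the sum $\sum_{k=1}^{\epsilon n}(Z_k - \EE[Z_k \mid \cF_{k-1}])$, combined with the deterministic drift lower bound $(\gamma^l - 1)\epsilon n$ from (\ref{eqn:A+process}), yields the exponential bound on $\PP(A_{\epsilon n} \leqslant (\bar{\delta}\gamma^l - 1)\epsilon n)$.

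For the second assertion, I work directly with the hitting time $K$. Conditioning on $\{K > \kappa\}$ ensures $A_\kappa \geqslant 1$, and I select $\kappa$ large enough that, with positive probability, $A_\kappa$ already exceeds a threshold $h = h(\cc, \lambda, \rho)$ sufficient to put the restarted process $(A_{\kappa+j})_{j \geqslant 0}$ in the regime where the first part applies. The constant $c_\kappa$ is then chosen as a uniform upper bound on the eventual-death probability of the supercritical exploration conditioned on $\{K > \kappa\}$; a standard supercritical branching argument (Galton--Watson survival with mean offspring $\gamma^l > 1$) guarantees $c_\kappa < 1$. Splitting according to the value of $A_\kappa$ and applying part (1) to the shifted process $(A_{\kappa + j})_{j \geqslant 0}$ over the window of length $\epsilon n - \kappa$ shows that the probability of dying within this window is at most an exponentially small function of $\epsilon n - \kappa$, which yields the factor $1 - \ee^{-(\epsilon n - \kappa)\delta^2\gamma^l/(2\ttd^l)}$ in (\ref{eqn:exp+bound+K}).

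The main obstacle is maintaining uniformity of the drift estimate $\gamma^l$ across all stages despite depletion of half-edges. Without the residual degree distribution framework, the branching mean could drift below $1$ as stages accumulate, invalidating the supercritical comparison. The normalization in (\ref{eqn:hat+lambda})--(\ref{eqn:hat+rho}) together with the choice $\epsilon = \varTheta(\varepsilon / \ttd^l)$ keeps the residual pair $(\hat{\lambda}_n, \hat{\rho}_n)$ within the regime of Lemma~\ref{lem:bhatta+error} throughout the exploration. A secondary technical subtlety is the combined form of the bound in (\ref{eqn:exp+bound+K}): reconciling the supercritical survival probability $1 - c_\kappa$, which is intrinsically bounded away from $1$, with the finite-time exponential control coming from part (1) requires careful decomposition according to the state $A_\kappa$, and this bookkeeping is the most delicate part of the argument.
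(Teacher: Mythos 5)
Your high-level architecture matches the paper's: extract $\gamma>1$, $l$, $\varepsilon$ from Lemma~\ref{lem:bhatta+error}, set $\epsilon = \varepsilon/\ttd^l$ so that the residual degree distributions remain valid throughout the first $\epsilon n$ stages, and derive the uniform drift bound $\EE[Z_k \mid \rrF_{k-1}] \geqslant \gamma^l$. However, there is a real gap in the concentration step. Standard Azuma--Hoeffding applied to the centered sum $\sum_{i=1}^{\epsilon n}(Z_i - \EE[Z_i\mid \rrF_{i-1}])$ with differences bounded by $\ttd^l$ gives a bound of the form $\exp\!\big(-\epsilon n\,\delta^2\gamma^{2l}/(2\ttd^{2l})\big)$, not the stated $\exp\!\big(-\epsilon n\,\delta^2\gamma^l/(2\ttd^l)\big)$. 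Since $1 < \gamma < \ttd$, the ratio $\gamma^{l}/\ttd^{l}$ is strictly less than one, so $\gamma^{2l}/\ttd^{2l} < \gamma^l/\ttd^l$ and your exponent is strictly weaker. The paper reaches the sharper exponent by running a convexity-based Chernoff argument: because $e^{-sz}$ is convex, any random variable $Z$ with $0\leqslant Z\leqslant \ttd^l$ and $\EE[Z]\geqslant\gamma^l$ satisfies $\EE[e^{-sZ}]\leqslant 1-\gamma^l/\ttd^l + e^{-s\ttd^l}\gamma^l/\ttd^l$ (this is the content of the primal/dual linear program $\linprog^{\primlp}$--$\linprog^{\duallp}$), and optimizing over $s$ yields the denominator $\ttd^l$ rather than $\ttd^{2l}$. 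To prove the theorem as stated you must use this sharper MGF bound, not the bounded-differences inequality.

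For (\ref{eqn:exp+bound+K}), your plan to split on $A_\kappa$ exceeding a threshold $h$ and then couple the shifted process to a supercritical Galton--Watson process is more machinery than is needed, and it is not clear it produces the specific form $c_\kappa\big(1 - e^{-(\epsilon n - \kappa)\delta^2\gamma^l/(2\ttd^l)}\big)$. The paper instead applies a direct union bound $\PP(K\leqslant k\mid K>\kappa)\leqslant \sum_{j=\kappa+1}^{k}\PP\big(\sum_{i=1}^{j}Z_i\leqslant j\mid K>\kappa\big)$, observes that the conditioning on $\{K>\kappa\}$ only \emph{restricts} the supports of the early $Z_i$'s (so any feasible conditional law is also feasible for the unconditioned LP, hence the same MGF bound applies), and then sums the resulting geometric series; the $\kappa$ is then chosen large enough that the geometric tail constant $c_\kappa$ lands in $(0,1)$. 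This avoids any Galton--Watson coupling or $A_\kappa$-bookkeeping entirely, and crucially it inherits the same exponent $\delta^2\gamma^l/(2\ttd^l)$ from part one, which is exactly what makes the closed-form $c_\kappa$ come out as $\exp(-(\kappa+1)\delta^2\gamma^l/(2\ttd^l))/(1-\exp(-\delta^2\gamma^l/(2\ttd^l)))$. You should replace the GW-survival heuristic with this union-bound/geometric-series argument, and in particular argue carefully why the conditioning does not break the LP bound (that observation is the nontrivial step, and your sketch omits it).
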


Theorem~\ref{thm:supercritical} illustrates that the number of active check half-edges at stage $\epsilon n$ is with high probability at least as large as a multiple of $\gamma^l \epsilon n$, which is linear in the blocklength $n$. In fact the convergence rate is exponentially fast in $n$. Furthermore, with probability strictly smaller than $1$, the first stage that the process $(A_k)_{k \geqslant 0}$ returns to $0$ is before stage $\epsilon n$. This in particular implies that with strictly positive probability, the first stage of returning to $0$ is after stage $\epsilon n$, which means that the variable node $v$ of degree two in stage $0$, when we started the exploration process, is connected to at least one active check node in stage $\epsilon n$.

Now if we connect the remaining variable half-edge $e_2$ belonging to the variable node $v$ uniformly at random to an unexplored check half-edge in stage $\epsilon n + 1$, then with positive probability the connected check half-edge is an active one in stage $\epsilon n$; see Figure~\ref{fig:stab+cycle} for an illustration.

\begin{figure}[H]
\centering
\includegraphics{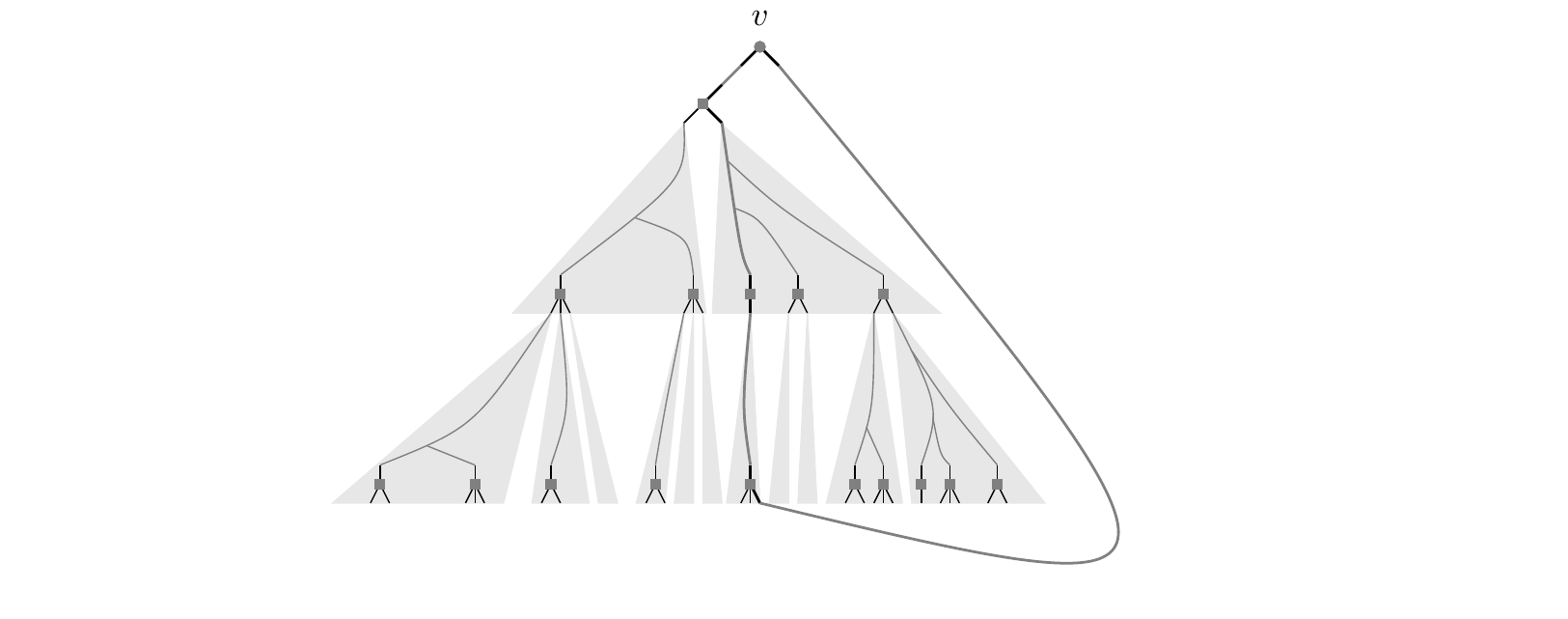}
\caption{Illustration of the ``final step'' of the exploration process.}
\label{fig:stab+cycle}
\end{figure}

Notice also that the sum of log-likelihood ratios along any path between the check half-edge $e_1$ and any active check half-edge at stage $\epsilon n$ is negative. Furthermore, if the variable half-edge $e_2$ of $v$ is connected to one of those active check half-edges at stage $\epsilon n$, then a cycle involving $v$ is created. The consequence of this ``final step'' is made precise by the following theorem.

\begin{theorem}
\label{thm:cycle}
Given a design degree distribution pair $(\lambda,\rho)$, consider transmission over a BMS channel characterized by its $L$-density $\cc$. Let $v$ be an arbitrary variable node of degree two and denote by $\cC_v$ be the event that $v$ lies on a bipartite cycle such that all variable nodes belonging to the cycle are of degree two and that the corresponding sum of log-likelihood ratios is negative. If $\bhatta(\cc) \lambda^{\prime}(0) \rho^{\prime}(1) > 1$, then there exists a constant $c_{\mathrm{cyc}} = c_{\mathrm{cyc}}(\cc, \lambda, \rho) > 0$ such that $\liminf_{n \to \infty} \PP(\cC_v) \geqslant c_{\mathrm{cyc}}$.
\end{theorem}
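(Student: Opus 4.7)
The plan is to run the exploration process of Section~\ref{subsec:exploration} from $v$ through the half-edge $e_1$ only, invoke Theorem~\ref{thm:supercritical} to guarantee a linear number of surviving active check half-edges at stage $\epsilon n$, and then close the cycle by a single uniform match of the still-neutral half-edge $e_2$ to the pool of unexplored check half-edges. Since Theorem~\ref{thm:supercritical} already does the hard work of tracking the process $(A_k)_{k \geqslant 0}$, only a geometric argument and a configuration-model symmetry need to be supplied on top.

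By the very construction of Step~$t$, any active check half-edge alive at stage $\epsilon n$ is the endpoint of a unique breadth-first path in the Tanner graph starting at $e_1$, visiting only variable nodes of degree two, and carrying a sum of log-likelihood ratios that is non-positive in the effective sense used by the exploration (ties at $0$ are broken by the independent coin $B_{\cP}$). Combining inequality (\ref{eqn:exp+bound+Ak}) with the non-extinction bound (\ref{eqn:exp+bound+K}) of Theorem~\ref{thm:supercritical}, and using that $\PP(K > \kappa)$ is uniformly positive since $A_0 \geqslant 1$ with probability one, I would extract constants $c_1, c_2 > 0$ with $\PP(A_{\epsilon n} \geqslant c_1 n) \geqslant c_2$ for all $n$ large enough.

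Next I would reveal the match of $e_2$ to a uniformly chosen unexplored check half-edge. A deferred-decisions argument for the configuration model shows that, conditioned on the $\sigma$-algebra of everything already explored from $e_1$ together with the revealed log-likelihood ratios and coin flips $\{B_{\cP}\}$, the match of $e_2$ is uniform over the pool of remaining unexplored check half-edges. The size of that pool is at most $n R^{\prime}_n(1) = \Theta(n)$, so on the event $\{A_{\epsilon n} \geqslant c_1 n\}$ the conditional probability that $e_2$ lands on one of the $A_{\epsilon n}$ active half-edges is at least a constant $c_3 > 0$ independent of $n$. On this joint event a cycle containing $v$ is created that otherwise consists exclusively of degree-two variable nodes, and whose log-likelihood sum equals $L_v$ plus a non-positive quantity. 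Since $L_v$ was revealed at Stage~$0$ independently of everything subsequent, and since the hypothesis $\bhatta(\cc)\lambda^{\prime}(0)\rho^{\prime}(1) > 1$ forces the channel to be non-trivial, a further uniformly positive probability guarantees that the total sum is strictly negative in the effective sense used to define $\cC_v$, producing the claim $\liminf_{n\to\infty}\PP(\cC_v)\geqslant c_{\mathrm{cyc}}>0$.

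The main obstacle will be the configuration-model bookkeeping in the third step: one must verify rigorously that the $\sigma$-algebra generated by the adaptive exploration leaves the match of $e_2$ exchangeable over the unexplored check-node sockets, so that the ``active fraction'' seen from $e_2$ really equals $A_{\epsilon n}$ divided by the number of unexplored check half-edges. Once this exchangeability is established, the result follows by chaining the three uniform-in-$n$ lower bounds above, which is routine.
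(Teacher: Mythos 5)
Your proposal is correct and follows essentially the same route as the paper's proof: run the exploration from $e_1$, invoke Theorem~\ref{thm:supercritical}, close the cycle by matching $e_2$ to an active half-edge, and charge the event that $L_v$ is negative by the error functional $\error(\cc)$. You are in fact slightly more explicit than the printed proof in one respect — the paper's first displayed inequality passes from $\PP\big(\{A_{\epsilon n}>(\bar{\delta}\gamma^l-1)\epsilon n\}\cap\{K>\epsilon n\}\big)$ to $\PP(\cC_v)$ with only the $\error(\cc)$ factor, apparently eliding the conditional probability that $e_2$ lands on one of the $\varTheta(n)$ active check half-edges, whereas your deferred-decisions step supplies exactly this uniform-in-$n$ lower bound.
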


\begin{proof}
Since we get a desired cycle if $e_2$ is connected to an active check half-edge in stage $\epsilon n$ and if the log-likelihood ratio of the variable node $v$ is negative, it follows that
\begin{align*}
\PP(\cC_v) &\geqslant \error(\cc) \PP \big( \{A_{\epsilon n} > (\bar{\delta}\gamma^l-1) \epsilon n \} \cap \{ K > \epsilon n\} \big) \\
&\geqslant \error(\cc) \PP(K > \kappa) \PP \big( \{A_{\epsilon n} > (\bar{\delta}\gamma^l-1) \epsilon n \} \cap \{ K > \epsilon n\}  \, | \, \{ K > \kappa\} \big),
\end{align*}
where $\gamma$, $\epsilon$, $l$ and $\kappa$ are specified in Theorem~\ref{thm:supercritical}. Then it holds that
\begin{align*}
p_{\cC_v} &\coloneqq \PP \big( \{A_{\epsilon n} \leqslant (\bar{\delta}\gamma^l-1) \epsilon n \} \cup \{ K \leqslant \epsilon n\}  \, | \, \{ K > \kappa \} \big) \\
&\leqslant \PP(A_{\epsilon n} \leqslant (\bar{\delta}\gamma^l-1) \epsilon n \, | \, K > \kappa ) + \PP(K \leqslant \epsilon n \, | \, K > \kappa) \\
&\leqslant \PP(A_{\epsilon n} \leqslant (\bar{\delta}\gamma^l-1) \epsilon n) / \PP(K > \kappa) +  \PP(K \leqslant \epsilon n \, | \, K > \kappa),
\end{align*}
so that
\begin{equation*}
\PP(\cC_v) \geqslant \error(\cc) \PP(K > \kappa) (1 - p_{\cC_v}).
\end{equation*}
Therefore by (\ref{eqn:exp+bound+Ak}) and (\ref{eqn:exp+bound+K}) in Theorem~\ref{thm:supercritical}, it follows that
\begin{equation*}
\liminf_{n \to \infty} \PP(\cC_v) \geqslant \error(\cc) (1- c_{\kappa}) \liminf_{n \to \infty} \PP(K > \kappa) \eqqcolon c_{\mathrm{cyc}}.
\end{equation*}
Since $\kappa = \kappa(\cc,\lambda,\rho)$ is fixed and does not depend on $n$, $\liminf_{n \to \infty} \PP(K > \kappa) > 0$ so that $c_{\mathrm{cyc}} > 0$.
\end{proof}

Theorem~\ref{thm:cycle} has the following immediate consequence.

\begin{theorem}
\label{thm:cycle+deg2}
Suppose that $\bhatta(\cc) \lambda^{\prime}(0) \rho^{\prime}(1) > 1$. Let $C_n$ denote the number of degree-two variable nodes lying on cycles such that, on each cycle, all variable nodes are of degree two and the corresponding sum of log-likelihood ratios is negative. Then $\liminf_{n \to \infty} \mathbb{P}(C_n \geqslant c_{\mathrm{cyc}} n L_{2,n} / 2) \geqslant c_{\mathrm{cyc}} / 2$.
\end{theorem}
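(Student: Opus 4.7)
The plan is to reduce this to a first-moment calculation combined with a reverse-Markov argument that exploits the deterministic upper bound $C_n \leqslant n L_{2,n}$. Let $V_2$ denote the set of degree-two variable nodes, so $|V_2| = n L_{2,n}$, and write $C_n = \sum_{v \in V_2} \II_{\cC_v}$, where $\cC_v$ is the event introduced in Theorem~\ref{thm:cycle}: the node $v$ lies on a bipartite cycle consisting exclusively of degree-two variable nodes, along which the sum of the channel log-likelihood ratios is negative.

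First I would establish that $\PP(\cC_v)$ does not depend on the choice of $v \in V_2$. Because the Tanner graph is sampled from $\LDPC(\lambda_n,\rho_n)$ by a uniformly random permutation $\Pi$ of the half-edge sockets, and the channel acts independently and identically on each variable node, the joint law of the graph and the log-likelihood ratios is invariant under any permutation of the degree-two variable nodes. Hence $\PP(\cC_v)$ is constant on $V_2$, so by linearity of expectation $\EE[C_n] = n L_{2,n} \cdot \PP(\cC_{v_0})$ for any fixed $v_0 \in V_2$. Theorem~\ref{thm:cycle} then gives
\begin{equation*}
\liminf_{n \to \infty} \frac{\EE[C_n]}{n L_{2,n}} \;\geqslant\; c_{\mathrm{cyc}}.
\end{equation*}

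Next I would convert this expectation bound into a lower bound on the probability. Set $t_n = c_{\mathrm{cyc}} n L_{2,n}/2$ and $M_n = n L_{2,n}$. Using $C_n \leqslant M_n$ we split
\begin{equation*}
\EE[C_n] \;=\; \EE\big[C_n \II_{\{C_n < t_n\}}\big] + \EE\big[C_n \II_{\{C_n \geqslant t_n\}}\big] \;\leqslant\; t_n + M_n \cdot \PP(C_n \geqslant t_n),
\end{equation*}
which rearranges to $\PP(C_n \geqslant t_n) \geqslant \EE[C_n]/M_n - t_n/M_n$. Since $t_n/M_n = c_{\mathrm{cyc}}/2$ is constant in $n$, taking the liminf yields
\begin{equation*}
\liminf_{n \to \infty} \PP\!\left(C_n \geqslant \tfrac{c_{\mathrm{cyc}} n L_{2,n}}{2}\right) \;\geqslant\; c_{\mathrm{cyc}} - \tfrac{c_{\mathrm{cyc}}}{2} \;=\; \tfrac{c_{\mathrm{cyc}}}{2},
\end{equation*}
as required.

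The only subtle point is the exchangeability claim in the first step: one must confirm that the configuration model together with the symmetric channel is genuinely invariant under permutations of degree-two variable nodes, and that the event $\cC_v$, although it depends on the global graph topology and all channel outputs, is defined purely in terms of the (unlabeled) neighborhood of $v$, so that this invariance transports to equality of the marginal probabilities $\PP(\cC_v)$. Once this is in place, the rest of the argument is a routine \emph{second-moment-free} concentration step whose content is simply that a bounded nonnegative random variable with expectation at least $c$ must exceed $c/2$ with probability at least $c/2$.
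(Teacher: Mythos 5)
Your proposal is correct and follows essentially the same route as the paper: both start from $\EE[C_n] = n L_{2,n}\,\PP(\cC_v)$ (exchangeability over degree-two variable nodes), invoke Theorem~\ref{thm:cycle} to get $\liminf_n \EE[C_n]/(nL_{2,n}) \geqslant c_{\mathrm{cyc}}$, and then exploit the deterministic bound $C_n \leqslant n L_{2,n}$ to truncate $\EE[C_n]$ at $c_{\mathrm{cyc}} n L_{2,n}/2$. The only difference is cosmetic — you rearrange the truncation inequality and pass directly to the liminf, whereas the paper packages the same inequality as a proof by contradiction — so the substance is identical.
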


\begin{proof}
Since $\EE[C_n] = n L_{2,n} \PP(\cC_v)$, we conclude by Theorem \ref{thm:cycle} that 
\begin{equation*}
\liminf_{n \to \infty} \EE[C_n] / (n L_{2,n}) \geqslant c_{\mathrm{cyc}}. 
\end{equation*}
Meanwhile, since $C_n$ is nonnegative and is upper bounded by $n L_{2,n}$, we can obtain
\begin{equation}
\label{eqn:C_n+bound}
\EE[C_n] \leqslant c_{\mathrm{cyc}} n L_{2,n} / 2 + n L_{2,n} \PP(C_n \geqslant c_{\mathrm{cyc}} n L_{2,n} / 2).
\end{equation}
Now assume to the contrary that $\liminf_{n \to \infty} \PP(C_n \geqslant c_{\mathrm{cyc}} n L_{2,n} / 2) < c_{\mathrm{cyc}} / 2$, so that the inequality in (\ref{eqn:C_n+bound}) implies
\begin{align*}
\liminf_{n \to \infty} \frac{\EE[C_n]}{n L_{2,n}} &\leqslant c_{\mathrm{cyc}} / 2 + \liminf_{n \to \infty} \PP(C_n \geqslant c_{\mathrm{cyc}} n L_{2,n} / 2) \\
&< c_{\mathrm{cyc}} / 2 + c_{\mathrm{cyc}} / 2,
\end{align*}
which leads to a contradiction.
\end{proof}


\subsection{Lower Bound on Probability of Block Error}
\label{subsec:lbd+block+error+prob}

We present a simple consequence of Theorem~\ref{thm:cycle}, which reveals how stability is related to the probability of block error under blockwise MAP decoding. Consider transmission over a BMS channel with transition probability $p_{Y|X}$ and let $\cC$ be a binary code of blocklength $n$. Let $\uX$ be the codeword chosen with uniform distribution from $\cC$ and sent through the channel, and denote by $\uY$ the channel output. For $\uy \in \cY^n$, the blockwise MAP decoding chooses
\begin{equation}
\label{eqn:block+map+def}
\hat{\ux}^{\ML}(\uy) \coloneqq \underset{\ux \in \cC}{\argmax} \sum_{i=1}^n \ux_i l(\uy_i),
\end{equation}
where $l(y) = \ln(p_{Y|X}(y|1)/p_{Y|X}(y|-1))$ is the log-likelihood ratio function. Denote by ${\rP}_{\rB} ^{\ML}$ the probability of block error under blockwise MAP decoding for the code $\cC$, where the average is over the choice of the codeword as well as the channel fluctuations.

Consider transmission over a BMS channel with $L$-density $\cc_{\tth}$ using a design degree distribution pair $(\lambda, \rho)$. We denote by $\tG$ the Tanner graph chosen with uniform distribution from the $\LDPC(\lambda_n,\rho_n)$ ensemble, and denote by ${\rP}_{\rB} ^{\ML} (\tG, \tth)$ the conditional probability of block error under blockwise MAP decoding, conditioned on the all-one codeword being transmitted. We further define ${\rP}_{\rB} ^{\ML} = {\rP}_{\rB} ^{\ML}(\lambda_n, \rho_n) = \EE_{\LDPC(\lambda_n,\rho_n)}[{\rP}_{\rB} ^{\ML} (\tG, \tth)]$, where the average is taken over the randomness of the Tanner graph $\tG$. The main result of this section is the following theorem.

\begin{theorem}
\label{thm:block+map+block+error}
Let $(\lambda, \rho)$ be a design degree distribution pair and let $\{\cc_{\tth}\}$ be an ordered and complete family of BMS channels indexed by entropy $\tth$. Further denote by $\tth^{\stab} = \tth^{\stab}(\{\cc_{\tth}\}, \lambda, \rho)$ the stability threshold associated with $\{\cc_{\tth}\}$ and $(\lambda, \rho)$. Consider transmission over a BMS channel with $L$-density $\cc_{\tth}$. If $\tth > \tth^{\stab}$ so that $\bhatta(\cc_{\tth}) \lambda^{\prime}(0) \rho^{\prime}(1) > 1$, then $\liminf_{n \to \infty} {\rP}_{\rB} ^{\ML}(\lambda_n, \rho_n) > 0$.
\end{theorem}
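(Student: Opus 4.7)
The plan is to combine the combinatorial output of Theorem~\ref{thm:cycle+deg2} with the standard decoding-error analysis for linear codes over BMS channels. By the channel symmetry of the $\bms$ family and the linearity of each realization of $\tG$, the conditional block error probability $\rP_{\rB}^{\ML}(\tG,\tth)$ does not depend on the transmitted codeword, so it is enough to assume the all-one codeword $\uone \in \cC(\tG)$ is sent and to lower-bound the probability that $\hat{\ux}^{\ML}(\uY) \neq \uone$. Averaging over $\tG$ then yields a lower bound on $\rP_{\rB}^{\ML}(\lambda_n,\rho_n)$.

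Next I would record the following combinatorial observation. If $\cP = v_1 c_1 v_2 c_2 \cdots v_k c_k v_1$ is a bipartite cycle in $\tG$ in which every $v_i$ has degree two, then the vector $\ux^{\cP} \in \{\pm 1\}^n$ obtained from $\uone$ by flipping the coordinates $v_1,\ldots,v_k$ is again a codeword of $\cC(\tG)$: each check $c_i$ in $\cP$ is incident to exactly two flipped coordinates ($v_i$ and $v_{i+1}$), so its parity is preserved, and every check outside $\cP$ sees no flip at all. From the MAP rule in (\ref{eqn:block+map+def}) we compute
\begin{equation*}
\sum_{i=1}^n \ux^{\cP}_i \, l(\uY_i) - \sum_{i=1}^n l(\uY_i) = -2 \sum_{i \in \cP} l(\uY_i),
\end{equation*}
so $\ux^{\cP}$ strictly beats $\uone$ under blockwise MAP decoding precisely when the sum of log-likelihood ratios along $\cP$ is strictly negative. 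In that case block MAP decoding errs.

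To finish, I would invoke Theorem~\ref{thm:cycle+deg2}. Since $\tth > \tth^{\stab}$ gives $\bhatta(\cc_{\tth}) \lambda^{\prime}(0) \rho^{\prime}(1) > 1$, letting $C_n$ count degree-two variable nodes lying on such ``bad'' bipartite cycles, we have $\liminf_{n\to\infty} \PP(C_n \geqslant c_{\mathrm{cyc}} n L_{2,n}/2) \geqslant c_{\mathrm{cyc}}/2$ for some $c_{\mathrm{cyc}} > 0$. The stability inequality forces $\lambda^{\prime}(0) = \lambda_2 > 0$ and hence $L_2 > 0$, so $n L_{2,n} \to \infty$ and the event $\{C_n \geqslant c_{\mathrm{cyc}} n L_{2,n}/2\}$ eventually entails $\{C_n \geqslant 1\}$, i.e., the existence of at least one such cycle. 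Combining this with the previous paragraph, the conditional block error event contains an event of asymptotic probability at least $c_{\mathrm{cyc}}/2$, so $\liminf_n \rP_{\rB}^{\ML}(\lambda_n,\rho_n) \geqslant c_{\mathrm{cyc}}/2 > 0$.

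I do not expect any step to be a real obstacle, because Theorem~\ref{thm:cycle+deg2} already encapsulates the hard work. The only fine points to be careful about are (i) the reduction to conditioning on $\uone$, which is the standard BMS-symmetry/linearity argument, and (ii) checking that stability genuinely forces $L_2 > 0$ so that the lower bound on $C_n$ is not vacuous; both are immediate from the material set up in Section~\ref{subsec:prelim} and Definition~\ref{def:stab}.
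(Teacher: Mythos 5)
Your proposal is correct and follows essentially the same route as the paper: invoke Theorem~\ref{thm:cycle+deg2} (equivalently Theorem~\ref{thm:cycle}) to exhibit, with probability bounded away from zero, a bipartite cycle of degree-two variable nodes whose log-likelihood-ratio sum is negative, flip the coordinates along that cycle to obtain a competing codeword, and observe that this codeword scores at least as well as $\uone$ under the blockwise MAP rule~(\ref{eqn:block+map+def}). The two fine points you flag explicitly (the symmetry reduction to the all-one codeword and the check that stability forces $\lambda_2 > 0$, hence $L_2 > 0$) are the same ones the paper leaves implicit in its definition of $\rP_{\rB}^{\ML}(\tG,\tth)$ and Definition~\ref{def:stab}.
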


\begin{proof}
By Theorem~\ref{thm:cycle+deg2} we know that with strictly positive probability and for all sufficiently large blocklength $n$, there exists a cycle on $\tG$ such that the cycle only contains variable nodes of degree two and the sum of log-likelihood ratios along these variable nodes is negative. Let us denote by $\cI_{\mathrm{cyc}}$ the index set of the variable nodes along the cycle. Then we have $\sum_{i \in \cI_{\mathrm{cyc}}} l(\uY_i) \leqslant 0$.

Furthermore, the index set $\cI_{\mathrm{cyc}}$ gives rise to a codeword in the code $\cC = \cC(\tG)$. More precisely, we let $\ux^{\circ}$ be such that $\ux_i ^{\circ} = -1$ if $i \in \cI_{\mathrm{cyc}}$ and $\ux_i ^{\circ} = 1$ if $i \notin \cI_{\mathrm{cyc}}$. That $\ux^{\circ}$ is a codeword in $\cC$ can be verified as follows. Consider any check node along the cycle. Then it is connected with two variable nodes with indices in $\cI_{\mathrm{cyc}}$ along the cycle and all other connected variable nodes, if any, are off the cycle. This means that for any particular check node on the cycle, the pointwise multiplication of the corresponding $\ux_i ^{\circ}$s is equal to $1$, satisfying the corresponding parity-check equation. Now since $\sum_{i \in \cI_{\mathrm{cyc}}} l(\uY_i) \leqslant 0$ and $\ux_i ^{\circ} = -1$ for all $i \in \cI_{\mathrm{cyc}}$, we can conclude that
\begin{equation*}
\sum_{i \in \cI_{\mathrm{cyc}}} \ux_i ^{\circ} l(\uY_i) \geqslant \sum_{i \in \cI_{\mathrm{cyc}}} l(\uY_i),
\end{equation*}
which implies that $\ux^{\circ}$ gives rise to a larger sum in (\ref{eqn:block+map+def}) than the sum corresponding to the transmitted all-one codeword, resulting in a block error.
\end{proof}


\subsection{Lower Bound on Probability of Bit Error}
\label{subsec:lbd+bit+error+prob}

We investigate stability under bitwise MAP decoding when transmission takes place on a BMS channel using a design degree distribution pair $(\lambda,\rho)$. We recall from Theorem~\ref{thm:cycle} in Section~\ref{subsec:cycle+deg2} that we let $v$ be an arbitrary variable node of degree two of the Tanner graph $\tG$ that is chosen uniformly at random from the $\LDPC(\lambda_n,\rho_n)$ ensemble. Then a codeword from the code $\cC = \cC(\tG)$ is chosen uniformly at random and is transmitted through a BMS channel with $L$-density $\cc$. Assuming that the all-one codeword was transmitted, if the condition $\bhatta(\cc) \lambda^{\prime}(0) \rho^{\prime}(1) > 1$ holds, then with strictly positive probability as the blocklength tends to $\infty$, the variable node $v$ is lying on a cycle such that all variable nodes on the cycle are of degree two and the sum of log-likelihood ratios along these variable nodes is negative.

Let us show that, for a broad family of BMS channels, we are able to conclude that the sum of log-likelihood ratios along the cycle, where $v$ lies on, is not only negative but also \textit{bounded from below} by a constant that only depends on $\cc$ and $(\lambda,\rho)$. We start by considering the case of the BSC with parameter $p$ and $L$-density $\cc_p$. Recall that each stage of the exploration process consists of $l$ steps, where $l$ is the parameter defined in Lemma~\ref{lem:bhatta+error}. Without loss of generality we assume that $l$ is an odd integer. The $L$-density of the sum of $l$ log-likelihood ratios of the BSC is the $l$-fold $\varoast$-convolution of $\cc_p$ and reads
\begin{equation}
\cc_p ^{\varoast l} (y) = \sum_{i = 0}^l \binom{l}{i} p^{l - i} {(1-p)}^{i} \Delta_{-(l - 2i) \ln((1-p)/p)} (y).
\end{equation}
Suppose that we have performed $l$ steps of stage $k$ for some $k \geqslant 1$. Then there exists a unique path that involves active check nodes only and starts from variable half-edge $e_1$ belonging to variable node $v$. Let us assume without loss of generality that the log-likelihood ratio takes values in $\{-1,1\}$ rather than $\{-\ln(\bar{p}/p), \ln(\bar{p}/p)\}$. Now suppose that the ``partial'' sum of log-likelihood ratios along this path after we performed step $0$ but before we perform step $1$ of stage $k$ is equal to $-l + \ell$, where $0 \leqslant \ell \leqslant l$ is arbitrary. Clearly the claim holds for $k=1$. Then rather than requiring that the sum of log-likelihood ratios along the $l$ variable nodes explored in stage $k$ is negative, we require that the sum is in the closed interval $[-\ell,-\ell+l]$. In this way, the ``total'' sum along the path, which starts from the variable half-edge $e_1$ belonging to $v$, is bounded in $[-l,0]$. It remains to show that, in this case, the conditional probability can only increase.

Now some calculation reveals that, for any integer $\ell \in [0,l]$, there exists a unique integer $k = k(\ell)$, which satisfies $0 \leqslant k \leqslant (l+1)/2$, such that the interval $[-\ell,-\ell+l]$ contains $(l+1)/2$ odd integers of the form $-(l-2i)$, where $k \leqslant i \leqslant k + (l-1)/2$. Consequently if we define
\begin{equation*}
B(k_0; k_1) = \sum_{i = k_0}^{k_1} \binom{l}{i} p^{l - i} {(1-p)}^{i},
\end{equation*}
then $B(0; (l-1)/2)$ is the probability that the sum of $l$ log-likelihood ratios, each of which has an $L$-density equal to $p \Delta_{-1} + (1-p) \Delta_{1}$, is negative, while $B(k(\ell); k(\ell) + (l-1)/2)$ is the probability that the sum of $l$ log-likelihood ratios is bounded in $[-\ell,-\ell+l]$. The next lemma shows that for any integer $k$ satisfying $0 \leqslant k \leqslant (l+1)/2$, it holds that $B(0; (l-1)/2) \leqslant B(k; k + (l-1)/2)$.

\begin{lemma}
\label{lem:bsc+density}
For any $p \in [0,1/2]$ and any odd integer $l \geqslant 2$, it holds that
\begin{equation}
\label{eqn:bsc+binomial}
B\big( 0; (l-1)/2 \big) \leqslant B\big( k; k+(l-1)/2 \big),
\end{equation}
for every integer $k$ satisfying $0 \leqslant k \leqslant (l+1)/2$.
\end{lemma}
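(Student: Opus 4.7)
The plan is to exploit log-concavity of the binomial PMF $a_i \coloneqq \binom{l}{i} p^{l-i}(1-p)^i$ together with the skew induced by the hypothesis $p \leqslant 1/2$. Set $L = (l+1)/2$ and define the sliding-window sum $W(k) \coloneqq B(k; k+L-1) = \sum_{i=k}^{k+L-1} a_i$ for $k \in \{0, 1, \dots, L\}$; the lemma then asserts $W(0) \leqslant W(k)$ on this range. My strategy is to combine two ingredients: $W$ is unimodal in $k$, and the two endpoints satisfy $W(0) \leqslant W(L)$. Together these force the minimum of $W$ on $\{0, 1, \dots, L\}$ to be attained at $k = 0$.

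For unimodality, observe that $W(k+1) - W(k) = a_{k+L} - a_k$. Since the binomial PMF is log-concave, the ratio $a_{i+1}/a_i = \frac{l-i}{i+1} \cdot \frac{1-p}{p}$ is non-increasing in $i$. Writing $a_{k+L}/a_k$ as a telescoping product of $L$ such ratios, each non-increasing in $k$, shows that $a_{k+L}/a_k$ is itself non-increasing in $k$, hence crosses the value $1$ at most once. Consequently the sign of $W(k+1) - W(k)$ switches at most once as $k$ increases, going from non-negative to non-positive, so $W$ is unimodal on $\{0, 1, \dots, L\}$ and its minimum equals $\min\{W(0), W(L)\}$.

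For the endpoint comparison $W(0) \leqslant W(L)$, the substitution $j = l - i$ together with $\binom{l}{l-j} = \binom{l}{j}$ yields
\begin{equation*}
W(L) - W(0) = \sum_{j=0}^{L-1} \binom{l}{j} \bigl(p(1-p)\bigr)^{j} \bigl((1-p)^{l-2j} - p^{l-2j}\bigr).
\end{equation*}
For every $j \in \{0, 1, \dots, L-1\}$ the exponent $l - 2j$ is a positive integer, and the hypothesis $p \leqslant 1/2 \leqslant 1 - p$ gives $(1-p)^{l-2j} \geqslant p^{l-2j}$, so each summand is non-negative and hence $W(L) \geqslant W(0)$. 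Combined with unimodality from the previous step, this yields $W(0) \leqslant W(k)$ for every $k \in \{0, 1, \dots, L\}$, which is precisely (\ref{eqn:bsc+binomial}).

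The main delicacy I anticipate is guarding against an overly strong claim: $W$ is in general \emph{not} monotone on $\{0, 1, \dots, L\}$ (for instance at $p = 1/2$ it is symmetric with $W(0) = W(L)$ and a strict maximum strictly in between), and the naive term-by-term bound $a_{j+L} \geqslant a_j$ already fails for $j$ close to $L-1$ when $p$ is only slightly below $1/2$. The two-step reduction via unimodality plus endpoint comparison is exactly what circumvents these obstructions, and each step reduces to a routine calculation with binomial ratios.
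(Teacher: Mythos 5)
Your proof is correct, and it takes a genuinely different route from the paper's. The paper's argument first cancels the common middle block to reduce the claim to $B\big((l+1)/2;\,k+(l-1)/2\big) \geqslant B(0;\,k-1)$, then applies the reflection $i \mapsto l-i$ (together with $p \leqslant 1/2$) to transform the left side into $B\big((l+1)/2-k;\,(l-1)/2\big)$, and finally compares this block of $k$ terms to $B(0;k-1)$ using the fact that $\{B_i\}$ is non-decreasing on $\{0,\dots,(l-1)/2\}$. You instead prove unimodality of the sliding-window sum $W(k)$ via log-concavity of the binomial PMF (so that $W(k+1)-W(k)=a_{k+L}-a_k$ changes sign at most once), which reduces the whole family of inequalities to the single endpoint comparison $W(0)\leqslant W(L)$; for that endpoint inequality you use the same reflection trick the paper uses, but only once rather than for every $k$. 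Both proofs exploit the same two features of the problem---the skew from $p\leqslant 1/2$, captured via reflection, and a concavity/monotonicity property of the binomial PMF---but package them differently: the paper does a direct block-vs-block comparison for each $k$, whereas you isolate a structural fact (unimodality of $W$) that collapses everything to the two endpoints. Your observation that the termwise bound $a_{j+L}\geqslant a_j$ can fail for $j$ near $L-1$ when $p$ is close to $1/2$ is accurate and is exactly the obstruction the unimodality step is designed to sidestep. One small point worth making explicit: the ratio manipulation $a_{k+L}/a_k$ presumes $a_k>0$, which holds for $p\in(0,1/2]$; the boundary case $p=0$ is trivial since then $B(0;(l-1)/2)=0$.
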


The proof of Lemma~\ref{lem:bsc+density} is given in Appendix~\ref{appdix:bsc+sum}. This lemma implies that for the case of the BSC, in Theorem~\ref{thm:cycle} we can further require that the sum of log-likelihood ratios along the cycle is negative but also bounded in the interval $[-(l+1)\ln(\bar{p}/p), 0]$, while the conclusion still holds.

In fact, a similar conclusion holds for a class of BMS channels other than the BSC. To verify this, we need the following lemma, which is a variant of Lemma~\ref{lem:bhatta+error}.

\begin{lemma}
\label{lem:truncation+error}
Let $\cc$ be an $L$-density and $(\lambda,\rho)$ a design degree distribution pair. Suppose $\bhatta(\cc) \lambda^{\prime}(0) \rho^{\prime} (1) > 1$ holds. Then there exist $M = M(\cc, \lambda, \rho) > 0$, $\gamma = \gamma(\cc, \lambda, \rho) > 1$, $l = l(\cc, \lambda, \rho) \in \NN$, and $\varepsilon = \varepsilon(\cc, \lambda, \rho) > 0$ such that for all $n$ large enough, it holds that
\begin{equation}
\label{eqn:new+E+ver+B}
\hat{\error}(\cc^{\varoast l})\big(\hat{\lambda}^{\prime} _n (0) \hat{\rho}^{\prime} _n (1)\big)^l \geqslant \gamma^l,
\end{equation}
where $(\hat{\lambda}_n, \hat{\rho}_n)$ is the residual degree distribution associated with $(\lambda_n,\rho_n)$ according to Definition~\ref{defn:residual}, and $\hat{\error}(\cc^{\varoast l})$ is defined by
\begin{equation*}
\hat{\error}(\cc^{\varoast l}) = \frac{1}{2} \int_{{-M}}^{M} \ee^{-(|y/2| + y/2)} \cc^{\varoast l} (y) \diff y.
\end{equation*}
\end{lemma}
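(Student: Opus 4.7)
The plan is to derive Lemma~\ref{lem:truncation+error} from Lemma~\ref{lem:bhatta+error} by showing that the difference $\error(\cc^{\varoast l}) - \hat{\error}(\cc^{\varoast l})$ decays exponentially in $M$, so that one can absorb this gap into the geometric factor with room to spare. First, invoke Lemma~\ref{lem:bhatta+error} to extract constants $\gamma_0 = \gamma_0(\cc, \lambda, \rho) > 1$, $l = l(\cc, \lambda, \rho) \in \NN$, and $\varepsilon = \varepsilon(\cc, \lambda, \rho) > 0$ for which $\error(\cc^{\varoast l}) (\hat{\lambda}_n^{\prime}(0) \hat{\rho}_n^{\prime}(1))^l \geqslant \gamma_0^l$ holds for all sufficiently large $n$. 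I would then keep the same $l$ and $\varepsilon$ and prove the truncated version for any $\gamma \in (1, \gamma_0)$ and a suitable $M$.

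The key analytic step is the truncation bound. Because $\cc$ is a symmetric $L$-density and $\varoast$-convolution preserves symmetry, the density $\cc^{\varoast l}$ satisfies $\cc^{\varoast l}(-y) = \ee^{-y} \cc^{\varoast l}(y)$. Using this to rewrite $\int_{-\infty}^{-M} \cc^{\varoast l}(y) \diff y = \int_M^\infty \ee^{-y} \cc^{\varoast l}(y) \diff y$ inside the definition of $\error$, I obtain
\[
\error(\cc^{\varoast l}) - \hat{\error}(\cc^{\varoast l}) = \int_M^{\infty} \ee^{-y} \cc^{\varoast l}(y) \diff y \leqslant \ee^{-M/2} \int_{-\infty}^{\infty} \ee^{-y/2} \cc^{\varoast l}(y) \diff y = \ee^{-M/2} \bhatta(\cc)^l,
\]
where the middle inequality uses $\ee^{-y} \leqslant \ee^{-M/2} \ee^{-y/2}$ on $y \geqslant M$ together with nonnegativity of the integrand, and the final equality uses multiplicativity of the Bhattacharyya functional under $\varoast$-convolution.

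To finish, multiply through by $(\hat{\lambda}_n^{\prime}(0) \hat{\rho}_n^{\prime}(1))^l$. From Definition~\ref{defn:residual} one has $\hat{\lambda}_{i,n} \leqslant i L_{i,n}/L_n^{\prime}(1)$ and $\hat{\rho}_{i,n} \leqslant i R_{i,n}/R_n^{\prime}(1)$, so $\hat{\lambda}_n^{\prime}(0) \hat{\rho}_n^{\prime}(1) \leqslant \lambda_n^{\prime}(0) \rho_n^{\prime}(1)$, a quantity that converges to $\lambda^{\prime}(0) \rho^{\prime}(1)$. Hence, for all large enough $n$, $\bhatta(\cc) \hat{\lambda}_n^{\prime}(0) \hat{\rho}_n^{\prime}(1) \leqslant C$ for some $C = C(\cc, \lambda, \rho)$, and therefore
\[
\hat{\error}(\cc^{\varoast l}) \big(\hat{\lambda}_n^{\prime}(0) \hat{\rho}_n^{\prime}(1)\big)^l \geqslant \gamma_0^l - \ee^{-M/2} C^l.
\]
Fixing $\gamma = (1 + \gamma_0)/2 \in (1, \gamma_0)$ and then choosing $M = M(\cc, \lambda, \rho)$ large enough that $\ee^{-M/2} C^l \leqslant \gamma_0^l - \gamma^l$ yields the required bound for all sufficiently large $n$. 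The only subtlety worth flagging is that $l$, $C$, and $\gamma_0$ must be determined before $M$, so that the resulting $M$ still depends only on $(\cc, \lambda, \rho)$; otherwise the argument is routine given Lemma~\ref{lem:bhatta+error}.
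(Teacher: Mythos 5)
Your proof is correct, and it takes a genuinely different route from the paper's. The paper reuses the internals of the proof of Lemma~\ref{lem:bhatta+error}: it recalls the constants $\beta$ and $\bhatta$ (with $\error(\cc^{\varoast l}) \geqslant \beta\bhatta^l$ from the known bound in~\cite{RU08}), observes that there is slack so that a strictly smaller $\acute{\beta} \in (0,\beta)$ still satisfies $\acute{\beta}\big(\bhatta\,\hat{\lambda}_n'(0)\hat{\rho}_n'(1)\big)^l > \gamma^l$, and then chooses $M$ abstractly via the monotone convergence $\hat{\error}(\cc^{\varoast l}) \uparrow \error(\cc^{\varoast l})$ so that $\hat{\error}(\cc^{\varoast l}) \geqslant \acute{\beta}\bhatta^l$. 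You instead treat Lemma~\ref{lem:bhatta+error} as a black box, use the symmetry $\cc^{\varoast l}(-y) = \ee^{-y}\cc^{\varoast l}(y)$ to rewrite the truncation gap as a one-sided tail and bound it explicitly by $\ee^{-M/2}\bhatta(\cc)^l$, then note $\hat{\lambda}_n'(0)\hat{\rho}_n'(1) \leqslant \lambda_n'(0)\rho_n'(1)$ from Definition~\ref{defn:residual} so the gap contributes at most $\ee^{-M/2}C^l$ after multiplication, and absorb this into a shrunken $\gamma$. The trade-off is that the paper's argument is shorter but requires opening up Lemma~\ref{lem:bhatta+error}'s proof to retrieve $\beta$, whereas yours is self-contained at the level of statements and gives an explicit exponential rate for how large $M$ needs to be. Both are valid; your care in fixing $l$, $C$, and $\gamma_0$ before choosing $M$ is exactly the point that needs to be flagged, and you flagged it.
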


\begin{proof}
Let $\beta$ and $\bhatta$ be both specified according to Lemma~\ref{lem:bhatta+error}. By Lemma \ref{lem:bhatta+error}, there exist $\gamma = \gamma(\cc, \lambda, \rho) > 1$, $l = l(\cc, \lambda, \rho) \in \NN$, and $\varepsilon = \varepsilon(\cc, \lambda, \rho) > 0$ such that there exists a $\acute{\beta} \in {(0, \beta)}$ with $\error(\cc^{\varoast l}) \geqslant \beta \bhatta^l > \acute{\beta} \bhatta^l$ and for all $n$ large enough, it holds that
\begin{equation}
\label{eqn:acute+beta}
\acute{\beta} \big(\bhatta \hat{\lambda}^{\prime} _n (0) \hat{\rho}^{\prime} _n (1)\big)^l > \gamma^l. 
\end{equation}
Pick such an $l$, so that we can select $M = M(\cc, \lambda, \rho) > 0$ such that $\hat{\error}(\cc^{\varoast l}) \geqslant \acute{\beta} \bhatta^l$. Therefore (\ref{eqn:new+E+ver+B}) follows by combining this and (\ref{eqn:acute+beta}).
\end{proof}

Next we define admissible BMS channels and as we shall see later, the boundedness condition on the sum of the log-likelihood ratios along a cycle is satisfied for this class of channels.

\begin{definition}
\label{def:admissible+bms}
Given a design degree distribution pair $(\lambda,\rho)$, a BMS channel with $L$-density $\cc$ is called \textit{admissible} if there exist $l = l(\cc, \lambda, \rho) \in \NN$ and $M = M(\cc,\lambda, \rho) > 0$ such that 
\begin{equation}
\label{eqn:cond+exp+bounded}
\int_{-M}^0 \cc^{\varoast l} (y) \diff y \leqslant \int_{-m}^{M-m} \cc^{\varoast l} (y) \diff y,
\end{equation}
for every $m \in [0,M]$.
\end{definition}

Now recall the exploration process in Section~\ref{subsec:exploration}. Suppose we have performed $l$ steps of stage $k$ for some $k \geqslant 1$. Then there exists a unique path that involves active check nodes only and starts from variable half-edge $e_1$ belonging to variable node $v$. Further assume that the ``partial'' sum of log-likelihood ratios along this path after we performed step $0$ but before we perform step $1$ of stage $k$ is equal to $-M + m$, where $0 \leqslant m \leqslant M$ is arbitrary. Then rather that requiring that the sum of log-likelihood ratios along the $l$ variable nodes explored in stage $k$ is in $[-M,0]$, we require that the sum is in the closed interval $[-m,-m+M]$. In this way, the ``total'' sum along the path, which starts from the variable half-edge $e_1$ belonging to $v$, is bounded in $[-M,0]$. Then (\ref{eqn:cond+exp+bounded}) in Definition~\ref{def:admissible+bms} and Lemma~\ref{lem:truncation+error} imply that the conditional probability that the status of the active check half-edge after performing step $l$ in stage $k$ remains active is at least as large as $\hat{\error}(\cc^{\varoast l})$. This leads us to the following theorem.

\begin{theorem}
\label{thm:bounded+cycle}
Consider transmission over the BSC or an admissible BMS channel characterized by its $L$-density $\cc$ and let $(\lambda,\rho)$ be a design degree distribution pair. Let $v$ be an arbitrary variable node of degree two and denote by $\cC_v$ be the event that $v$ lies on a bipartite cycle such that all variable nodes on it are of degree two and that the corresponding sum of log-likelihood ratios is negative and bounded from below by a constant that only depends on $\cc$ and $(\lambda,\rho)$. If $\bhatta(\cc) \lambda^{\prime}(0) \rho^{\prime}(1) > 1$, then there exists a constant $c_{\mathrm{cyc}} = c_{\mathrm{cyc}}(\cc, \lambda, \rho) > 0$ such that $\liminf_{n \to \infty} \PP(\cC_v) \geqslant c_{\mathrm{cyc}}$.
\end{theorem}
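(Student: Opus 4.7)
The plan is to retrace the argument behind Theorem \ref{thm:cycle} while strengthening the truncation sub-step of the exploration process. In the earlier proof, a freshly revealed $l$-step segment was kept ``active'' provided its accumulated log-likelihood ratio was negative; here I would instead keep it active only when the segment's contribution lies in a shifted interval $[-m,-m+M]$ chosen so that the running partial sum from $e_1$ remains confined to $[-M,0]$ after every stage. Concretely, if after step $0$ of stage $k$ the running partial sum is $-M+m$ for some $m \in [0,M]$, the new rule requires the $l$ freshly revealed variable nodes to contribute a value in $[-m,-m+M]$. Iterating stage by stage, every path from $e_1$ to an active check half-edge in stage $\epsilon n$ will then carry an accumulated log-likelihood ratio in $[-M,0]$.

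Two ingredients then need verification. First, the per-stage survival probability under the stricter rule must be at least $\hat{\error}(\cc^{\varoast l})$ uniformly in the running offset $m \in [0,M]$. For the $\bscp$ this is exactly Lemma \ref{lem:bsc+density}: the unique integer $k(\ell)$ identified there places the $(l+1)/2$ odd atoms of $\cc_p^{\varoast l}$ inside $[-m,-m+M]$, and the resulting binomial-tail inequality says the shifted window dominates the negative-sum window. For admissible BMS channels the same statement is the defining inequality (\ref{eqn:cond+exp+bounded}). Second, under this replacement the offspring mean governing the exploration becomes $\hat{\error}(\cc^{\varoast l}) \hat{\lambda}^{\prime}_n(0) \hat{\rho}^{\prime}_n(1)$, and by Lemma \ref{lem:truncation+error} this quantity still exceeds $\gamma^l > 1$ for a suitable choice of $l$, $M$ and $\varepsilon$, which is the condition that drove the supercritical regime of Theorem \ref{thm:supercritical}.

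With these two ingredients the large-deviation analysis of Theorem \ref{thm:supercritical} goes through verbatim, the only change being that $\error(\cc^{\varoast l})$ is replaced by $\hat{\error}(\cc^{\varoast l})$ throughout; hence (\ref{eqn:exp+bound+Ak}) and (\ref{eqn:exp+bound+K}) persist under the modified rule. The argument is then closed exactly as in the proof of Theorem \ref{thm:cycle}: with strictly positive probability $K > \epsilon n$ and $A_{\epsilon n}$ is linear in $n$, so connecting the remaining half-edge $e_2$ of $v$ lands on an active check half-edge with positive probability. To obtain the lower bound on the sum around the full cycle (and not merely negativity), I would additionally require the log-likelihood ratio $L_v$ at $v$ to lie in $[-M',0]$ for an $M' = M'(\cc)$ chosen so that this event has strictly positive probability; for the $\bscp$ this is automatic since $L_v$ is bounded, and for admissible BMS channels it only shrinks the lower bound $c_{\mathrm{cyc}}$ by a constant factor. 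The resulting cycle has all variable nodes of degree two and total log-likelihood ratio bounded in $[-M-M',0]$.

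The main obstacle will be preserving the branching-process coupling across the history-dependent conditioning. The partial sum at the start of stage $k$ depends on the trajectory of all previous stages, so the survival probability at stage $k$ is a priori a random quantity. What rescues the analysis is precisely that Lemma \ref{lem:bsc+density} and (\ref{eqn:cond+exp+bounded}) furnish a uniform-in-$m$ lower bound by $\hat{\error}(\cc^{\varoast l})$; this uniformity is what allows a stochastic-domination argument to couple the modified exploration from below with the same supercritical process used in the proof of Theorem \ref{thm:supercritical}.
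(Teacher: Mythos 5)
Your proposal follows essentially the same approach the paper takes: modify the exploration so the running partial sum from $e_1$ stays in $[-M,0]$ by using the shifted acceptance window $[-m,-m+M]$, invoke Lemma~\ref{lem:bsc+density} (for the BSC) or the admissibility condition~(\ref{eqn:cond+exp+bounded}) together with Lemma~\ref{lem:truncation+error} for the uniform-in-$m$ per-stage lower bound $\hat{\error}(\cc^{\varoast l})$, and then rerun Theorems~\ref{thm:supercritical} and~\ref{thm:cycle} with $\error$ replaced by $\hat{\error}$. Your additional remarks — handling $L_v$ via a bounded window $[-M',0]$ and flagging that the uniformity in $m$ is exactly what saves the branching-process domination despite the history-dependent offset — are details the paper leaves implicit, and your treatment of them is correct.
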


Similar to Theorem~\ref{thm:cycle+deg2}, a direct consequence of Theorem~\ref{thm:bounded+cycle} is the following. With strictly positive probability as the blocklength $n$ tends to $\infty$, there is a linear (in the blocklength) fraction of variable nodes of degree two such that the following property holds. For each such variable node, there is a cycle such that all variable nodes on it are of degree two and that the sum of log-likelihood ratios along them is not only negative but also bounded from below.

Let us now show that admissible BMS channels do exist. Consider the BAWGNC with variance $\sigma^2$ and $L$-density $\mathsf{c}_{\sigma} (y) = \sqrt{\sigma^2 / 8} \exp(- (y - 2 / \sigma^2)^2 \sigma^2/ 8)$. We denote by $I_{\sigma}(m)$ the corresponding difference of the left-hand side of (\ref{eqn:cond+exp+bounded}) and the right-hand side of (\ref{eqn:cond+exp+bounded}), \ie,
\begin{equation*}
I_{\sigma}(m) = \int_{-M}^0 \cc_{\sigma} ^{\varoast l} (y) \diff y - \int_{-m}^{M-m} \cc_{\sigma} ^{\varoast l} (y) \diff y.
\end{equation*}
Then we have
\begin{align}
I_{\sigma}(m) &= \int_{-M}^{-m} \cc_{\sigma} ^{\varoast l} (y) \diff y - \int_{0}^{M-m} \cc_{\sigma} ^{\varoast l} (y) \diff y \nonumber \\
\label{eqn:integral+error+1}
&= \int_{-M}^{-m} \cc_{\sigma} ^{\varoast l} (y) \diff y - \int_{0}^{M-m} \ee^y \cc_{\sigma} ^{\varoast l} (-y) \diff y \\ 
\label{eqn:integral+error+2}
&\leqslant \int_{-M}^{-m} \cc_{\sigma} ^{\varoast l} (y) \diff y - \int_{-M+m}^{0} \cc_{\sigma} ^{\varoast l} (y) \diff y \\
\label{eqn:integral+error+3}
&\leqslant \int_{-M}^{-m} \cc_{\sigma} ^{\varoast l} (y) \diff y - \int_{-M}^{-m} \cc_{\sigma} ^{\varoast l} (y) \diff y \\
&= 0, \nonumber
\end{align}
where (\ref{eqn:integral+error+1}) is due to the symmetry of $\cc_{\sigma} ^{\varoast l}$, (\ref{eqn:integral+error+2}) follows by the fact that $\ee^{-y} \geqslant 1$ for $y \in {[{-M+m}, 0]}$, and (\ref{eqn:integral+error+3}) follows by the fact that the $L$-density $\cc_{\sigma} ^{\varoast l}$ is increasing on ${(-\infty,0]}$.

Finally we connect the exploration process to the probability of bit error under bitwise MAP decoding. We restrict ourselves to the case where transmission takes place over the BSC. 

\begin{example}
\label{exa:pattern+mapping}
Consider the Tanner graph $\tG$ shown in Figure~\ref{fig:stab+example1}, where the variable nodes are labelled by $\{1,2,3,4\}$. Besides, there are $2$ cycles involving variable nodes $1$, \ie, one cycle, call it $\cC_1$, contains $\{1,2\}$ and another one, call it $\cC_2$, contains $\{1,3,4\}$.

\begin{figure}[H]
\centering
\includegraphics{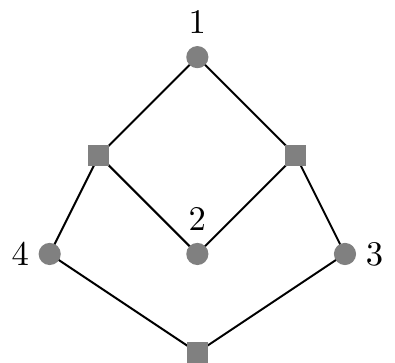}
\caption{Tanner graph $\tG$ in Example~\ref{exa:pattern+mapping}.}
\label{fig:stab+example1}
\end{figure}

From graph $\tG$ we construct a graph $\tR$ as follows. For $j \in \{0,1,\dots,15\}$, let $\ub_j$ be the vector in $\{0,1\}^4$ that corresponds to the binary expansion of $j$, and the last coordinate of $\ub_j$ corresponds to the least significant bit of $j$ in base $2$. Let $\ujj$ be the vector in $\{-1,1\}^4$ by pointwise apply the mapping $x \mapsto 1-2x$ to $\ub_j$. The edge set of $\tR$ is defined as follows. Let $\cI_1 = \{1,2\}$ and $\cI_2 = \{1,3,4\}$. For $\uii, \ujj \in \{-1,1\}^4$, they are connected if there is $\cI \in \{\cI_1, \cI_2\}$ such that $\uii_k = - \ujj_k$ for $k \in \cI$, $\uii_k = \ujj_k$ for $k \in [4] \backslash \cI$ and $|\sum_{k \in \cI} \uii_k| = |\sum_{k \in \cI} \ujj_k|$ is bounded by $1$; see Figure~\ref{fig:stab+pairing1}.   

\begin{figure}[H]
\centering
\includegraphics{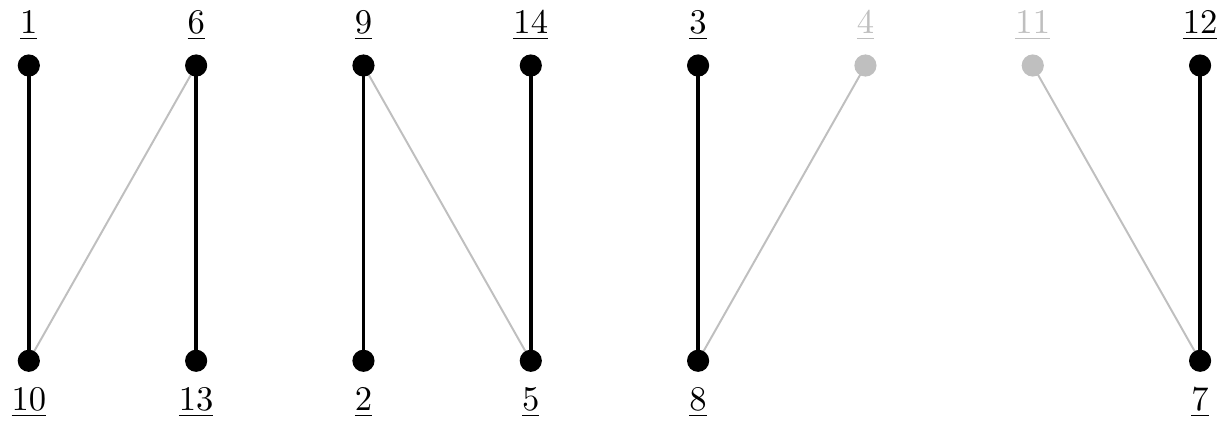}
\caption{Graph $\tR$ and an associated maximum matching in Example~\ref{exa:pattern+mapping}.}
\label{fig:stab+pairing1}
\end{figure}

The vertex set of $\tR$ is the largest subset of $\{-1,1\}^4$ such that each vertex has degree at least $1$. The graph $\tR$ has the following property for transmission over the BSC. Each vertex in $\tR$ corresponds to a realization of the channel output in $\{-1, 1\}^4$, and in the sequel we will call it a \textit{pattern}. For any two \textit{connected} patterns in $\tR$, the probability that the channel output is equal to one pattern is at most $\bar{p}/p$ times and at least $p/\bar{p}$ times the probability that the channel output is equal to the other pattern. Furthermore from Figure~\ref{fig:stab+pairing1} we see that a \textit{maximum matching} on the graph $\tR$ consists of $6$ edges, \eg, $\{\{\underline{1},\underline{10}\},\{\underline{2},\underline{9}\},\{\underline{3},\underline{8}\},\{\underline{5},\underline{14}\},\{\underline{6},\underline{13}\},\{\underline{7},\underline{12}\}\}$. As we shall see next, two connected patterns in $\tR$ give rise to opposite signs under bitwise MAP decoding. We further denote by $\tP$ the graph consisting of any maximum matching of $\tR$ together with all involved vertices belong to the matching, and call $\tR$ the \textit{realization graph} and $\tP$ the \textit{pattern graph} associated with $\tG$.  
\end{example}

We note that the ``negativity and boundedness constraint'' in Theorem~\ref{thm:bounded+cycle} can in fact be relaxed; \ie, we only require that the sum of log-likelihood ratios along the cycle described in Theorem~\ref{thm:bounded+cycle} to be \textit{bounded} in $[-M,M]$ for some $M = M(\cc, \lambda, \rho) > 0$. Clearly, the conclusion of Theorem~\ref{thm:bounded+cycle} still holds, as this event contains the original event. Some thought also reveals that the length of the cycle is of order $\Theta(2 \log_{\gamma}(\epsilon n))$. Now let us consider transmission over the BSC using the degree distribution pair $(\lambda_n,\rho_n)$. Fix a variable node index and call it $v$. For a Tanner graph $\tG \in \LDPC(\lambda_n,\rho_n)$, we construct the realization graph $\tR_v$ associated with $\tG$ as in Example~\ref{exa:pattern+mapping}, but only consider cycles on $\tG$ that only contains variable nodes of degree two, including $v$, and is of order $\Theta(2 \log_{\gamma}(\epsilon n))$. Moreover, let $\tP_v$ be the pattern graph associated with $\tG$, \ie, the edge set of $\tP_v$ is a maximum matching of $\tR_v$. Then we let $\cL^{\tR_v}$ be the vertex set of $\tR_v$ and let $\cL^{\tP_v}$ be the vertex set of $\tP_v$.

\begin{theorem}
\label{thm:stab+cond+bitwise+map}
Given a design degree distribution pair $(\lambda,\rho)$, consider transmission over the BSC with $L$-density $\cc_p$. Let $\tG$ be the Tanner graph chosen uniformly at random from the $\LDPC(\lambda_n,\rho_n)$ ensemble. A codeword from the code $\cC = \cC(\tG)$ is chosen uniformly at random and is then transmitted through the BSC. Let $\uL$ be the log-likelihood ratios associated with the channel output $\uY$ and let $v$ be any variable node of degree two in $\tG$. If $\bhatta(\cc_p) \lambda^{\prime}(0) \rho^{\prime}(1) > 1$, then $\liminf_{n \to \infty}\PP( \uL \in \cL^{\tR_v} \,|\, \uX = \uone ) > 0$. Further denote by $\rP_{\rb} ^{\MAP}(v)$ the bit error probability of $v$ under bitwise MAP decoding, conditioned on $\uone$ being transmitted. If $\liminf_{n \to \infty}\PP( \uL \in \cL^{\tP_v} \,|\, \uX = \uone ) > 0$, then it holds that $\liminf_{n \to \infty} \rP_{\rb} ^{\MAP}(v) > 0$.
\end{theorem}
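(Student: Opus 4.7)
The plan is to treat the two claims separately. For the first claim, the strategy is to apply Theorem~\ref{thm:bounded+cycle} directly: the BSC satisfies the boundedness requirement via Lemma~\ref{lem:bsc+density}, so with strictly positive asymptotic probability the degree-two node $v$ lies on a bipartite cycle $\cI$ of degree-two variable nodes along which the sum of log-likelihood ratios is negative and bounded from below by a constant $-M = -M(\cc_p,\lambda,\rho)$. For the BSC every component of $\uL$ equals $\pm \ln(\bar{p}/p)$, so on this event $\uL$ is adjacent in $\tR_v$ to the pattern obtained by flipping its signs on $\cI$ (interpreting the constant $1$ from Example~\ref{exa:pattern+mapping} as the appropriate $M/\ln(\bar{p}/p)$ in the general construction). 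Consequently $\uL \in \cL^{\tR_v}$ on that event, and taking $\liminf_n$ yields the first claim.

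For the second claim, the key structural observation is that any cycle $\cI$ through $v$ consisting of degree-two variable nodes gives rise to a codeword $\ux^{\circ} \in \cC$ via $\ux^{\circ}_i = -1$ for $i \in \cI$ and $\ux^{\circ}_i = +1$ otherwise, exactly as in the proof of Theorem~\ref{thm:block+map+block+error}. Pointwise multiplication by $\ux^{\circ}$ is then an involution on $\cC$ that swaps $\{\ux : \ux_v = +1\}$ with $\{\ux : \ux_v = -1\}$ (here $v \in \cI$ is essential). For a matched pair $(\uL, \uL')$ in $\tP_v$ associated with such a cycle, the two log-likelihood vectors differ exactly on $\cI$, so memorylessness and symmetry of the BSC yield
\begin{equation*}
\PP(\uL \mid \uX = \ux) = \PP(\uL' \mid \uX = \ux \cdot \ux^{\circ}) \quad \text{for every } \ux \in \cC.
\end{equation*}
Combining this identity with the involution gives $\sum_{\ux : \ux_v = +1} \PP(\uL \mid \ux) = \sum_{\ux : \ux_v = -1} \PP(\uL' \mid \ux)$ and its swap, from which the bitwise MAP rule $\hat{x}_v(\uy) = \argmax_{b \in \{\pm 1\}} \sum_{\ux : \ux_v = b} \PP(\uy \mid \ux)$ forces $\hat{x}_v(\uL') = -\hat{x}_v(\uL)$. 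Hence for each matched pair exactly one of $\uL, \uL'$ is mis-decoded at $v$ when $\uone$ is transmitted.

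To finish, I will use the boundedness constraint encoded in the edges of $\tR_v$: matched patterns satisfy $\bigl|\sum_{k \in \cI} \uL_k\bigr|$ uniformly bounded, so there is a constant $C = C(p,\lambda,\rho) < \infty$ with $\PP(\uL \mid \uone)/\PP(\uL' \mid \uone) \in [1/C, C]$ for every matched pair. The mis-decoded element of each pair therefore carries at least a $1/(1+C)$ share of that pair's conditional probability, and since the matched pairs partition $\cL^{\tP_v}$, summing over them yields, conditionally on any fixed $\tG$ in the ensemble,
\begin{equation*}
\PP\bigl(\hat{x}_v \neq 1 \,\big|\, \tG, \uX = \uone\bigr) \geqslant \frac{1}{1+C}\, \PP\bigl(\uL \in \cL^{\tP_v} \,\big|\, \tG, \uX = \uone\bigr).
\end{equation*}
Averaging over the $\LDPC(\lambda_n,\rho_n)$ ensemble and invoking the hypothesis on the liminf of the right-hand side proves $\liminf_{n\to\infty} \rP_{\rb}^{\MAP}(v) > 0$. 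The main obstacle I anticipate is cleanly verifying the pairing identity $\hat{x}_v(\uL') = -\hat{x}_v(\uL)$ — in particular checking that $\ux \mapsto \ux \cdot \ux^{\circ}$ really does send $\{\ux_v = +1\}$ bijectively onto $\{\ux_v = -1\}$ (which uses $v \in \cI$) and handling ties in the $\argmax$ via a fair coin flip so that the matching actually realizes equiprobable opposite decisions on each pair; once this is in place, the likelihood-ratio bound and the averaging step are routine.
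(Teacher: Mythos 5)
Your proposal is correct and follows essentially the same route as the paper: the first claim is obtained directly from Theorem~\ref{thm:bounded+cycle} via the BSC admissibility established in Lemma~\ref{lem:bsc+density}, and the second claim uses the cycle-codeword $\ux^{\circ}$ to build the involution $\ux \mapsto \ux\cdot\ux^{\circ}$ that swaps $\{\ux_v=+1\}$ with $\{\ux_v=-1\}$, forcing $\hat{x}_v^{\MAP}(\ul^+)=-\hat{x}_v^{\MAP}(\ul^-)$ for matched pairs, after which the bounded likelihood-ratio argument gives $\rP_{\rb}^{\MAP}(v) \geqslant \PP(\uL\in\cL^{\tP_v}\,|\,\uX=\uone)/(1+(\bar{p}/p)^l)$. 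The only cosmetic difference is that the paper writes the MAP rule through the log-ratio functional $\hat{x}_v^{\MAP}(\ul)$ and performs the change of variables there, whereas you phrase the same cancellation through the channel transition probabilities $\PP(\uL\,|\,\uX=\ux)$; these are equivalent.
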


\begin{proof}
That $\bhatta(\cc_p) \lambda^{\prime}(0) \rho^{\prime}(1) > 1$ implies $\liminf_{n \to \infty}\PP( \uL \in \cL^{\tR_v} \,|\, \uX = \uone ) > 0$ is a direct consequence of Theorem~\ref{thm:bounded+cycle}. For any $\ul \in \{-1,1\}^n$, let us define the bitwise MAP decoding function $\hat{x}_v ^{\MAP} (\ul)$ as
\begin{equation*}
\hat{x}_v ^{\MAP} (\ul) = \ln\Bigg( \frac{\sum_{\ux \in \cC: \ux_v = +1} \exp\big( \sum_{i=1}^n \ux_i \ul_i / 2 \big)}{\sum_{\ux \in \cC: \ux_v = -1} \exp\big( \sum_{i=1}^n \ux_i \ul_i / 2 \big)} \Bigg).
\end{equation*}
Then $v$ is decoded to be $1$ if $\hat{x}_v ^{\MAP}(\ul) > 0$ and is decoded to be $-1$ if $\hat{x}_v ^{\MAP}(\ul) < 0$. Otherwise, an independent fair coin is flipped to make a decision. Then we let $\ul^+$ and $\ul^-$ be such that $\{\ul^+,\ul^-\}$ is an edge of $\tP_v$. We know that there exists a codeword in $\cC(\tG)$, call it $\ux^{\circ}$, such that $\ux^{\circ} _v = -1$ and that $\ul^+ = \ux^{\circ} \cdot \ul^-$, where ``$\cdot$'' is pointwise multiplication. But
\begin{align*}
\hat{x}_v ^{\MAP} (\ul^+) &= \ln\Bigg( \frac{\sum_{\ux \in \cC: \ux_v = +1} \exp\big( \sum_{i=1}^n \ux_i {(\ux^{\circ} \cdot \ul^-)}_i / 2 \big)}{\sum_{\ux \in \cC: \ux_v = -1} \exp\big( \sum_{i=1}^n \ux_i {(\ux^{\circ} \cdot \ul^-)}_i / 2 \big)} \Bigg) \\
&= \ln\Bigg( \frac{\sum_{\ux \in \cC: \ux_v = +1} \exp\big( \sum_{i=1}^n {(\ux \cdot \ux^{\circ})}_i \ul^- _i / 2 \big)}{\sum_{\ux \in \cC: \ux_v = -1} \exp\big( \sum_{i=1}^n {(\ux \cdot \ux^{\circ})}_i \ul^- _i / 2 \big)} \Bigg) \\
&= \ln\Bigg( \frac{\sum_{\ux \in \cC: \ux_v = -1} \exp\big( \sum_{i=1}^n \ux_i \ul^- _i / 2 \big)}{\sum_{\ux \in \cC: \ux_v = +1} \exp\big( \sum_{i=1}^n \ux_i \ul^- _i / 2 \big)} \Bigg) = - \hat{x}_v ^{\MAP} (\ul^-).
\end{align*}
Thus $\hat{x}_v ^{\MAP} (\ul^+) < 0$ if and only if $\hat{x}_v ^{\MAP} (\ul^-) > 0$, so that one of them gives correct decoding while the other one gives incorrect decoding. But we also know that the ratio of $\PP(\uL = \ul^+ \,|\, \uX = \uone)$ and $\PP(\uL = \ul^- \,|\, \uX = \uone)$ is bounded between $(p/\bar{p})^l$ and $(\bar{p}/p)^l$, where $l$ is the parameter introduced in Lemma~\ref{lem:bhatta+error}. Consequently we can conclude that $\rP_{\rb} ^{\MAP}(v) \geqslant \PP( \uL \in \cL^{\tP_v} \,|\, \uX = \uone ) / (1 + (\bar{p}/p)^l)$, from which we complete the proof.
\end{proof}

Two remarks regarding Theorem~\ref{thm:stab+cond+bitwise+map} are the following.
\begin{enumerate}[nosep,leftmargin=*]
\item[--] \textit{Discussion on maximum matching}. Here we discuss why we expect that the number of vertices of $\tR_v$ should not differ significantly from that of $\tP_v$. Consider a vertex of $\tR_v$, call it $\underline{i}$, that has a large degree. This means that for this particular realization $\underline{i}$, there are many cycles surrounding $v$ in $\tG$, each of which contains variable nodes of degree two only and the sum of log-likelihood ratios is bounded. Intuitively we expect that each neighbor of $\underline{i}$ in $\tR_v$ should have a large degree, too. This is because each neighbor of $\underline{i}$ differs from $\underline{i}$ by a sign change along an index set corresponding to exactly one cycle. In principle, such a sign change due to a single cycle does not make its neighboring vertices significantly different from $\underline{i}$, so that, for each neighbor of $\underline{i}$, the number of cycles such that the corresponding sums of log-likelihood ratios are bounded should not be small, \ie, the degree of each neighbor of $\underline{i}$ should not be small. Thus we expect that each component of the realization graph $\tR_v$ should not be star-like, but should be ``regular enough'' so that the size of the vertex set of the maximum matching should be at least a \textit{constant} fraction of the size of vertex set of $\tR_v$.    
\item[--] \textit{Simplification to cyclic codes}. From the above analysis and discussion we can see that the construction of the realization graph $\tR$ and its maximum matching(s) only depends on the subgraph of the corresponding $\tG$ that is induced by all variable nodes of degree two and their connected check nodes. Variable nodes of degree three or higher in $\tG$ do not play a role. Consider then the following communication scenario. For a blocklength $n \in \NN$ and degree distribution pair $(\lambda_n, \rho_n)$, a Tanner graph $\tG$ is generated uniformly at random according to $(\lambda_n, \rho_n)$ and variable nodes are labelled by $[n]$ in an arbitrary but fixed way. Denote by $\uX = \uX(\tG)$ the codeword chosen uniformly at random from the code corresponding to $\tG$. The codeword $\uX$ is then transmitted over a BMS channel with $L$-density $\cc$ and we implement a \textit{genie-aided} bitwise MAP decoding algorithm which is described as follows. Let $\cI_2$ denote the index set of variable nodes of degree two. Moreover, the genie has access to $\uX_{[n] \backslash \cI_2}$, \ie, the genie knows the exact values of all bits belonging to variable nodes of degree three or higher. Therefore for any particular check node, call it $c$, the genie can perform pointwise multiplication of all bits belonging to variable nodes of degree at least three and are connected to $c$, and then stores the result at $c$. The genie does so for all check nodes. Consequently for the genie, all variable nodes of degree three or higher have served their purposes and can thus be eliminated from $\tG$. Equivalently the genie only needs to decode $\uX_{\cI_2}$ based on the graph $\tG_2$ that is the induced subgraph of $\tG$ containing all variable nodes of degree two and all connected check nodes. For each of these connected check nodes, the genie stores the ``parity'' stemming from the connected variable nodes of degree three or higher in $\tG$. Clearly, if the probability of bit error for the genie-aided bitwise MAP decoding is lower bounded away from $0$, then this is so for the standard bitwise MAP decoding. Codes whose Tanner graphs only contain degree-two variable nodes are often referred to as \textit{cyclic} codes in the literature~\cite{Zem01}, and we conclude that the stability analysis of such cyclic codes under bitwise MAP decoding ``directly translates'' to that of codes specified by general degree distributions according to the above simplification. 
\end{enumerate} 

\begin{example}
\label{exa:pattern+mapping2}
Consider the Tanner graph $\tG$ shown in Figure~\ref{fig:stab+example2}. There are three cycles consisting of variable node $1$. Let $\cI_1 = \{1,2\}$, $\cI_2 = \{1,3,4\}$ and $\cI_3 = \{1,3,5\}$; these give the index sets of variable nodes which correspond to the three cycles. 

\begin{figure}[H]
\centering
\includegraphics{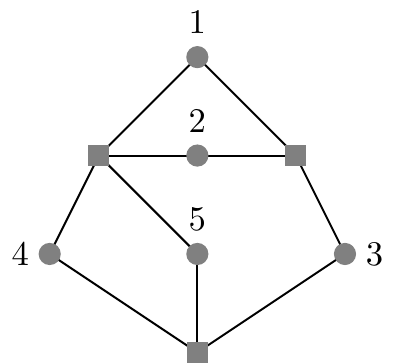}
\caption{Tanner graph $\tG$ in Example~\ref{exa:pattern+mapping2}.}
\label{fig:stab+example2}
\end{figure}

We adopt the same notation as in Example~\ref{exa:pattern+mapping} to construct the realization graph $\tR$ as well as the pattern graph $\tP$ associated with variable node $1$ accordingly. For instance, consider $\underline{20} = (-1,1,-1,1,1)$. If we sum the entries of $\underline{20}$ along the index sets $\cI_1$, $\cI_2$ and $\cI_3$, respectively, the results are $0$, $-1$ and $-1$, the absolute values of which are bounded by $1$. By multiplying its entries corresponding to $\cI_1$, $\cI_2$ and $\cI_3$, respectively, by $-1$ while keeping the rest entries unchanged, we see that $\underline{20}$ has $\underline{1}$, $\underline{2}$ and $\underline{12}$ as its neighbors in $\tR$. Moreover, if $\underline{20}$ gives rise to correct bitwise MAP decoding of variable node $1$ in $\tG$, then $\underline{1}$, $\underline{2}$ and $\underline{12}$ give rise to incorrect bitwise MAP decoding of $1$, and vice versa. The resulting graph $\tR$ together with a pattern graph $\tP$ is shown in Figure~\ref{fig:stab+pairing2}. 

\begin{figure}[H]
\centering
\includegraphics{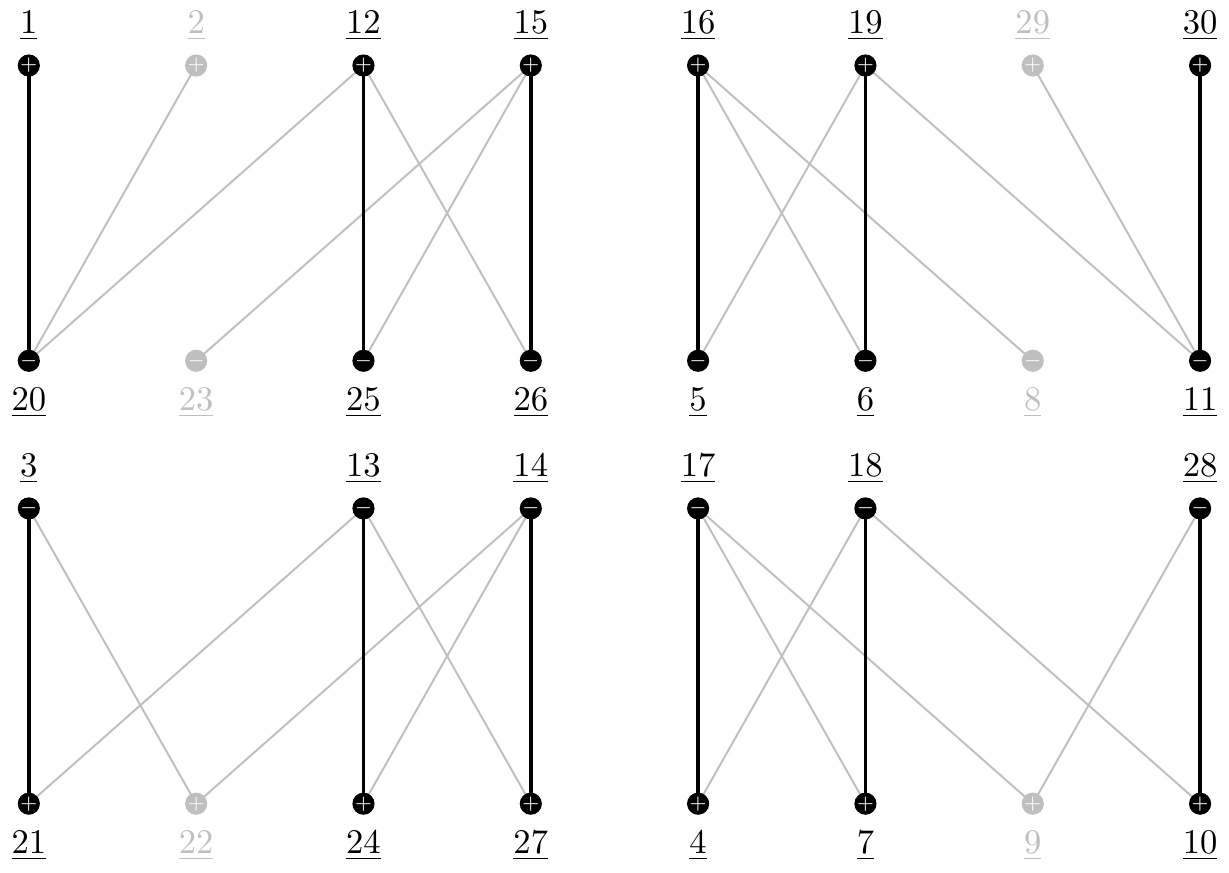}
\caption{Graph $\tR$ and an associated maximum matching in Example~\ref{exa:pattern+mapping2}.}
\label{fig:stab+pairing2}
\end{figure}

We emphasize that the $+/-$ labels shown in Figure~\ref{fig:stab+pairing2} do not necessarily correspond to the actual bitwise MAP decoding signs for variable node $1$. They are chosen only subject to the constraint that every pair of adjacent vertices in $\tR$ can only have different decoding signs for variable node $1$. 
\end{example}

\section{Conclusion}
\label{sec:stab+concl}
In this paper we performed decoding analysis of LDPC codes when transmission takes place over a BMS channel. We addressed the problem of universality of capacity-achieving LDPC codes under BP decoding, and showed that many existing such codes are not universally capacity-achieving under BP decoding. We then revealed that the key to this argument is the stability threshold under BP decoding. We further showed that the same stability threshold applies to blockwise or bitwise MAP decoding as well. We  presented how stability can determine an upper bound on the corresponding blockwise or bitwise MAP threshold, revealing the operational significance of the stability threshold.


\section*{Acknowledgment}

This work was done while W.~Liu was affiliated with EPFL and was supported in part by Swiss National Science Foundation under grant No.~200021-166106. W.~Liu would also like to acknowledge Tom Richardson for very helpful discussions.


\appendix

\section{Appendix to Section \ref{sec:ca}}
\label{appdix:ca}
In this appendix we collect the proofs of lemmas in Section~\ref{sec:ca}.


\subsection{Appendix to Section \ref{subsec:ca+exa}}
\label{appdix:inter+ca}

\begin{proof}[Proof of Lemma~\ref{lem:poi+f1+limit}]
For every $N \geqslant 2$, the first-order derivative of $f_{(N)}$ can be computed as
\begin{align}
f_{(N)} ^{\prime} (x) &= \frac{\epsilon}{H_{N-1}} \sum_{i=1}^{N-1} \frac{1}{i} \big( ( 1 - \ee^{- \alpha x} )^i \big)^{\prime} = \sum_{i=1}^{N-1} \ee^{-\alpha x} ( 1 - \ee^{-\alpha x} )^{i-1} \nonumber \\
\label{eqn:poi+1st+der}
&= \ee^{- \alpha x} + \sum_{i=2}^{N-1} \ee^{- \alpha x} ( 1 - \ee^{-\alpha x} )^{i-1} = 1 - ( 1 - \ee^{-\alpha x} )^{N-1}.
\end{align}
Denoting by $\gamma$ the Euler-Mascheroni constant and for $N \geqslant 2$, we recall the bounds on $H_N$ given by
\begin{equation}
\label{eqn:harmonic+bound}
\ln(N) + \gamma < H_N = \sum_{k=1}^N \frac{1}{k} < \ln(N) + \gamma + \frac{1}{2N}.
\end{equation}
Therefore we can write
\begin{equation}
\label{eqn:poi+bound+exp}
\frac{\ee^{- (\gamma + 1/(2(N-1))) x / \epsilon}}{(N-1)^{x/\epsilon}} \leqslant \ee^{-\alpha x} \leqslant \frac{\ee^{- \gamma x / \epsilon}}{(N-1)^{x / \epsilon}}.
\end{equation}
Take any $\kappa \in {[0,\epsilon)}$ and let $x \in [0,\kappa]$. Using $1 - x \leqslant \ee^{-x}$, we have $1 - \ee^{-\alpha x} \leqslant \exp(-\ee^{-\alpha x}) \leqslant \exp(-\ee^{-\alpha \kappa})$ so that
\begin{equation*}
(1 - \ee^{- \alpha x})^{N-1} \leqslant \exp\big(-\ee^{-(\gamma+1/2)\kappa/\epsilon} (N-1)^{1 - \kappa/\epsilon}\big),
\end{equation*}
the right-hand side of which tends to $0$ as $N$ tends to $\infty$. We conclude that the sequence of first-order derivatives on ${[0,\epsilon)}$ converges to $1$ as $N$ tends to $\infty$. Next we show that the sequence of first-order derivatives on ${(\epsilon,1]}$ converges to $0$ for $x \in {(\epsilon,1]}$. To this end, fix any $\kappa \in {(\epsilon,1]}$. Since $\ln(1-y) \geqslant -y - y^2$ for $y \in [0,1/2]$, for any $n \in \NN$ we have $(1 - y)^n = \exp(n \ln(1-y)) \geqslant \exp(-n(y+y^2))$. Using the upper bound on $\ee^{-\alpha x}$ in (\ref{eqn:poi+bound+exp}), for any $x \in [\kappa,1]$ we have the inequalities $\ee^{-\alpha x} \leqslant \ee^{-\gamma x/\epsilon} (N-1)^{-x/\epsilon} \leqslant \ee^{-\gamma}(N-1)^{-\kappa/\epsilon} \leqslant \ee^{-\gamma}(N-1)^{-1} \leqslant 1/2$ for all $N \geqslant 3$. Thus
\begin{equation*}
(1 - \ee^{- \alpha x})^{N-1} \geqslant \exp\big(-\ee^{-\gamma}(N-1)^{1-\kappa/\epsilon} - \ee^{-2\gamma} (N-1)^{1-2\kappa/\epsilon}\big),
\end{equation*}
the right-hand side of which tends to $1$ as $N$ tends to $\infty$ and this proves the claim. Finally let us consider the case where $x = \epsilon$. By (\ref{eqn:poi+1st+der}) we can write $f_{(N)} ^{\prime} (\epsilon) = 1 - ( 1 - \ee^{-H_{N-1}} )^{N-1}$. Using (\ref{eqn:poi+bound+exp}) we have
\begin{equation*}
(1 - \ee^{-H_{N-1}}) \geqslant \big(1 - \ee^{-\gamma} / (N-1)\big)^{N-1},
\end{equation*}
the right-hand side of which has limit $\ee^{-\ee^{-\gamma}}$. Besides, by (\ref{eqn:poi+1st+der}) we have $1 - \ee^{-H_{N-1}} \leqslant 1 - \ee^{-\gamma-1/(2(N-1))} / (N-1)$ so that
\begin{align*}
( 1 - \ee^{-H_{N-1}} )^{N-1} &\leqslant \big( 1 - \ee^{-\gamma-1/(2(N-1))} / (N-1) \big)^{N-1} \\
&\leqslant \exp\big(-\ee^{-\gamma-1/(2(N-1))}  \big) \\
&= \ee^{-\ee^{-\gamma}} \exp\Big( \ee^{-\gamma}\big(1 - \ee^{-1/(2(N-1))}\big) \Big) \\
&\leqslant \ee^{-\ee^{-\gamma}} \exp\big( \ee^{-\gamma} / (2(N-1)) \big),
\end{align*}
the right-hand side of which has limit $\ee^{-\ee^{-\gamma}}$ as well, and this completes the proof of Lemma~\ref{lem:poi+f1+limit}.
\end{proof}

\begin{proof}[Proof of Lemma~\ref{lem:ca+rr+mu}]
We follow the lead of~\cite{Sho99} and bound the quantity $\binom{\alpha}{N}(-1)^{N-1}$ by
\begin{equation}
\label{eqn:c0+c1+lower+bound}
\frac{c_0 \alpha}{N^{\alpha+1}} \leqslant \binom{\alpha}{N} (-1)^{N-1} \leqslant \frac{c_1 \alpha}{N^{\alpha+1}},
\end{equation}
where $c_0 = c_0(\alpha,N)$ and $c_1 = c_1(\alpha,N)$ are defined as 
\begin{align}
\label{eqn:ca+rr+c0}
c_0 &= (1-\alpha)^{\pi^2/6} \exp\big( \alpha (\pi^2/6 - \gamma + 1/(2N)) \big); \\
\label{eqn:ca+rr+c1}
c_1 &= (1-\alpha) \exp\big( \alpha (1 - \gamma + 1/N) \big).
\end{align}
Notice that since $\alpha$ tends to $0$ as $N$ tends to $\infty$, it follows that both $c_0$ and $c_1$ tend to $1$ as $N$ tends to $\infty$. Moreover,
\begin{equation}
\label{eqn:rr+one+over+f}
1 \big/ f_{(N)} ^{\prime} (0) = 1 \big/ \big( \lambda_{(N)} ^{\prime}(0) \rho_{(N)} ^{\prime}(1) \big) = 1 - \frac{N}{\alpha} \binom{\alpha}{N} (-1)^{N-1}.
\end{equation}
Therefore (\ref{eqn:c0+c1+lower+bound}) and (\ref{eqn:rr+one+over+f}) lead to 
\begin{equation*}
1 - \frac{c_1}{N^{\alpha}} \leqslant 1 \big/ f_{(N)} ^{\prime} (0) \leqslant 1 - \frac{c_0}{N^{\alpha}}.
\end{equation*}
By definition $\alpha = \ln(1/\bar{\epsilon}) / \ln(N)$ so that $N^{\alpha} = 1/\bar{\epsilon}$. Thus we get $1 - c_1 (1 - \epsilon) \leqslant 1 / f_{(N)} ^{\prime} (0) \leqslant 1 - c_0 (1 - \epsilon)$. Taking the limit $N \to \infty$ completes the proof of Lemma~\ref{lem:ca+rr+mu}.
\end{proof}


\subsection{Appendix to Section \ref{subsec:ca+deg2}}
\label{appdix:ca+deg2}

\begin{proof}[Proof of Lemma \ref{lem:ca+frac+deg2}]
We start by showing $\lim_{N \to \infty} L_2 ^{(N)} = 1/2$ in (\ref{eqn:ca+poi+l2}) for the heavy-tail Poisson sequence. By definition we can write
\begin{equation}
\label{eqn:rr+poi+l2}
L_2 ^{(N)} = \frac{\lambda_2 ^{(N)} / 2}{\sum_{i=1}^{N-1} \lambda_{i+1} ^{(N)} / (i+1)}.
\end{equation}
Specializing to the heavy-tail Poisson sequence by (\ref{eqn:poi+lambda+rho+def}), we have
\begin{align*}
\sum_{i=1}^{N-1} \frac{\lambda_{i+1} ^{(N)}}{i+1} &= \frac{1}{H_{N-1}} \sum_{i=1}^{N-1} \frac{1}{(i+1)i} = \frac{1}{H_{N-1}} \sum_{i=1}^{N-1} \Big( \frac{1}{i} - \frac{1}{i+1} \Big) \\
&= \frac{1}{H_{N-1}} \Big( 1 - \frac{1}{N} \Big).
\end{align*}
Using $\lambda_2 ^{(N)} = 1 / H_{N-1}$, we can get $\lim_{N \to \infty} L_2 ^{(N)} = \lim_{N \to \infty} (1/2) / (1 - 1/N) = 1/2$, which verifies (\ref{eqn:ca+poi+l2}) for the heavy-tail Poisson sequence.

Next we show that (\ref{eqn:ca+rr+l2}) holds for the right-regular sequence. By (\ref{eqn:rr+lambda+rho+def}) and for $1 \leqslant i \leqslant N-1$, we can write
\begin{equation}
\label{eqn:rr+lambda+def+new}
\lambda_{\alpha} ^{(N)} \lambda_{i+1} ^{(N)} = \binom{\alpha}{i} (-1)^{i-1},
\end{equation}
where $\lambda_{\alpha} ^{(N)} \coloneqq 1 - (N/\alpha) \binom{\alpha}{N} (-1)^{N-1}$. We recall from the proof of Lemma~\ref{lem:ca+rr+mu} that for every $i \geqslant 1$, it holds that
\begin{equation}
\label{eqn:rr+binom+new}
\frac{c_0 \alpha}{i^{\alpha+1}} \leqslant \binom{\alpha}{i} (-1)^{i-1} \leqslant \frac{c_1 \alpha}{i^{\alpha+1}}  ,
\end{equation}
where $c_0 = c_0(\alpha, i)$ and $c_1 = c_1(\alpha, i)$ are specified in (\ref{eqn:ca+rr+c0}) and (\ref{eqn:ca+rr+c1}) in the proof of Lemma~\ref{lem:ca+rr+mu}, respectively. By defining
\begin{equation}
\label{eqn:rr+cplus+cminus}
c^+ = (1-\alpha) \ee^{\alpha(3-\gamma)} \text{ and } c^- = (1-\alpha)^2 \ee^{\alpha(1-\gamma)},
\end{equation}
we can then obtain that $c^- \leqslant c_0(\alpha, i) \leqslant c^+$ and $c^- \leqslant c_1(\alpha, i) \leqslant c^+$ for all $i$. Notice that both $c^+$ and $c^-$ converge to $1$ as $N$ tends to $\infty$, since $\alpha = \alpha(\epsilon, N)$ tends to $0$ as $N$ tends to $\infty$; see (\ref{eqn:ca+rr+alpha+def}). Now combining (\ref{eqn:rr+lambda+def+new}), (\ref{eqn:rr+binom+new}) and (\ref{eqn:rr+cplus+cminus}) and after some manipulation, we get
\begin{equation}
\label{eqn:rr+new+lambda+bound}
\frac{c^- \alpha}{i^{\alpha+1}} \leqslant \lambda_{\alpha} ^{(N)} \lambda_{i+1} ^{(N)} \leqslant \frac{c^+ \alpha}{i^{\alpha+1}}.
\end{equation}
Now we use (\ref{eqn:rr+poi+l2}) to get
\begin{equation}
\label{eqn:rr+new+l2}
L_2 ^{(N)} = \frac{\lambda_{\alpha} ^{(N)} \lambda_2 ^{(N)} / 2}{\sum_{i=1}^{N-1} \lambda_{\alpha} ^{(N)} \lambda_{i+1} ^{(N)} / (i+1)} = \frac{\alpha / 2}{S_{\alpha} ^{(N)}},
\end{equation}
where in (\ref{eqn:rr+new+l2}), $S_{\alpha} ^{(N)}$ is defined as 
\begin{equation*}
S_{\alpha} ^{(N)} = \sum_{i=1}^{N-1} \frac{\lambda_{\alpha} ^{(N)} \lambda_{i+1} ^{(N)}}{i+1}.
\end{equation*}
It remains to work with $S_{\alpha} ^{(N)}$. By (\ref{eqn:rr+new+lambda+bound}) we can write
\begin{equation}
\label{eqn:bounds+hat+S}
c^- \alpha \hat{S}_{\alpha} ^{(N)} \leqslant S_{\alpha} ^{(N)} \leqslant c^+ \alpha \hat{S}_{\alpha} ^{(N)},
\end{equation}
where
\begin{equation}
\label{eqn:rr+sum+identity}
\hat{S}_{\alpha} ^{(N)} \coloneqq \sum_{i=1}^{N-1} \frac{1}{(i+1)i^{\alpha+1}} = \frac{1}{2} + \sum_{i=2}^{N-1} \frac{1}{(i+1)i^{\alpha+1}}.
\end{equation}
In the remaining of the proof, we will get uniform in $\alpha$ estimate of the sum on the right-hand side of (\ref{eqn:rr+sum+identity}). This will lead us to the claim in (\ref{eqn:ca+rr+l2}) in Lemma~\ref{lem:ca+frac+deg2}. To this end, we let $N \geqslant 3$ and for $i \in \{2,3,\dots,N-1\}$, we write
\begin{equation}
\label{eqn:expansion+i+1}
\frac{1}{i+1} = \frac{1}{i} \sum_{w=0}^{\infty} \frac{(-1)^w}{i^w} = \sum_{w=1}^{\infty} \frac{(-1)^{w-1}}{i^w}.
\end{equation} 
Let $W \geqslant 3$ be an odd integer and be fixed. Then we can get from (\ref{eqn:expansion+i+1}) that
\begin{equation}
\label{eqn:rr+W+formula}
\frac{1}{i+1} \leqslant \sum_{w=1}^W \frac{(-1)^{w-1}}{i^w} = \sum_{w=1}^{(W-1)/2} \Big( \frac{1}{i^{2w-1}} - \frac{1}{i^{2w}} \Big) + \frac{1}{i^{W}}.
\end{equation}
Plugging (\ref{eqn:rr+W+formula}) into the right-hand side of (\ref{eqn:rr+sum+identity}), the partial sum from $i=3$ to $N-1$ can be upper bounded by 
\begin{align}
\sum_{i=3}^{N-1} \frac{1}{(i+1) i^{\alpha+1}} &\leqslant \sum_{w=1}^{(W-1)/2} \sum_{i=3}^{N-1} \Big( \frac{1}{i^{\alpha+2w}} - \frac{1}{i^{\alpha+2w+1}} \Big) + \sum_{i=3}^{N-1} \frac{1}{i^{\alpha+W+1}} \nonumber \\
&\leqslant \sum_{w=1}^{(W-1)/2} \int_2^{N-1} \Big( \frac{1}{x^{\alpha+2w}} - \frac{1}{x^{\alpha+2w+1}} \Big) \diff x + \int_2^{N-1} \frac{1}{x^{\alpha+W+1}} \diff x \nonumber \\
&= \sum_{w=1}^W \int_2^{N-1} \frac{(-1)^{w-1}}{x^{\alpha+w+1}} \diff x \nonumber \\
\label{eqn:sum+upper+bound+rr}
&= \sum_{w=1}^W \frac{(-1)^{w-1}}{\alpha+w} \big( 2^{-(\alpha+w)} - (N-1)^{-(\alpha+w)} \big).
\end{align}
Now we notice that for fixed $W$, each individual term in the sum on the right-hand side of (\ref{eqn:sum+upper+bound+rr}) converges to $(-1)^{w-1} 2^{-w} / w$, as $N$ tends to $\infty$ so that $\alpha$ tends to $0$. By (\ref{eqn:rr+sum+identity}) and (\ref{eqn:sum+upper+bound+rr}) we can get
\begin{align*}
\limsup_{N \to \infty} \hat{S}_{\alpha} ^{(N)} &\leqslant \frac{1}{2} + \frac{1}{6} + \lim_{N \to \infty} \sum_{w=1}^W \frac{(-1)^{w-1}}{\alpha+w} \big( 2^{-(\alpha+w)} - (N-1)^{-(\alpha+w)} \big) \\
&= \frac{2}{3} + \sum_{w=1}^W \frac{(-1)^{w-1} 2^{-w}}{w}.
\end{align*}
Since $W$ is arbitrary, by letting $W$ tend to $\infty$ we obtain that
\begin{equation}
\label{eqn:rr+S+limsup}
\limsup_{N \to \infty} \hat{S}_{\alpha} ^{(N)} \leqslant \frac{2}{3} + \sum_{w=1}^{\infty} \frac{(-1)^{w-1} 2^{-w}}{w} = \frac{2}{3} + \ln(3/2) = \ln(3 \ee^{2/3} / 2).
\end{equation}
Now by (\ref{eqn:rr+new+l2}) and the right-hand side of (\ref{eqn:bounds+hat+S}) we have
\begin{equation}
\label{eqn:rr+l2+lower+bound+new}
L_2 ^{(N)} = \frac{\alpha/2}{S_{\alpha} ^{(N)}} \geqslant \frac{\alpha/2}{c^+ \alpha \hat{S}_{\alpha} ^{(N)}} = \frac{1/2}{c^+ \hat{S}_{\alpha} ^{(N)}}.
\end{equation}
Recall that $c^+$ converges to $1$ as $N$ tends to $\infty$. Therefore the lower bound in (\ref{eqn:ca+rr+l2}) in Lemma~\ref{lem:ca+frac+deg2} follows, since (\ref{eqn:rr+S+limsup}) and (\ref{eqn:rr+l2+lower+bound+new}) lead to
\begin{equation*}
\liminf_{N \to \infty} L_2 ^{(N)} \geqslant \frac{1/2}{\ln(3 \ee^{2/3} / 2)} = \frac{1}{\ln(9 \ee^{4/3} / 4)}.
\end{equation*}
We next show that the upper bound in (\ref{eqn:ca+rr+l2}) holds by a similar argument. We use (\ref{eqn:expansion+i+1}) to get
\begin{equation*}
\frac{1}{i+1} \geqslant \sum_{w=1}^{W+1} \frac{(-1)^{w-1}}{i^w} = \sum_{w=1}^{(W+1)/2} \Big( \frac{1}{i^{2w-1}} - \frac{1}{i^{2w}} \Big),
\end{equation*}
which then leads to
\begin{align}
\sum_{i=2}^{N-1} \frac{1}{(i+1) i^{\alpha+1}} &\geqslant \sum_{w=1}^{(W+1)/2} \sum_{i=2}^{N-1} \Big( \frac{1}{i^{\alpha+2w}} - \frac{1}{i^{\alpha+2w+1}} \Big) \nonumber \\
&\geqslant \sum_{w=1}^{(W+1)/2} \int_2^N \Big( \frac{1}{x^{\alpha+2w}} - \frac{1}{x^{\alpha+2w+1}} \Big) \diff x \nonumber \\
&= \sum_{w=1}^{W+1} \int_2^N \frac{(-1)^{w-1}}{x^{\alpha+w+1}} \diff x \nonumber \\
\label{eqn:sum+lower+bound+rr}
&= \sum_{w=1}^{W+1} \frac{(-1)^{w-1}}{\alpha+w} \big( 2^{-(\alpha+w)} - N^{-(\alpha+w)} \big).
\end{align}
For fixed $W$, every term in the sum on the right-hand side of (\ref{eqn:sum+lower+bound+rr}) converges to $(-1)^{w-1}2^{-w} / w$, as $N$ tends to $\infty$. Thus by (\ref{eqn:rr+sum+identity}) and (\ref{eqn:sum+lower+bound+rr}) we can derive that
\begin{align*}
\liminf_{N \to \infty} \hat{S}_{\alpha} ^{(N)} &\geqslant \frac{1}{2} + \lim_{N \to \infty} \sum_{w=1}^{W+1} \frac{(-1)^{w-1}}{\alpha+w} \big( 2^{-(\alpha+w)} - N^{-(\alpha+w)} \big) \\
&= \frac{1}{2} + \sum_{w=1}^{W+1} \frac{(-1)^{w-1} 2^{-w}}{w}.
\end{align*}
By letting $W$ tend to $\infty$ we arrive at
\begin{equation}
\label{eqn:rr+S+liminf}
\liminf_{N \to \infty} \hat{S}_{\alpha} ^{(N)} \geqslant \frac{1}{2} + \sum_{w=1}^{\infty} \frac{(-1)^{w-1} 2^{-w}}{w} = \frac{1}{2} + \ln(3/2) = \ln(3 \sqrt{\ee}/2).
\end{equation} 
Now by (\ref{eqn:rr+new+l2}) and the left-hand side of (\ref{eqn:bounds+hat+S}) we have
\begin{equation}
\label{eqn:rr+l2+upper+bound+new}
L_2 ^{(N)} = \frac{\alpha/2}{S_{\alpha} ^{(N)}} \leqslant \frac{\alpha/2}{c^- \alpha \hat{S}_{\alpha} ^{(N)}} = \frac{1/2}{c^- \hat{S}_{\alpha} ^{(N)}}.
\end{equation}
Since $c^-$ converges to $1$ as $N$ tends to $\infty$, the upper bound in (\ref{eqn:ca+rr+l2}) follows since (\ref{eqn:rr+S+liminf}) and (\ref{eqn:rr+l2+upper+bound+new}) lead to
\begin{equation*}
\limsup_{N \to \infty} L_2 ^{(N)} \leqslant \frac{1/2}{\ln(3 \sqrt{\ee} / 2)} = \frac{1}{\ln(9\ee/4)},
\end{equation*}
which completes the proof.
\end{proof}

\section{Appendix to Section \ref{sec:univ+bp}}
\label{appdix:bp}
In this appendix we give the proofs of some lemmas and theorems in Section~\ref{sec:univ+bp}.


\subsection{Appendix to Section \ref{subsec:univ}}
\label{appdix:generic}

\begin{proof}[Proof of Theorem \ref{thm:univ+generic}]
Since $\lim_{\ell \to \infty} \ln( \error(\cc^{\varoast \ell}) ) / \ell = \ln( \bhatta(\cc) )$, the condition $\bhatta(\cc) \mu_{\infty} > 1$ implies that there exists an $l = l(\cc, \mu_{\infty}) \in \NN$ such that for all $\ell \geqslant l$, we have $\error(\cc^{\varoast \ell}) \mu_{\infty}^{\ell} > 1$. In the following we fix such an $l$ and we further define two sequences $\{\beta(\ell)\}_{0 \leqslant \ell \leqslant l}$ and $\{\theta(\ell)\}_{0 \leqslant \ell \leqslant l}$, the operational significance of which will be clear as we proceed. More precisely, for $\ell \in [l]$, let
\begin{equation}
\label{eqn:univ+beta+def}
\beta(\ell) = \mu_{\infty} ^{\ell} - \mu_{\infty} ^{\ell} / 2^{l - \ell + 2},
\end{equation}
and
\begin{equation}
\label{eqn:univ+theta+def}
\theta(\ell) = 0.5 \mu_{\infty} ^{\ell+1} / 2^{l - \ell + 2}.
\end{equation}
Furthermore, let $\beta(0) = 1$ and $\theta(0) = 0.5 \mu_{\infty} / 2^{l+1}$. Then for the given $\kappa \in (0,1)$, we define
\begin{equation}
\label{eqn:univ+xi}
\xi = \min\Big\{ \kappa/\mu_{\infty}^{l+1}, \min\big\{ 2 \theta(\ell) / \beta(\ell)^2 \, | \,  0 \leqslant \ell \leqslant l \big\}  \Big\}.
\end{equation} 

We first consider the initial density $\bb_N(0) = \beta(0) \epsilon \Delta_0 + (1 - \beta(0) \epsilon) \Delta_{\infty}$ for  $N \geqslant 2$, where $\epsilon \in {(0, \xi]} \subseteq [0, \kappa]$ is arbitrary. Without loss of generality let us assume that $M = 1$ so that by (\ref{eqn:generic+univ+g2}) we have
\begin{equation*}
g_{(N)} ^{\prime \prime}(\epsilon) \geqslant -1.
\end{equation*}
Consider the $1$-st iteration of density evolution given by $\bb_N(1) = \TT_{(N)}(\bb_N(0))$, \ie, after one iteration of density evolution, we get the $L$-density $\bb_N(1)$ by applying $\TT_{(N)}$ to $\bb_N(0)$, which is given by
\begin{equation*}
\bb_N(1) = g_{(N)}( \beta(0) \epsilon ) \cc + \big( 1 - g_{(N)}( \beta(0) \epsilon ) \big) \Delta_{\infty}.
\end{equation*}
Since $g_{(N)}(0) = 0$, by Taylor's formula we can write for each $N$ that
\begin{equation}
\label{eqn:univ+taylor}
g_{(N)}( \beta(0) \epsilon ) = \beta(0) \epsilon \mu_N + \frac{1}{2} g_{(N)} ^{\prime\prime} (\epsilon_N ^*) \beta(0)^2 \epsilon^2,
\end{equation}
where $\epsilon_N ^* \in [0,\epsilon]$. Since $\epsilon \leqslant \xi$, by the definition of $\xi$ in (\ref{eqn:univ+xi}) we have the inequality $\epsilon \leqslant 2 \theta(0) / \beta(0)^2$ so that $- \beta(0)^2 \epsilon^2 / 2 \geqslant - \theta(0) \epsilon$. Now since the second-order derivative at $\epsilon_N ^*$ is bounded from below by $-1$, (\ref{eqn:univ+taylor}) leads to
\begin{equation}
\label{eqn:univ+g+beta0+taylor}
g_{(N)}( \beta(0) \epsilon ) \geqslant \beta(0)\epsilon \mu_N - \theta(0)\epsilon = \big( \beta(0)\mu_N - \theta(0) \big) \epsilon.
\end{equation}
Furthermore, by $\lim_{N \to \infty} \mu_N = \mu_{\infty}$ we have $\lim_{N \to \infty} \beta(0)\mu_N - \theta(0) = \beta(0)\mu_{\infty} - \theta(0)$, so that there exists an integer $N_1 = N_1(\xi)$ such that for all $N \geqslant N_1$, it holds that
\begin{align*}
\beta(0)\mu_N - \theta(0) &\geqslant \beta(0) \mu_{\infty} - \theta(0) - \theta(0) \\
&= \beta(0) \mu_{\infty} - 0.5 \mu_{\infty}/2^{l+1} - 0.5 \mu_{\infty}/2^{l+1} \\
&= \mu_{\infty} - \mu_{\infty}/2^{l+1} = \beta(1),
\end{align*}
where the last equality is due to the definition of $\beta(1)$ in (\ref{eqn:univ+beta+def}). Besides, the condition $\epsilon \leqslant \xi$ implies that $\epsilon \leqslant \kappa / \mu_{\infty}$, which together with the inequality $\beta(1) < \mu_{\infty}$, implies that $0 < \beta(1) \epsilon \leqslant \kappa$. Consequently for all $N \geqslant N_1$, we have
\begin{equation*}
\bb_N(1) \leftarrowtriangle \beta(1)\epsilon \cc + \big( 1 - \beta(1)\epsilon \big) \Delta_{\infty} \eqqcolon \tilde{\bb}_N(1).
\end{equation*}
Thus we obtain the upgraded $L$-density $\tilde{\bb}_N(1)$ with respect to $\bb_{N}(1)$ for all $N \geqslant N_1$, after the $1$-st iteration of density evolution initialized by $\bb_{N}(0)$. Consider next the $2$-nd iteration of density evolution for any $N \geqslant N_1$. We have
\begin{align*}
\bb_N(2) &= \TT_{(N)}\big( \bb_N(1) \big) \leftarrowtriangle \TT_{(N)}\big( \tilde{\bb}_N(1) \big) \\
&= g_{(N)}( \beta(1) \epsilon ) \cc^{\varoast 2} + \big( 1 - g_{(N)}( \beta(1) \epsilon ) \big) \Delta_{\infty}.
\end{align*}
Arguing similarly to (\ref{eqn:univ+g+beta0+taylor}), by definition of $\xi$ in (\ref{eqn:univ+xi}) we get $\epsilon \leqslant 2 \theta(1) / \beta(1)^2$ so that $- \beta(1)^2 \epsilon^2 / 2 \geqslant - \theta(1) \epsilon$. This leads to
\begin{equation*}
g_{(N)}( \beta(1) \epsilon ) \geqslant \beta(1)\epsilon \mu_N - \theta(1)\epsilon = \big( \beta(1)\mu_N - \theta(1) \big) \epsilon.
\end{equation*}
Therefore by $\lim_{N \to \infty} \mu_N = \mu_{\infty}$ we have 
\begin{align*}
\lim_{N \to \infty} \beta(1) \mu_N - \theta(1) &= \beta(1) \mu_{\infty} - \theta(1) \\
&= \mu_{\infty} ^2 - \mu_{\infty} ^2 / 2^{l+1} - \theta(1).
\end{align*}
Thus there exists an integer $N_2 = N_2(\xi)$ so that for all $N \geqslant \max\{N_1, N_2\}$, it holds that
\begin{align*}
\beta(1)\mu_N - \theta(1) &\geqslant \mu_{\infty} ^2 - \mu_{\infty} ^2 / 2^{l+1} - \theta(1) - \theta(1) \\
&= \mu_{\infty} ^2 - \mu_{\infty} ^2 / 2^{l+1} - 0.5 \mu_{\infty} ^2 / 2^{l+1} - 0.5 \mu_{\infty} ^2 / 2^{l+1} \\
&= \mu_{\infty} ^2 - \mu_{\infty} ^2 / 2^{l+1} - \mu_{\infty} ^2 / 2^{l+1} \\
&= \mu_{\infty} ^2 - \mu_{\infty} ^2 / 2^l = \beta(2).
\end{align*}
Moreover, $\epsilon \leqslant \xi$ implies that $\epsilon \leqslant \kappa / \mu_{\infty} ^2$, which together with $\beta(2) < \mu_{\infty} ^2$, implies that $0 < \beta(2) \epsilon \leqslant \kappa$. Therefore for all $N \geqslant \max\{N_1, N_2\}$, we have
\begin{equation*}
\bb_N(2) \leftarrowtriangle \beta(2) \epsilon \cc^{\varoast 2}+ \big( 1 - \beta(2)\epsilon \big) \Delta_{\infty} \eqqcolon \tilde{\bb}_N(2).
\end{equation*}
Consequently we can obtain the upgraded $L$-density $\tilde{\bb}_N(2)$ with respect to $\bb_{N}(2)$ for all $N \geqslant \max\{N_1, N_2\}$, after the $2$-nd iteration of density evolution initialized by $\bb_{N}(0)$.

In general, we can obtain similar arguments as above so that after $\ell \leqslant l$ iterations of density evolution initialized by $\bb_{N}(0)$, there exist integers $\{N_{\ell^{\prime}}\}_{1 \leqslant \ell^{\prime} \leqslant \ell}$ such that for all $N \geqslant \max_{1 \leqslant \ell^{\prime} \leqslant \ell} \{N_{\ell^{\prime}}\}$, we have
\begin{equation*}
\bb_N(\ell) \leftarrowtriangle \beta(\ell) \epsilon \cc^{\varoast \ell} + \big( 1 - \beta(\ell)\epsilon \big) \Delta_{\infty} \eqqcolon \tilde{\bb}_N(\ell).
\end{equation*}
Specializing to the case $\ell = l$ gives for all $N \geqslant \max_{1 \leqslant \ell^{\prime} \leqslant l} \{N_{\ell^{\prime}}\}$ that 
\begin{align}
\bb_N(l) &\leftarrowtriangle \beta(l) \epsilon \cc^{\varoast l} + \big( 1 - \beta(l)\epsilon \big) \Delta_{\infty} \nonumber \\
\label{eqn:univ+b+l}
&= \big( \mu_{\infty} ^l  - \mu_{\infty} ^l / 4 \big) \epsilon \cc^{\varoast l} + \big( 1 - (\mu_{\infty} ^l  - \mu_{\infty} ^l / 4) \epsilon \big) \Delta_{\infty}  \\
&\eqqcolon \tilde{\bb}_N(l) \nonumber.
\end{align}
For the $(l+1)$-th iteration we get
\begin{align*}
\bb_N(l+1) &= \TT_{(N)}\big( \bb_N(l) \big) \leftarrowtriangle \TT_{(N)}\big( \tilde{\bb}_N(l) \big) \\
&= g_{(N)}( \beta(l) \epsilon ) \cc^{\varoast (l+1)} + \big( 1 - g_{(N)}( \beta(l) \epsilon ) \big) \Delta_{\infty}.
\end{align*}
By the definition of $\xi$ in (\ref{eqn:univ+xi}) we have $\epsilon \leqslant 2 \theta(l) / \beta(l)^2$ so that $- \beta(l)^2 \epsilon^2 / 2 \geqslant - \theta(l) \epsilon$. This leads to
\begin{equation*}
g_{(N)}( \beta(l) \epsilon ) \geqslant \beta(l) \epsilon \mu_N - \theta(l) \epsilon = \big( \beta(l) \mu_N - \theta(l) \big) \epsilon.
\end{equation*}
Thus there exists an integer $N_{l+1} = N_{l+1}(\xi)$ so that for all $N \geqslant \max_{1 \leqslant \ell^{\prime} \leqslant l+1} \{N_{\ell^{\prime}}\}$,
\begin{align*}
\beta(l) \mu_N - \theta(l) &\geqslant \mu_{\infty} ^{l+1} - \mu_{\infty} ^{l+1} / 4 - \theta(l) - \theta(l) \\
&= \mu_{\infty} ^{l+1} - \mu_{\infty} ^{l+1} / 4  - 0.5 \mu_{\infty} ^{l+1} / 4 - 0.5 \mu_{\infty} ^{l+1} / 4 \\
&= \mu_{\infty} ^{l+1} - \mu_{\infty} ^{l+1} / 2,
\end{align*}
so that
\begin{align}
\label{eqn:univ+b+l+1}
\bb_N(l+1) &\leftarrowtriangle  \big( \mu_{\infty} ^{l+1}  - \mu_{\infty} ^{l+1} / 2 \big) \epsilon \cc^{\varoast (l+1)} + \big( 1 - (\mu_{\infty} ^{l+1}  - \mu_{\infty} ^{l+1} / 2) \epsilon \big) \Delta_{\infty} \\
&\eqqcolon \tilde{\bb}_N(l+1). \nonumber
\end{align}
Now recall that $\error(\cc^{\varoast \ell}) \mu_{\infty} ^{\ell} > 1$ for $\ell \in \{l, l+1\}$. As a consequence, (\ref{eqn:univ+b+l}) and (\ref{eqn:univ+b+l+1}), together with the fact that the order of error probability functional is preserved under degradation, gives
\begin{align}
\label{eqn:error+l}
\error \big( \bb_N (l) \big) &\geqslant \error\big( \tilde{\bb}_N (l) \big) > \frac{3 \epsilon}{4} > \frac{\epsilon}{2} = \error \big( \bb(0) \big); \\
\label{eqn:error+l+1}
\error\big( \bb_N (l+1) \big) &\geqslant \error\big( \tilde{\bb}_N (l+1) \big) > \frac{\epsilon}{2} = \error\big( \bb(0) \big).
\end{align}
Take $N_{\infty} = \max_{0 \leqslant \ell \leqslant l+1} \{N_{\ell}\}$. By the erasure decomposition lemma \cite{RU08} and the fact that the BEC family is naturally ordered by degradation with respect to the erasure probability, (\ref{eqn:error+l}) and (\ref{eqn:error+l+1}) imply that both $\bb_N (l)$ and $\bb_N (l+1)$ are degraded with respect to $\bb_N(0)$, for all $N \geqslant N_{\infty}$. Therefore it follows that the sequence $\{\bb_N (\ell)\}$ converges to a fixed point $\bb_N (\infty) = \bb_N ^{\epsilon} (\infty)$ and satisfies $\error(\bb_N ^{\epsilon} (\infty)) > \epsilon / 2$, for all $N \geqslant N_{\infty}$. In fact we also have 
\begin{equation}
\label{eqn:surprise+error}
\error\big( \bb_N ^{\epsilon} (\infty) \big) > \frac{\xi}{2}, 
\end{equation}
for all $N \geqslant N_{\infty}$. To verify that (\ref{eqn:surprise+error}) holds, let us assume to the contrary that $\epsilon / 2 < \error(\bb_N ^{\epsilon} (\infty)) \leqslant \xi / 2$. Let $\error(\bb_N ^{\epsilon} (\infty)) = \varepsilon / 2$. We know that $\bb_N ^{\epsilon}(\infty) \leftarrowtriangle \bb^{\varepsilon}(0) \coloneqq \varepsilon \Delta_0 + (1 - \varepsilon) \Delta_{\infty}$ by the erasure decomposition lemma. Because $0 < \epsilon < \varepsilon \leqslant \xi$, by applying the same reasoning as above, we also have $\error(\bb_N ^{\varepsilon} (l)) > \varepsilon / 2$. Since $\bb_N ^{\epsilon} (\infty)$ is a fixed point, by induction it holds $\bb_N ^{\epsilon} (\infty) \leftarrowtriangle \bb_N ^{\varepsilon} (l)$ so that $\varepsilon / 2 = \error(\bb_N ^{\epsilon} (\infty)) \geqslant \error(\bb_N ^{\varepsilon} (l)) >\varepsilon / 2$, \ie, $\varepsilon > \varepsilon$, leading to a contradiction and thus verifying (\ref{eqn:surprise+error}). Consequently $\bb_N ^{\epsilon} (\infty)$ must be degraded with respect to $\bb^{\xi} (0) \coloneqq \xi \Delta_0 + (1 - \xi) \Delta_{\infty}$ for all $N \geqslant N_{\infty}$. 

So $\bb_N ^{\epsilon} (\infty)$ is degraded with respect to $\bb_N ^{\xi} (\ell)$ for all $\ell$ and thus to $\bb_N ^{\xi} (\infty)$, which follows by the monotonicity property of the density evolution operator and the fact that $\bb_N ^{\epsilon} (\infty)$ is a fixed point. Thus
\begin{equation}
\label{eqn:left+relation}
\bb_N ^{\epsilon} (\infty) \leftarrowtriangle \bb_N ^{\xi} (\infty).
\end{equation}
On the other hand, since $\bb^{\epsilon}(0)$ is upgraded with respect to $\bb^{\xi}(0)$, it again follows by the monotonicity property of the density evolution operator that
\begin{equation}
\label{eqn:right+relation}
\bb_N ^{\xi}(\infty) \leftarrowtriangle \bb_N ^{\epsilon} (\infty).
\end{equation}
Therefore (\ref{eqn:left+relation}) and (\ref{eqn:right+relation}) show that $\bb_N ^{\epsilon} (\infty)$ and $\bb_N ^{\xi} (\infty)$ are equal for all $\epsilon \in {(0,\xi]}$ and all $N \geqslant N_{\infty}$.

Now let $\xx_N(0)$ be an arbitrary $L$-density satisfying $\error(\xx_N(0)) = \epsilon \in (0,\xi]$. By the erasure decomposition lemma, $\xx_N(0)$ is degraded with respect to $\bb^{\epsilon}(0)$. It then follows by induction that $\xx_N(\ell)$ must be degraded with respect to $\bb_N ^{\epsilon}(\ell)$ for all $N \geqslant N_{\infty}$, and therefore
\begin{equation}
\liminf_{\ell \to \infty} \error\big(\xx_N (\ell)\big) \geqslant \error\big(\bb_N ^{\epsilon} (\infty)\big) = \error\big(\bb_N ^{\xi} (\infty)\big) > \xi.
\end{equation}
On the other hand, if $\error(\xx_N(0)) > \xi$, then from the erasure decomposition lemma we know that $\xx_N(0)$ is degraded with respect to $\bb^{\xi}(0)$, and it follows again by induction that the same conclusion holds, completing the proof of Theorem~\ref{thm:univ+generic}.
\end{proof}


\subsection{Appendix to Section \ref{subsec:poi+univ}}
\label{appdix:poi+univ}

\begin{proof}[Proof of Lemma~\ref{lem:poi+f2}]
By definition of $f_{(N)}$ in (\ref{eqn:poi+f+def}) we can derive that
\begin{equation}
\label{eqn:poi+f2+simplified}
f_{(N)} ^{\prime \prime} (x) = - \alpha \ee^{-\alpha x} (N-1) ( 1 - \ee^{-\alpha x} )^{N-2}.
\end{equation}
Since $0 \leqslant \ee^{-\alpha x} \leqslant 1$ for any $\alpha \geqslant 0$ and $x \geqslant 0$, $|f_{(N)} ^{\prime \prime} (x)|$ at each $x$ can be upper bounded as
\begin{equation}
\label{eqn:poi+f2+basic}
\big| f_{(N)} ^{\prime \prime} (x)  \big| \leqslant \alpha N ( 1 - \ee^{-\alpha x} )^{N-2}.
\end{equation}
Take any $\kappa \in {[0,\epsilon)}$ and consider $x \in [0,\kappa]$. Proceeding similarly as the proof of Lemma~\ref{lem:poi+f1+limit} in Appendix~\ref{appdix:inter+ca}, we can bound $(1 - \ee^{-\alpha x})^{N-2}$ from above by $\exp(2 - \ee^{-(\gamma+1/2)\kappa/\epsilon} N^{1 - \kappa/\epsilon})$. Therefore from (\ref{eqn:poi+f2+basic}) we can obtain 
\begin{equation}
\label{eqn:poi+f2+further}
\big| f_{(N)} ^{\prime \prime} (x)  \big| \leqslant \ee^2 \alpha N \exp\big(- \ee^{- \kappa (\gamma + 1/2) / \epsilon} N^{1 - \kappa/\epsilon} \big).
\end{equation}
Finally, since $N \geqslant 3$ and $H_{N-1} < \ln(N)+ \gamma + 1/2$ by (\ref{eqn:harmonic+bound}), we get $\ln(N) > \gamma + 1/2$ so that the inequality $\alpha \leqslant (\ln(N)+ \gamma + 1/2)/\epsilon \leqslant 2 \ln(N)/\epsilon$ holds. Combining this with (\ref{eqn:poi+f2+further}) gives (\ref{eqn:poi+f2+lemma}). The right-hand side of (\ref{eqn:poi+f2+lemma}) is essentially of order $N \ln(N) \exp(-N^{1 - \kappa/\epsilon})$, while $0 < \kappa/\epsilon < 1$. It then follows that this term converges to $0$ as $N$ tends to $\infty$ and this holds uniformly for all $x \in [0, \kappa]$. 

Finally let us verify that the sequence of second-order derivative at $x = \epsilon$ diverges to $-\infty$. We use the bounds $\ln(N-1) + 1 \leqslant H_{N-1} \leqslant \ln(N-1) + 2$ and (\ref{eqn:poi+f2+simplified}) to get
\begin{align*}
- \epsilon f_{(N)} ^{\prime\prime} (\epsilon) &= H_{N-1} \ee^{- H_{N-1}} (N-1) \big( 1 - \ee^{- H_{N-1}} \big)^{N-2} \\
&\geqslant H_{N-1} \ee^{-\ln(N-1)-2} (N-1) \big( 1 - \ee^{-1}/(N-1) \big)^{N-1} \\
&\geqslant \ee^{-2} (1 - \ee^{-1}) H_{N-1}.
\end{align*}
Thus $f_{(N)} ^{\prime\prime} (\epsilon)$ can be upper bounded by $-\epsilon^{-1} \ee^{-2} (1 - \ee^{-1}) ( \ln(N-1) + 1)$, which diverges to $-\infty$.
\end{proof}


\subsection{Appendix to Section \ref{subsec:rr+univ}}
\label{appdix:rr+univ}

\begin{proof}[Proof of Lemma \ref{lem:rr+f2+formula}]
For simplicity we suppress the subscript ``$_{(N)}$'' or superscript ``$^{(N)}$'' and write
\begin{equation*}
f(x) = \epsilon \lambda_2 \big( 1 - (1-x)^\ttr \big) + \epsilon \sum_{i=2}^{N-1} \lambda_{i+1} \big( 1 - (1-x)^\ttr \big)^i.
\end{equation*} 
Then the first-order derivative of $f$ can be computed as
\begin{equation*}
f^{\prime}(x) = \epsilon \lambda_2 \ttr (1-x)^{\ttr-1} + \epsilon \ttr \sum_{i=2}^{N-1} i \lambda_{i+1} \big( 1 - (1-x)^\ttr \big)^{i-1} (1-x)^{\ttr-1}.
\end{equation*} 
Therefore we get $( \lambda_2 \ttr (1-x)^{\ttr-1} )^{\prime} = - \lambda_2 \ttr (\ttr-1) (1-x)^{\ttr-2}$ and this gives rise to (\ref{eqn:rr+F0+def}). For $i \geqslant 2$, we have
\begin{equation}
\label{eqn:rr+f2+lambda3}
\Big( \big( 1 - (1-x)^\ttr \big)^{i-1} (1-x)^{\ttr-1} \Big)^{\prime} = \big( 1 - (1-x)^\ttr \big)^{i-2} (1-x)^{\ttr-2} \big( (i \ttr - 1)(1 - x)^\ttr - \ttr + 1 \big). 
\end{equation}
This gives rise to (\ref{eqn:rr+F1+def}). Combining these gives the identity in (\ref{eqn:rr+F012}).
\end{proof}

\section{Appendix to Section \ref{sec:stab+cond+map}}
\label{appdix:map}
In this appendix we give the proofs of some lemmas and theorems in Section~\ref{sec:stab+cond+map} on the stability under blockwise or bitwise MAP decoding.


\subsection{Appendix to Section \ref{subsec:cycle+deg2}}
\label{appdix:stab+map+cycle}

In this appendix we provide the proof of Theorem~\ref{thm:subcritical}, Lemma~\ref{lem:bhatta+error} and Theorem~\ref{thm:supercritical} in Section~\ref{subsec:cycle+deg2}. These theorems describes how the exploration process behaves, depending on whether the channel entropy is above or below the stability threshold. We start by proving Theorem~\ref{thm:subcritical}, which characterizes the exploration process described in Section~\ref{subsec:exploration} when the condition $\bhatta(\cc) \lambda^{\prime}(0) \rho^{\prime}(1) < 1$ holds.

\begin{proof}[Proof of Theorem~\ref{thm:subcritical}]
Consider the exploration process where each stage consists of $l \geqslant 1$ steps. Here $l$ is an arbitrary but fixed integer.  For any fixed $a \in (0,1/2)$, we consider the process $(A_k)_{k \geqslant 0}$ and the associated random time $K = \inf\{k \, | \, A_k = 0\}$. We want to bound the probability $\PP(K > n^a)$. Notice first that for any $k \geqslant 1$, we have
\begin{equation*}
A_k = A_{k-1} + Z_k - 1 = \sum_{i=1}^k Z_i - (k-1).
\end{equation*}
Therefore, we can bound $\PP(K > k)$ as
\begin{align*}
\PP(K > k) &= \PP \bigg( \bigcap_{i=1}^k \{A_i \geqslant 1\} \bigg) \leqslant \PP(A_k \geqslant 1) \\
&= \PP \bigg( \sum_{i=1}^k Z_i - (k-1) \geqslant 1 \bigg) = \PP \bigg( \sum_{i=1}^k Z_i \geqslant k \bigg).
\end{align*}
We claim that the process $(A_k)_{0 \leqslant k \leqslant n^a}$ behaves essentially like a branching process $(\tilde{A}_k)_{k \geqslant 0}$, when $0 \leqslant k \leqslant n^a$ with $a \in (0,1/2)$ fixed. More precisely, we let
\begin{equation*}
\tilde{A}_k = \tilde{A}_{k-1} + \tilde{Z}_k - 1, 
\end{equation*}
where $\tilde{A}_{-1} = 0$ while the random variables $\tilde{Z}_k$s are independent and identically distributed. To see this, let us consider stage $1$ of the exploration process that consists of $l$ steps, where we pick active variable or check half-edges and uniformly connect it to unexplored check or variable half-edges. It is possible that the connected ones are not neutral but it is clear that the expected number of such occurrences is of order $\cO(1/n)$, where the implicit constant depends on $l$ as well as the maximum degrees $\ttl$ and $\ttr$. In general when we perform stage $i \leqslant k$, the expected number of such occurrences is of order $\cO(i/n)$, since we need to take into account the active half-edges in the previous $i-1$ stages. Therefore, if we perform in total $k$ stages, the expected number of such occurrences in at most of order $\cO(k^2/n)$ which is $o(1)$ as long as $k = o(\sqrt{n})$.

Therefore, we can couple the processes $(A_k)_{0 \leqslant k \leqslant n^a}$ and $(\tilde{A}_k)_{0 \leqslant k \leqslant n^a}$ such that
\begin{equation*}
\PP \bigg( \sum_{i=1}^k Z_i \geqslant k \bigg) = \PP \bigg( \sum_{i=1}^k \tilde{Z}_i \geqslant k \bigg) + o(1).
\end{equation*}
Now if we let $\gamma = \gamma(\cc,\lambda,\rho)$ be such that $\gamma^l = \error(\cc^{\varoast l}) (\lambda^{\prime}(0) \rho^{\prime} (1))^l < 1$, then there exists a $\delta = \delta(\cc,\lambda,\rho) > 0$ such that $(1+\delta)\gamma^l < 1$ and therefore we can get
\begin{equation*}
\PP \bigg( \sum_{i=1}^k \tilde{Z}_i \geqslant k \bigg) \leqslant \PP \bigg( \sum_{i=1}^k \tilde{Z}_i \geqslant (1+\delta) \gamma^l k \bigg).
\end{equation*}
By Hoeffding's inequality we conclude that
\begin{equation*}
\PP \bigg( \sum_{i=1}^k \tilde{Z}_i \geqslant (1+\delta) \gamma^l k \bigg) \leqslant \exp(-2 \delta^2 k^2 \gamma^{2l} / (k \ttd^{2l})) = \ee^{- 2 k \delta^2 \gamma^{2l}/\ttd^{2l}},
\end{equation*}
completing the proof of Theorem~\ref{thm:subcritical}.
\end{proof}

\begin{proof}[Proof of Lemma~\ref{lem:bhatta+error}]
By assumption, there exists a $\gamma = \gamma(\cc, \lambda, \rho) \in {(1, \ttd)}$ such that $\bhatta(\cc) \lambda^{\prime}(0) \rho^{\prime} (1) > \gamma$. Thus there exists a $\bhatta = \bhatta(\mathsf{c}, \lambda, \rho)$ such that $0 < \bhatta < \bhatta(\cc)$ satisfying $\bhatta \lambda^{\prime}(0) \rho^{\prime} (1) > \gamma$. Definition \ref{defn:residual} gives
\begin{align*}
\hat{\lambda}^{\prime}(0) \hat{\rho}^{\prime}(1) &= \frac{2 L_2 - \phi_2}{L^{\prime} (1)} \sum_{i = 1}^{\ttr-1} \frac{(i+1) R_{i+1} - \psi_{i+1}/(1 - r)}{R^{\prime}(1)} i \\
&\geqslant \frac{2 L_2 - \varepsilon}{L^{\prime}(1)} \sum_{i = 1}^{\ttr-1} \frac{(i+1) R_{i+1} - \psi_{i+1}/(1 - r)}{R^{\prime}(1)} i \\
&\geqslant \lambda^{\prime}(0) \rho^{\prime}(1) - \bigg(\frac{\varepsilon}{L^{\prime}(1)} \sum_{i=1}^{\ttr-1} \frac{(i+1) R_{i+1}}{R^{\prime}(1)} i + \frac{2 L_2}{L^{\prime}(1)} \sum_{i = 1}^{\ttr-1} \frac{\psi_{i+1}/(1 - r)}{R^{\prime}(1)} i\bigg) \\
&\geqslant \lambda^{\prime}(0) \rho^{\prime}(1) - \bigg(\frac{\rho^{\prime}(1)}{L^{\prime}(1)} + \frac{2 L_2 \ttr^2}{L^{\prime}(1) ^2}\bigg) \varepsilon.
\end{align*}
Since $\lambda^{\prime}(0) \rho^{\prime}(1) > \gamma / \bhatta$, there exists a $\xi = \xi(\cc, \lambda, \rho) > 0$ so that we have $\lambda^{\prime}(0) \rho^{\prime}(1) > \gamma / \bhatta + \xi$. So
\begin{equation}
\label{eqn:lambda+rho+bound}
\hat{\lambda}^{\prime}(0) \hat{\rho}^{\prime}(1) > \frac{\gamma}{\mathfrak{B}} + \xi - \bigg(\frac{\rho^{\prime}(1)}{L^{\prime}(1)} + \frac{2 L_2 \ttr^2}{L^{\prime}(1) ^2}\bigg) \varepsilon.
\end{equation}
Now we pick 
\begin{equation}
\label{eqn:def+varepsilon}
\varepsilon = \min \Big\{(\xi/2) \big/ (\rho^{\prime}(1) / L^{\prime}(1) + 2 L_2 \ttr^2 / L^{\prime}(1) ^2), L_2 \Big\}. 
\end{equation}
Then (\ref{eqn:lambda+rho+bound}) gives the inequality $\hat{\lambda}^{\prime}(0) \hat{\rho}^{\prime}(1) > \gamma/\bhatta + \xi/2$. This implies that there exists a $\delta = \delta(\mathsf{c}, \lambda, \rho) > \xi / 2$ such that $\hat{\lambda}^{\prime}(0) \hat{\rho}^{\prime}(1) > \gamma/\mathfrak{B} + \delta$. Similarly, we have
\begin{align}
\hat{\lambda}^{\prime} _n (0) \hat{\rho}^{\prime} _n (1) &\geqslant \lambda^{\prime} _n (0) \rho^{\prime} _n (1) - \bigg(\frac{\rho^{\prime} _n (1)}{L^{\prime} _n (1)} + \frac{2 L_{2,n} \ttr^2}{L^{\prime} _n (1) ^2}\bigg) \varepsilon \nonumber \\
&\geqslant \hat{\lambda}^{\prime}(0) \hat{\rho}^{\prime}(1) + \lambda^{\prime} _n (0) \rho^{\prime} _n (1) - \lambda^{\prime}(0) \rho^{\prime}(1) - \bigg(\frac{\rho^{\prime} _n (1)}{L^{\prime} _n (1)} + \frac{2 L_{2,n} \ttr^2}{L^{\prime} _n (1) ^2}\bigg) \varepsilon \nonumber \\
\label{eqn:conseq+def+varepsilon}
&\geqslant \hat{\lambda}^{\prime}(0) \hat{\rho}^{\prime}(1) + \lambda^{\prime} _n (0) \rho^{\prime} _n (1) - \lambda^{\prime}(0) \rho^{\prime}(1) \nonumber \\
&\quad\quad\quad\quad\quad\,\,\, - \frac{\rho^{\prime} _n (1) / L^{\prime} _n (1) + 2 L_{2,n} \ttr^2 / L^{\prime} _n (1) ^2}{\rho^{\prime}(1) / L^{\prime}(1) + 2 L_2 \ttr^2 / L^{\prime}(1) ^2} \frac{\xi}{2} \\
\label{eqn:asymptotic+lower+bound+xi}
&\geqslant \hat{\lambda}^{\prime}(0) \hat{\rho}^{\prime}(1) - \delta > \gamma / \bhatta,
\end{align}
where (\ref{eqn:conseq+def+varepsilon}) follows by definition of $\varepsilon$ in (\ref{eqn:def+varepsilon}), and the second to last inequality in (\ref{eqn:asymptotic+lower+bound+xi}) follows by the convergence of $(\lambda_n,\rho_n)$ in Section~\ref{subsec:prelim}. Therefore for the given $\beta = \beta(\cc, \lambda, \rho) = (\bhatta(\cc)\ee)^{3/2} \sqrt{2 \ln(\bhatta(\cc) / \bhatta)} / (9 \pi)$, we can select an $l = l(\cc, \lambda, \rho) \in \NN$ such that 
\begin{equation*}
\beta \big(\bhatta \hat{\lambda}^{\prime} _n (0) \hat{\rho}^{\prime} _n (1)\big)^l \geqslant \beta \big(\bhatta \hat{\lambda}^{\prime}(0) \hat{\rho}^{\prime}(1) - \delta \big)^l > \gamma^l. 
\end{equation*}
By \cite{RU08} we know that $\error(\cc^{\varoast l}) \geqslant \beta \bhatta^l$, which completes the proof.
\end{proof}

\begin{proof}[Proof of Theorem~\ref{thm:supercritical}]
Let $\varepsilon = \varepsilon(\cc,\lambda,\rho) > 0$, $l = l(\cc,\lambda,\rho) \in \NN$, $\gamma = \gamma(\cc,\lambda,\rho) > 1$ be determined by Lemma~\ref{lem:bhatta+error} and fixed. First notice that every time we perform $l$ steps in a stage in the exploration process, the maximum number of check half-edges that we have explored is upper bounded by $\ttd^l$, where we recall that $\ttd$ is the maximum check-node degree $\ttr$ minus $1$. Consequently, if we only consider the process $(A_k)_{k \geqslant 0}$ up to stage $\varepsilon n / \ttd^l$, then the total number of check half-edges that we have explored does not exceed $\varepsilon n$ and the same goes for the total number of variable half-edges. For notational simplicity, we define 
\begin{equation*}
\epsilon = \varepsilon / \ttd^{l}.
\end{equation*}
Let $\rrF_k$ be the $\sigma$-algebra generated by the status of variable and check half-edges up to and including stage $k$, and in particular $\rrF_k$ contains the information about the number of active, open, neutral and explored variable and check half-edges. We claim that $\EE[Z_k | \rrF_{k-1}] \geqslant \gamma^l$ as long as $k \leqslant \epsilon n$. To see this, we recall the modification of status of active check half-edges in the additional step after step $l$ in stage $k$. We can write
\begin{equation*}
Z_k = \sum_{j = 1}^{N_k(l)}\Big( \II_{\{\sum_{u \in \mathcal{P}_j} L_u < 0\}} + \II_{\{B_{\mathcal{P}_j} = 0\}} \II_{\{\sum_{u \in \mathcal{P}_j} L_u = 0\}} \Big),
\end{equation*}
where $N_k(l)$ is the number of active check half-edges at stage $k$ before the additional step and after step $l$, and $\mathcal{P}_j$ is the set of variable nodes of degree two which are contained in the path of length $2l$ that contains active check nodes only and also contains the $j$-th active check half-edge. By the tower property of expectation, we have
\begin{align*}
\EE[Z_k | \rrF_{k-1}] &= \EE \big[ \EE[ Z_k | \rrF_k ] \big| \rrF_{k-1} \big] \\
&= \error(\cc^{\varoast l}) \EE[N_k(l) | \rrF_{k-1}].
\end{align*}
Now write 
\begin{equation*}
N_k(l) = \sum_{j = 1}^{N_k(l-1)} X_j(l), 
\end{equation*}
where $N_k(l-1)$ is the number of active check half-edges from which we directly grew out all $N_k(l)$ active check half-edges at step $l$ in stage $k$, \ie, these half-edges are at graph distance $2$ from the $N_k(l)$ half-edges at step $l-1$ in stage $k$; and $X_j(l)$ is the number of active check half-edges at stage $k$ when performing step $l$ and which are originated from the $j$-th active check half-edge after step $l-1$ but before step $l$. If we denote by $\rrF_k(l-1)$ the $\sigma$-algebra generated by $\rrF_{k-1}$ and the status of variable and check half-edges up to and including step $l-1$ in stage $k$, then we can continue writing
\begin{align*}
\EE[Z_k | \rrF_{k-1}] &= \error(\cc^{\varoast l}) \EE\big[ \EE[N_k(l) | \rrF_{k} (l-1)] \big| \rrF_{k-1} \big] \\
&\geqslant \error(\cc^{\varoast l}) \hat{\lambda}^{\prime} _n (0) \hat{\rho}^{\prime} _n (1) \EE[N_k(l-1) | \rrF_{k-1}],
\end{align*}
where the last inequality holds since $\EE[X_j(l) | \rrF_{k}(l-1)]$ is lower bounded by $\hat{\lambda}^{\prime} _n (0) \hat{\rho}^{\prime} _n (1)$. Indeed, because $k \leqslant \varepsilon n / \ttd^l$, the number of explored check half-edges has not exceeded $\varepsilon n$. The claim then follows by recalling Definition \ref{defn:residual}.

In general and for any $t \in \{0,1,2,\dots,l-1\}$, we denote by $N_k(l-t)$ the number of active check half-edges after performing step $l-t$ and denote by $\rrF_k(l-t)$ the $\sigma$-algebra generated by $\rrF_{k-1}$ and the status of variable and check half-edges up to and including step $l-t$ in stage $k$. Notice that $N_k(0) = 1$ and $\rrF_k(0) = \rrF_{k-1}$ since in step $0$ we pick the active check half-edge in the breadth-first order which is known based on $\rrF_{k-1}$. Continuing the computation in a similar fashion, we obtain that for any $t \in \{0,1,2,\dots,l-1\}$,
\begin{equation}
\label{eqn:Zk+t}
\EE[Z_k | \rrF_{k-1}] \geqslant \error(\cc^{\varoast l}) \big( \hat{\lambda}^{\prime} _n (0) \hat{\rho}^{\prime} _n (1) \big)^{t} \EE[N_k(l-t) | \rrF_{k-1}].
\end{equation}
Since $\EE[N_k(1) | \rrF_{k-1}] = \EE[N_k(1) | \rrF_k(0)] \geqslant \hat{\lambda}^{\prime} _n (0) \hat{\rho}^{\prime} _n (1)$, the inequality in (\ref{eqn:Zk+t}) leads to
\begin{equation}
\label{eqn:cond+exp+lb}
\EE[Z_k | \rrF_{k-1}] \geqslant \error(\cc^{\varoast l}) \big( \hat{\lambda}^{\prime} _n (0) \hat{\rho}^{\prime} _n (1) \big)^l \geqslant \gamma^l,
\end{equation}
where the last inequality in (\ref{eqn:cond+exp+lb}) is due to Lemma \ref{lem:bhatta+error}. On the other hand, it is not difficult to see that $\EE[Z_k | \rrF_{k-1}] \leqslant \ttd^l$ holds.

Now recall that the process $(A_k)_{k \geqslant 0}$ satisfies the recursion $A_k = A_{k-1} + Z_k - 1$ so that we can further write $A_k = \sum_{i=1}^k Z_i - (k - 1)$. Therefore for any $k \geqslant 1$, we have that
\begin{align}
\PP \big( A_k \leqslant (\bar{\delta} \gamma^l - 1) k \big) &= \PP \bigg( \sum_{i=1}^k Z_i - (k-1) \leqslant (\bar{\delta} \gamma^l - 1) k \bigg) \nonumber \\
&= \PP \bigg( \sum_{i=1}^k Z_i \leqslant \bar{\delta} \gamma^l k - 1 \bigg) \nonumber \\
\label{eqn:Ak+bound}
&\leqslant \PP \bigg( \sum_{i=1}^k Z_i \leqslant \bar{\delta} \gamma^l k \bigg).
\end{align}
Thus for any $s > 0$, we can further bound (\ref{eqn:Ak+bound}) by 
\begin{align}
\PP \big( A_k \leqslant (\bar{\delta} \gamma^l - 1) k \big) &\leqslant \PP \bigg( -s \sum_{i=1}^k Z_i \geqslant -s \bar{\delta} \gamma^l k \bigg) \nonumber \\
&= \PP \Bigg( \exp \bigg( -s \sum_{i=1}^k Z_i \bigg) \geqslant \exp(-s \bar{\delta} \gamma^l k) \Bigg) \nonumber \\
\label{eqn:Ak+markov}
&\leqslant \ee^{s \bar{\delta} \gamma^l k} \EE \Bigg[ \exp \bigg( -s \sum_{i=1}^k Z_i \bigg) \Bigg],
\end{align}
where (\ref{eqn:Ak+markov}) is due to Markov's inequality. Notice that the expectation in the last inequality involves $(Z_i)_{1 \leqslant i \leqslant k}$ and to get a further upper bound, we write
\begin{equation}
\label{eqn:chernoff+Zk}
\EE \Bigg[ \exp \bigg( -s \sum_{i=1}^k Z_i \bigg) \Bigg] = \EE \Bigg[ \exp \bigg( -s \sum_{i=1}^{k-1} Z_i \bigg) \EE \big[ \ee^{-s Z_k} \big| \rrF_{k-1} \big] \Bigg].
\end{equation}
According to (\ref{eqn:cond+exp+lb}) we know that $\gamma^l \leqslant \EE[Z_k | \rrF_{k-1}] \leqslant \ttd^l$. Given these bounds on the conditional expectation of $Z_k$ given $\rrF_k$, we can characterize ``feasible'' conditional distributions of $Z_k$ given $\rrF_{k-1}$, which then enables us to derive an upper bound on $\EE[e^{-s Z_k} | \rrF_{k-1}]$. 

More concretely, $\EE[e^{-s Z_k} | \rrF_{k-1}]$ can be further upper bounded by the negative of the optimum of the following linear program $\mathsf{LP}^{\mathrm{prim}}$ in the standard primal form as
\begin{align}
\label{eqn:lp+prim}
\begin{split}
\min_{\{p_j\}} \quad & - \sum_{j=0}^{\ttd^l} p_j \ee^{-s j} \\
\mathrm{s.t.}  \quad & - \sum_{j=0}^{\ttd^l} p_j j + \gamma^l \leqslant 0; \\
                     & \sum_{j=0}^{\ttd^l} p_j - 1 \leqslant 0; \\
                     & - p_j \leqslant 0, \forall j.
\end{split}
\end{align}
Notice that in the above linear program we used the inequality constraint $\sum_{j=0}^{\ttd^l} p_j \leqslant 1$ to replace the equality constraint requiring $\{p_j\}$ a valid distribution, since in both cases the constraint sets are bounded and closed so that the optimal value is obtained on the boundary of the corresponding constraint set. Furthermore, the Lagrangian function $\cL = \cL(p_0,\dots,p_{\ttd^l},a,b,c_0,\dots,c_{\ttd^l})$ is defined as
\begin{equation*}
\cL = - \sum_{j=0}^{\ttd^l} p_j \ee^{-s j} - a \bigg( \sum_{j=0}^{\ttd^l} p_j j - \gamma^l \bigg) + b \bigg( \sum_{j=0}^{\ttd^l} p_j - 1 \bigg) - \sum_{j=0}^{\ttd^l} c_j p_j,
\end{equation*}
where the Lagrangian multipliers $a$, $b$ and $c_j$s are nonnegative. Therefore we can write the Lagrangian dual function $\cD = \cD(a,b,c_0,\dots,c_{\ttd^l})$ as
\begin{align}
\cD(a,b,c_0,\dots,c_{\ttd^l}) &= \inf_{\{p_j\}} \cL(p_0,\dots,p_{\ttd^l},a,b,c_0,\dots,c_{\ttd^l}) \nonumber \\
\label{eqn:dual}
&= \inf_{\{p_j\}} \sum_{j=0}^{\ttd^l} \big( - \ee^{-s j} - a j + b - c_j \big) p_j + a \gamma^l - b.
\end{align}
We notice that the infimum in (\ref{eqn:dual}) attains $-\infty$ unless $- \ee^{-s j} - a j + b - c_j = 0$ for every $j \in \{0,1,\dots,\ttd^l\}$. Since the $c_j$s are nonnegative, the above conditions imply that $b - a j \geqslant \ee^{-s j}$ for every $j$. Consequently we obtain the dual linear program $\mathsf{LP}^{\mathrm{dual}}$ associated to the linear program $\mathsf{LP}^{\mathrm{prim}}$ as
\begin{align}
\label{eqn:lp+dual}
\begin{split}
\max_{a,b}    \quad & a \gamma^l - b \\
\mathrm{s.t.} \quad & b - a j \geqslant \ee^{-s j}, \forall j; \\
			     & a \geqslant 0. 
\end{split}
\end{align}
Since the function $x \mapsto b - a x$ is linear and the function $x \mapsto \ee^{-s x}$ is convex-$\cup$, it follows by convexity that the conditions $b - a j \geqslant \ee^{-s j}$ are fulfilled for all $j$ if and only if the conditions are fulfilled for $j=0$ and $j=\ttd^l$, which are given by $b \geqslant 1$ and $b - a \ttd^l \geqslant \ee^{-s \ttd^l}$. Now for any fixed $b \geqslant 1$, the second condition gives $a \leqslant (b - \ee^{-s \ttd^l}) / \ttd^l$ and the maximum value of the linear program $\mathsf{LP}^{\mathrm{dual}}$ in (\ref{eqn:lp+dual}) in this case is 
\begin{equation*}
(b - \ee^{-s \ttd^l}) \gamma^l / \ttd^l - b = - b (1 - \gamma^l / \ttd^l) - \ee^{-s \ttd^l} \gamma^l / \ttd^l,
\end{equation*}
which is obtained by setting $a = (b - \ee^{-s \ttd^l}) / \ttd^l$. Notice that since $b \geqslant 1$, the condition $a \geqslant 0$ is satisfied. Consequently we get the optimal value of the linear program $\mathsf{LP}^{\mathrm{dual}}$ in (\ref{eqn:lp+dual}) by setting $b = 1$ and the optimum is equal to $-1 + \gamma^l / \ttd^l - \ee^{-s \ttd^l} \gamma^l / \ttd^l$. This in particular implies that
\begin{equation*}
\EE[e^{-s Z_k} | \rrF_{k-1}] \leqslant 1 - \gamma^l / \ttd^l + \ee^{-s \ttd^l} \gamma^l / \ttd^l,
\end{equation*}
and as a consequence, (\ref{eqn:chernoff+Zk}) becomes
\begin{equation*}
\EE \Bigg[ \exp \bigg( -s \sum_{i=1}^k Z_i \bigg) \Bigg] \leqslant \Bigg( 1 - \frac{\gamma^l}{\ttd^l} + \frac{\gamma^l}{\ttd^l} \ee^{-s \ttd^l} \Bigg) \EE \Bigg[ \exp \bigg( -s \sum_{i=1}^{k-1} Z_i \bigg) \Bigg].
\end{equation*}
Now we continue the computation in a similar approach by sequentially bounding the conditional expectations $\EE[\ee^{-s Z_{k-t}} | \rrF_{k-t-1}]$, where $t$ increases from $1$ to $k-1$. Since $k \leqslant \epsilon n$, we know that $\gamma^l$ is a valid lower bound on $\EE[\ee^{-s Z_{k-t}} | \rrF_{k-t-1}]$ so that the negative of the optimum of the previous linear program $\mathsf{LP}^{\mathrm{prim}}$ gives an upper bound of it as well. Therefore, we get
\begin{equation*}
\EE \Bigg[ \exp \bigg( -s \sum_{i=1}^k Z_i \bigg) \Bigg] \leqslant \Bigg( 1 - \frac{\gamma^l}{\ttd^l} + \frac{\gamma^l}{\ttd^l} \ee^{-s \ttd^l} \Bigg)^k,
\end{equation*}
so that (\ref{eqn:Ak+markov}) becomes
\begin{align}
\PP \big( A_k \leqslant (\bar{\delta} \gamma^l - 1) k \big) &\leqslant \ee^{s \bar{\delta} \gamma^l k} \big( 1 - \gamma^l / \ttd^l + \ee^{-s \ttd^l} \gamma^l / \ttd^l \big)^k \nonumber \\
\label{eqn:s+bound}
&= \Big( \ee^{s \bar{\delta} \gamma^l} \big( 1 - \gamma^l / \ttd^l + \ee^{-s \ttd^l} \gamma^l / \ttd^l \big) \Big)^k.
\end{align}
In Lemma~\ref{lem:gs} and Lemma~\ref{lem:exp+g+bound} at the end of the proof of this theorem we show that by deriving a uniform in $s$ bound, the left-hand side of (\ref{eqn:s+bound}) can be bounded by the exponentially decreasing function $\exp(-k \delta^2 \gamma^l/(2\ttd^l))$, and this gives the desired result (\ref{eqn:exp+bound+Ak}) in Theorem~\ref{thm:supercritical}. In the following we prove (\ref{eqn:exp+bound+K}) by a similar line of arguments as above.

Recall that by definition $K = \inf\{k \, | \, A_k = 0\}$ and for some $\kappa = \kappa(\cc,\lambda,\rho) \in \NN$ that will be specified later, we want to show that conditional on the event $\{K > \kappa\}$, the probability of the event $\{K \leqslant \epsilon n\}$ cannot tend to $1$ as the blocklength tends to $\infty$. We start by writing
\begin{align}
\PP( K \leqslant k \, | \, K > \kappa ) &\leqslant \PP \bigg( \bigcup_{j=\kappa+1}^k \{A_j = 0\} \, \bigg| \, \bigcap_{j=1}^{\kappa} \{A_j \geqslant 1\} \bigg) \nonumber \\
&\leqslant \sum_{j=\kappa+1}^k \PP \bigg( A_j = 0 \, \bigg| \, \bigcap_{j=1}^{\kappa} \{A_j \geqslant 1\} \bigg) \nonumber \\
&= \sum_{j=\kappa+1}^k \PP \bigg( \sum_{i=1}^j Z_i = j-1 \, \bigg| \, \bigcap_{j=1}^{\kappa} \{A_j \geqslant 1\} \bigg) \nonumber \\
\label{eqn:K+Zi}
&\leqslant \sum_{j=\kappa+1}^k \PP \bigg( \sum_{i=1}^j Z_i \leqslant j \, \bigg| \, \bigcap_{j=1}^{\kappa} \{A_j \geqslant 1\}\bigg).
\end{align}
Since the parameters $\gamma$ and $l$ satisfy $\gamma^l > 1$, for any $\delta = \delta(\cc,\lambda,\rho) \in (0,1)$ such that $\bar{\delta} \gamma^l > 1$, by (\ref{eqn:K+Zi}) we get
\begin{equation*}
\PP( K \leqslant k \, | \, K > \kappa) \leqslant \sum_{j=\kappa+1}^k \PP \bigg( \sum_{i=1}^j Z_i \leqslant \bar{\delta} \gamma^l j \, \bigg| \, \bigcap_{j=1}^{\kappa} \{A_j \geqslant 1\} \bigg).
\end{equation*}
Therefore for any $s > 0$, we get 
\begin{align}
\PP( K \leqslant k \, | \, K > \kappa ) &\leqslant \sum_{j=\kappa+1}^k \PP \bigg( -s \sum_{i=1}^j Z_i \geqslant -s \bar{\delta} \gamma^l j \, \bigg| \, \bigcap_{j=1}^{\kappa} \{A_j \geqslant 1\} \bigg) \nonumber \\
&= \sum_{j=\kappa+1}^k \PP \Bigg( \exp\bigg( -s \sum_{i=1}^j Z_i  \bigg) \geqslant \ee^{-s \bar{\delta} \gamma^l j} \, \bigg| \, \bigcap_{j=1}^{\kappa} \{A_j \geqslant 1\} \Bigg) \nonumber \\
\label{eqn:cond+markov}
&\leqslant \sum_{j=\kappa+1}^k \ee^{s \bar{\delta} \gamma^l j} \EE \Bigg[ \exp\bigg( -s \sum_{i=1}^j Z_i \bigg) \, \bigg| \, \bigcap_{j=1}^{\kappa} \{A_j \geqslant 1\}\Bigg].
\end{align}
Consider first the case $j = \kappa+1$ in (\ref{eqn:cond+markov}). We can bound the conditional expectation of $\ee^{-s Z_{\kappa+1}}$ in (\ref{eqn:cond+markov}) by using the linear program $\mathsf{LP}^{\mathrm{prim}}$ in (\ref{eqn:lp+prim}). This corresponds to the case where $i = j = \kappa+1$. In general for $i$ ranging from $\kappa$ to $1$, the conditioning events $\{A_{j^{\prime}} \geqslant 1\}$ for $j^{\prime} \leqslant \kappa$ impose further conditions on the $Z_i$s; \eg, $\{A_1 \geqslant 1\}$ is equivalent to $\{Z_1 \geqslant 1\}$. The same reasoning goes for a generic $j > \kappa+1$. Thus instead of using the objective function in (\ref{eqn:lp+prim}), we need to consider 
\begin{equation}
\label{eqn:cond+opt+Z}
\min_{\jmath, \{p_j\}} - \sum_{j=\jmath}^{\ttd^l} \ee^{-s j} p_j, 
\end{equation}
where $\jmath$ is the additional parameter taking values in $\{0,1,\dots,\ttd^l\}$. Furthermore, the constraints imposed on  (\ref{eqn:cond+opt+Z}) are $- \sum_{j \geqslant \jmath} p_j j + \gamma^l \leqslant 0$, $\sum_{j \geqslant \jmath} p_j -1 \leqslant 0$ and $- p_j \leqslant 0$ for all $j$. 

We claim that the optimal value of (\ref{eqn:cond+opt+Z}) is lower bounded by the optimal value of the linear program $\mathsf{LP}^{\mathrm{prim}}$. We show this by showing that any optimal solution to (\ref{eqn:cond+opt+Z}) is also a feasible solution to the linear program $\mathsf{LP}^{\mathrm{prim}}$. More precisely, let $\hat{\jmath}$ and $\{\hat{p}_j\}$ be an optimal solution to (\ref{eqn:cond+opt+Z}) and let $\{q_j\}$ be such that for all $j \leqslant \hat{\jmath}-1$, $q_j = 0$ and for $j \geqslant \hat{\jmath}$, $q_j = \hat{p}_j$. Then $\{q_j\}$ attains the optimum for (\ref{eqn:cond+opt+Z}) while it satisfies all constraints in (\ref{eqn:lp+prim}). Therefore we can use $\mathsf{LP}^{\mathrm{prim}}$ to bound $\PP( K \leqslant k \, | \, K > \kappa )$ as 
\begin{equation}
\label{eqn:K+bound+in+s}
\PP( K \leqslant k \, | \, K > \kappa ) \leqslant \sum_{j=\kappa+1}^k \Big( \ee^{s \bar{\delta} \gamma^l} \big( 1 - \gamma^l / \ttd^l + \ee^{-s \ttd^l} \gamma^l / \ttd^l \big) \Big)^j.
\end{equation}
Lemma~\ref{lem:gs} and Lemma~\ref{lem:exp+g+bound} at the end of the proof of this theorem allow us to write
\begin{align}
\PP( K \leqslant k \, | \, K > \kappa ) &\leqslant \sum_{j=\kappa+1}^k \ee^{-j \delta^2 \gamma^l / (2 \ttd^l)} \nonumber \\
&= c_{\kappa} \big( 1 - \ee^{- (k - \kappa) \delta^2 \gamma^l / (2 \ttd^l)} \big),
\end{align}
where
\begin{equation}
\label{eqn:ck+K}
c_{\kappa} \coloneqq \frac{\exp(- (\kappa+1) \delta^2 \gamma^l / (2 \ttd^l))}{1 - \exp(- \delta^2 \gamma^l / (2 \ttd^l))}.
\end{equation} 
Notice now that the numerator on the right-hand side of (\ref{eqn:ck+K}) decreases exponentially in $\kappa$ so that for the given parameters $\gamma$, $l$, $\ttd$ and any fixed $\delta \in (0,1)$ such that $\bar{\delta} \gamma^l > 1$, we can select a $\kappa = \kappa(\cc,\lambda,\rho,\delta) \in \NN$ such that $c_{\kappa} \in (0,1)$. This completes the proof of Theorem~\ref{thm:supercritical}.
\end{proof}

\begin{lemma}
\label{lem:gs}
Let $\gamma = \gamma(\cc,\lambda,\rho)$ and $l = l(\cc,\lambda,\rho)$ be specified by Lemma~\ref{lem:bhatta+error} and let $\delta \in (0,1)$ be arbitrary and fixed. Define the function $g: (0,\infty) \to (0,\infty)$ by
\begin{equation*}
g(s) = \ee^{s \bar{\delta} \gamma^l} \big( 1 - \gamma^l/\ttd^l + \ee^{-s \ttd^l} \gamma^l / \ttd^l \big).
\end{equation*}
Then for all $s >0$, it holds that
\begin{equation*}
g(s) \geqslant \bar{\delta}^{-\bar{\delta} \gamma^l / \ttd^l} \bigg( \frac{\ttd^l - \gamma^l}{\ttd^l - \bar{\delta} \gamma^l} \bigg)^{1 - \bar{\delta} \gamma^l / \ttd^l}.
\end{equation*}
\end{lemma}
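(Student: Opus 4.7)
The plan is to find the global minimum of $g$ on $(0,\infty)$ in closed form and then check that its value equals the claimed right-hand side. First I would rewrite $g$ as a sum of two exponentials,
\begin{equation*}
g(s) = (1 - \alpha) \ee^{s \bar{\delta}\gamma^l} + \alpha \ee^{s(\bar{\delta}\gamma^l - \ttd^l)}, \qquad \alpha \coloneqq \gamma^l/\ttd^l \in (0,1),
\end{equation*}
and observe that each summand is a convex exponential in $s$, so $g$ is (strictly) convex on $(0,\infty)$. Consequently $g$ admits at most one critical point, which, if it lies in $(0,\infty)$, is necessarily its global minimum.

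Next I would differentiate and set $g'(s)=0$. A short computation gives
\begin{equation*}
g'(s) = \ee^{s\bar{\delta}\gamma^l}\Big(\bar{\delta}\gamma^l(1-\alpha) - (\ttd^l - \bar{\delta}\gamma^l)\alpha \ee^{-s \ttd^l}\Big),
\end{equation*}
so the unique critical point $s^{*}$ is determined by
\begin{equation*}
\ee^{-s^{*} \ttd^l} = \frac{\bar{\delta}(\ttd^l - \gamma^l)}{\ttd^l - \bar{\delta}\gamma^l}.
\end{equation*}
Since $\delta \in (0,1)$, the right-hand side is strictly less than $1$ (the inequality reduces to $\bar{\delta}\ttd^l < \ttd^l$), so $s^{*} > 0$.

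Then I would substitute $s^{*}$ back into $g$. Using $\alpha \ttd^l = \gamma^l$, the ``inner'' factor simplifies cleanly as
\begin{equation*}
1 - \alpha + \alpha \ee^{-s^{*} \ttd^l} = (1-\alpha)\cdot \frac{\ttd^l - \bar{\delta}\gamma^l + \bar{\delta}\gamma^l}{\ttd^l - \bar{\delta}\gamma^l} = \frac{\ttd^l - \gamma^l}{\ttd^l - \bar{\delta}\gamma^l},
\end{equation*}
while $\ee^{s^{*}\bar{\delta}\gamma^l} = \big((\ttd^l - \bar{\delta}\gamma^l)/(\bar{\delta}(\ttd^l - \gamma^l))\big)^{\bar{\delta}\gamma^l/\ttd^l}$. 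Multiplying the two factors and collecting powers of $\bar{\delta}$ and of $(\ttd^l - \gamma^l)/(\ttd^l - \bar{\delta}\gamma^l)$ yields
\begin{equation*}
g(s^{*}) = \bar{\delta}^{-\bar{\delta}\gamma^l/\ttd^l}\bigg(\frac{\ttd^l - \gamma^l}{\ttd^l - \bar{\delta}\gamma^l}\bigg)^{1 - \bar{\delta}\gamma^l/\ttd^l},
\end{equation*}
which is exactly the claimed lower bound; by convexity, $g(s) \geqslant g(s^{*})$ for every $s > 0$.

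The only delicate step is the algebraic simplification in the final paragraph, but it is entirely routine once $\alpha\ttd^l = \gamma^l$ is used consistently. All other ingredients are immediate: convexity comes from the exponential form and the minimum equation is a single line of calculus. Thus the proof reduces to this explicit optimisation, with no genuine analytic difficulty.
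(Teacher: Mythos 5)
Your proposal is correct and follows essentially the same route as the paper: locate the unique critical point $s^{\star}$ from $g'(s)=0$, solve for $\ee^{-s^{\star}\ttd^l}$, and substitute back. The only difference is cosmetic — you justify that $s^{\star}$ is the global minimum by noting $g$ is a positive combination of convex exponentials, whereas the paper argues via $g(0)=1$, $g'(0)=-\delta\gamma^l<0$, $g(\infty)=\infty$, and uniqueness of the stationary equation; your convexity observation is a slightly more direct way to reach the same conclusion.
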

\begin{proof}
By definition of $g$ we know that $g(0) = 1$ and $\lim_{s \to \infty} g(s) = \infty$. Besides, the first-order derivative of $g$ can be computed as
\begin{equation*}
g^{\prime}(s) = \ee^{s \bar{\delta} \gamma^l} \Bigg( \bar{\delta} \gamma^l \bigg( 1 - \frac{\gamma^l}{\ttd^l} + \frac{\gamma^l}{\ttd^l} \ee^{-s \ttd^l} \bigg) - \gamma^l \ee^{-s \ttd^l} \Bigg),
\end{equation*}
so that $g^{\prime}(0) = \bar{\delta} \gamma^l (1 - \gamma^l / \ttd^l + \gamma^l / \ttd^l) - \gamma^l = \bar{\delta} \gamma^l - \gamma^l = - \delta \gamma^l < 0$. Therefore $g$ has at least one stationary point in $(0,\infty)$. Setting the derivative $g^{\prime}(s)$ to be $0$, we can obtain that
\begin{equation}
\label{eqn:g+prime}
\bigg( \frac{\bar{\delta} \gamma^{2l}}{\ttd^l} - \gamma^l \bigg) \ee^{-s \ttd^l} + \bar{\delta} \gamma^l \bigg( 1 - \frac{\gamma^l}{\ttd^l} \bigg) = 0,
\end{equation}
which has a unique solution in $(0,\infty)$. So $g$ has a unique minimum at $s^{\star} \in (0,\infty)$ and solving (\ref{eqn:g+prime}) for $s^{\star}$ we get
\begin{equation*}
\ee^{-s^{\star} \ttd^l} = \frac{\bar{\delta} (\ttd^l - \gamma^l)}{\ttd^l - \bar{\delta} \gamma^l}.
\end{equation*}
Consequently we can evaluate $g(s^{\star})$ by computing
\begin{equation*}
\ee^{s^{\star} \bar{\delta} \gamma^l} = \big( \ee^{s^{\star} \ttd^l} \big)^{\bar{\delta} \gamma^l / \ttd^l} = \bigg( \frac{\ttd^l - \bar{\delta} \gamma^l}{\bar{\delta} (\ttd^l - \gamma^l)} \bigg)^{\bar{\delta} \gamma^l / \ttd^l},
\end{equation*}
and
\begin{align*}
1 - \frac{\gamma^l}{\ttd^l} + \frac{\gamma^l}{\ttd^l} \ee^{-s^{\star} \ttd^l} &= 1 - \frac{\gamma^l}{\ttd^l} + \frac{\gamma^l}{\ttd^l} \frac{\bar{\delta} (\ttd^l - \gamma^l)}{\ttd^l - \bar{\delta} \gamma^l} \\
&= \frac{\ttd^l - \gamma^l}{\ttd^l - \bar{\delta} \gamma^l},
\end{align*}
so that $g(s^{\star})$ is equal to
\begin{equation*}
\frac{\ttd^l - \gamma^l}{\ttd^l - \bar{\delta} \gamma^l} \bigg( \frac{\ttd^l - \bar{\delta} \gamma^l}{\bar{\delta} (\ttd^l - \gamma^l)} \bigg)^{\bar{\delta} \gamma^l / \ttd^l} = \bar{\delta}^{-\bar{\delta} \gamma^l / \ttd^l} \bigg( \frac{\ttd^l - \gamma^l}{\ttd^l - \bar{\delta} \gamma^l} \bigg)^{1 - \bar{\delta} \gamma^l / \ttd^l},
\end{equation*}
completing the proof.
\end{proof}

Lemma~\ref{lem:gs} allows us to conclude from (\ref{eqn:s+bound}) and (\ref{eqn:K+bound+in+s}) that
\begin{align}
\label{eqn:star+Ak}
\PP \big( A_k \leqslant (\bar{\delta} \gamma^l - 1) k \big) &\leqslant \big( g(s^{\star}) \big)^k; \\
\label{eqn:star+K}
\PP(\kappa < K \leqslant k) &\leqslant \sum_{j=\kappa+1}^k \big( g(s^{\star}) \big)^j.
\end{align}
Here we remark that the $\delta$ in these two inequalities depends on $\cc$ and $(\lambda,\rho)$, since we require that $\bar{\delta} \gamma^l > 1$. In what follows we will show that $g(s^{\star})$ can be bounded from above by an exponentially decreasing function, which can then be used to get exponential bounds on the left-hand sides of (\ref{eqn:star+Ak}) and (\ref{eqn:star+K}). 

\begin{lemma}
\label{lem:exp+g+bound}
Let $\gamma = \gamma(\cc,\lambda,\rho)$ and $l = l(\cc,\lambda,\rho)$ be specified by Lemma~\ref{lem:bhatta+error} and let $\delta \in (0,1)$ be arbitrary and fixed. Then
\begin{equation}
\label{eqn:expo+gs}
\bar{\delta}^{-\bar{\delta} \gamma^l / \ttd^l} \bigg( \frac{\ttd^l - \gamma^l}{\ttd^l - \bar{\delta} \gamma^l} \bigg)^{1 - \bar{\delta} \gamma^l / \ttd^l} \leqslant \ee^{-\delta^2 \gamma^l / (2 \ttd^l)}.
\end{equation}
\begin{proof}
By monotonicity of the logarithm function $x \mapsto \ln(x)$, it is enough to show that
\begin{equation}
f(\delta) \coloneqq \frac{\gamma^l}{2 \ttd^l} \delta^2 - \frac{\gamma^l}{\ttd^l} \bar{\delta} \ln(\bar{\delta}) + \bigg( 1 - \frac{\gamma^l}{\ttd^l} \bar{\delta} \bigg) \ln\bigg( \frac{\ttd^l - \gamma^l}{\ttd^l - \gamma^l \bar{\delta}} \bigg) \leqslant 0.
\end{equation}
Since $f(0) = 0$, the claim in (\ref{eqn:expo+gs}) will follow if we can show that $f$ is non-increasing on $[0,1)$. Indeed, the first-order derivative of $f$ can be computed as
\begin{align*}
f^{\prime}(\delta) &= \frac{\gamma^l}{\ttd^l} \delta + \frac{\gamma^l}{\ttd^l} \big( 1 + \ln(\bar{\delta}) \big) + \frac{\gamma^l}{\ttd^l} \ln \bigg( \frac{\ttd^l - \gamma^l}{\ttd^l - \gamma^l \bar{\delta}} \bigg) - \frac{\gamma^l}{\ttd^l}  \\
&= \frac{\gamma^l}{\ttd^l} \Bigg( \delta + \ln\bigg( \frac{(\ttd^l - \gamma^l)(1-\delta)}{\ttd^l - \gamma^l(1-\delta)} \bigg) \Bigg)  \\
&= \frac{\gamma^l}{\ttd^l} \Bigg( \delta + \ln\bigg( 1 - \frac{\ttd^l \delta}{\ttd^l - \gamma^l(1-\delta)} \bigg) \Bigg).
\end{align*}
Since $\ln(1-x) \leqslant -x$ for all $x < 1$ and $\ttd^l \delta / (\ttd^l - \gamma^l (1-\delta)) < 1$, it follows that
\begin{align*}
\delta + \ln\bigg( 1 - \frac{\ttd^l \delta}{\ttd^l - \gamma^l(1-\delta)} \bigg) &\leqslant \delta - \frac{\ttd^l \delta}{\ttd^l - \gamma^l(1-\delta)}  \\
&\leqslant \delta - \frac{\ttd^l \delta}{\ttd^l} = \delta - \delta,
\end{align*}
from which we can conclude that $f^{\prime}(\delta) \leqslant 0$ for all $\delta \in (0,1)$, and this completes the proof.
\end{proof}
\end{lemma}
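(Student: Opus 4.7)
The plan is to take logarithms of both sides of \eqref{eqn:expo+gs} and reduce the claim to the non-positivity of a single smooth function on $[0,1)$. Writing $\alpha \coloneqq \gamma^l/\ttd^l$ and $\bar\delta = 1-\delta$, and noting that $\alpha \in (0,1)$ because $\gamma \in (1,\ttd)$ from Lemma~\ref{lem:bhatta+error}, the inequality \eqref{eqn:expo+gs} is equivalent to showing that, for every $\delta \in [0,1)$,
$$f(\delta) \coloneqq \frac{\alpha\delta^2}{2} - \alpha\bar\delta\ln(\bar\delta) + (1-\alpha\bar\delta)\ln\!\left(\frac{1-\alpha}{1-\alpha\bar\delta}\right) \leqslant 0.$$
A direct evaluation gives $f(0) = 0$, so it suffices to verify $f'(\delta) \leqslant 0$ on $(0,1)$.

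To compute $f'(\delta)$ I would apply the product rule to each of the two logarithmic terms and collect. The derivative of $-\alpha\bar\delta\ln(\bar\delta)$ contributes $\alpha\ln(\bar\delta) + \alpha$, and the derivative of $(1-\alpha\bar\delta)\ln((1-\alpha)/(1-\alpha\bar\delta))$ contributes $\alpha\ln(1-\alpha) - \alpha\ln(1-\alpha\bar\delta) - \alpha$; the constant terms $+\alpha$ and $-\alpha$ cancel, leaving
$$f'(\delta) = \alpha\left[\delta + \ln\!\left(\frac{\bar\delta(1-\alpha)}{1-\alpha\bar\delta}\right)\right] = \alpha\left[\delta + \ln\!\left(1 - \frac{\delta}{1-\alpha\bar\delta}\right)\right],$$
where the second equality uses the elementary identity $(1-\alpha\bar\delta) - \bar\delta(1-\alpha) = \delta$. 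Applying the bound $\ln(1-x) \leqslant -x$, valid for $x < 1$, with $x = \delta/(1-\alpha\bar\delta) \in [0,1)$ gives
$$f'(\delta) \leqslant \alpha\left(\delta - \frac{\delta}{1-\alpha\bar\delta}\right) = -\frac{\alpha^2\bar\delta\delta}{1-\alpha\bar\delta} \leqslant 0,$$
which closes the argument.

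The main obstacle I anticipate is the bookkeeping in the derivative computation, specifically recognizing that the four product-rule terms collapse into a single expression of the form $\delta + \ln(1-x)$ with $x = \delta/(1-\alpha\bar\delta)$. This requires both the cancellation of the constant pieces $\pm\alpha$ and the telescoping identity $(1-\alpha\bar\delta) - \bar\delta(1-\alpha) = \delta$; once these are spotted, the standard concavity bound $\ln(1-x) \leqslant -x$ finishes the proof with no further work.
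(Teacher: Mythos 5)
Your proof is correct and follows essentially the same route as the paper's: logarithm, reduce to showing $f(\delta)\leqslant 0$ with $f(0)=0$, compute $f'$, simplify to $\alpha(\delta+\ln(1-x))$ with $x=\delta/(1-\alpha\bar\delta)$, and close with $\ln(1-x)\leqslant -x$. The only differences are cosmetic: your shorthand $\alpha=\gamma^l/\ttd^l$, and your exact evaluation $\delta-\delta/(1-\alpha\bar\delta)=-\alpha\bar\delta\delta/(1-\alpha\bar\delta)\leqslant 0$ where the paper instead applies one further crude upper bound to land on $\delta-\delta=0$.
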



\subsection{Appendix to Section \ref{subsec:lbd+bit+error+prob}}
\label{appdix:bsc+sum}

\begin{proof}[Proof of Lemma \ref{lem:bsc+density}]
The inequality in (\ref{eqn:bsc+binomial}) holds for the case $k = 0$. Thus let us show that the inequality holds for $1 \leqslant k \leqslant (l+1)/2$. To this end, let us first define the quantity 
\begin{equation*}
D(k) = B\big( k; k+(l-1)/2 \big) - B\big( 0; (l-1)/2 \big), 
\end{equation*}
so that $D(k)$ is the difference between the right-hand side of (\ref{eqn:bsc+binomial}) and the left-hand side of (\ref{eqn:bsc+binomial}). Then we have
\begin{align}
D(k) &= \sum_{i=k}^{k+(l-1)/2} \binom{l}{i} p^{l-i} {(1-p)}^i - \sum_{i=0}^{(l-1)/2} \binom{l}{i} p^{l-i} {(1-p)}^i \nonumber \\
&= \sum_{i=(l+1)/2}^{k+(l-1)/2} \binom{l}{i} p^{l-i} {(1-p)}^i - \sum_{i=0}^{k-1} \binom{l}{i} p^{l-i} {(1-p)}^i \nonumber \\
\label{eqn:bsc+Dk}
&= B\big( (l+1)/2; k+(l-1)/2 \big) - B( 0; k-1).
\end{align}
With $\bar{p} \coloneqq 1 - p$, we notice that
\begin{align}
B\big( (l+1)/2; k+(l-1)/2 \big) &= \sum_{i = (l+1)/2 - k}^{(l-1)/2} \binom{l}{i} p^{l-i} {\bar{p}}^i \ee^{(l-2i) \ln(\bar{p}/p)} \nonumber \\
\label{eqn:bsc+channel+symmetry}
&\geqslant \sum_{i = (l+1)/2 - k}^{(l-1)/2} \binom{l}{i} p^{l-i} {\bar{p}}^i \\
&= B\big( (l+1)/2 - k; (l-1)/2 \big), \nonumber
\end{align}
where the inequality in (\ref{eqn:bsc+channel+symmetry}) is due to the fact that $(l-2i)\ln(\bar{p}/p)$ is positive as $p \in {[0,1/2]}$ and $l-2i \geqslant 1$. Therefore plugging this lower bound into (\ref{eqn:bsc+Dk}), we can obtain that
\begin{equation}
\label{eqn:bsc+Dk+more}
D(k) \geqslant B\big( (l+1)/2 - k; (l-1)/2 \big) - B( 0; k-1 ).
\end{equation}
Next we will verify that the sequence $\{B_i\}_{0 \leqslant i \leqslant l}$, the $i$-th element of which is defined by 
\begin{equation*}
B_i = \binom{l}{i} p^{l-i} (1-p)^i, 
\end{equation*}
is nondecreasing for $i \leqslant (l-1)/2$. Then the claim that $D(k)$ is nonnegative for all $0 \leqslant k \leqslant (l+1)/2$ follows, since $B( (l+1)/2 - k; (l-1)/2 )$ in (\ref{eqn:bsc+Dk+more}) is the sum over the last $k$ elements of the sequence $\{B_i\}_{0 \leqslant i \leqslant (l-1)/2}$, while $B( 0; k-1 )$ in (\ref{eqn:bsc+Dk+more}) is the sum over the first $k$ elements of the sequence $\{B_i\}_{0 \leqslant i \leqslant (l-1)/2}$. Now to verify that $\{B_i\}$ is nondecreasing, we can compute 
\begin{equation*}
\frac{B_i}{B_{i-1}} = \frac{(l-i+1)(1-p)}{ip} = 1 + \frac{(l+1)(1-p) - i}{ip}.
\end{equation*}
So the sequence $\{B_i\}$ is nondecreasing if $i \leqslant \lfloor (1-p)(l+1) \rfloor$. Since $p \in [0,1/2]$, we conclude that 
\begin{equation*}
\lfloor (1-p)(l+1) \rfloor \geqslant (1-p)(l+1) - 1 \geqslant (l+1)/2 - 1 = (l-1)/2, 
\end{equation*}
and this verifies that $\{B_i\}$ is nondecreasing when $i \leqslant (l-1)/2$, which completes the proof of Lemma~\ref{lem:bsc+density}.
\end{proof}


\bibliographystyle{ieeetr}
\bibliography{reference}

\begin{thebibliography}{10}

\bibitem{Gal60}
R.~G. Gallager, {\em Low-Density Parity-Check Codes}.
\newblock {Sc.D.} thesis, Massachusetts Institute of Technology, 1960.

\bibitem{Gal63}
R.~G. Gallager, {\em Low-Density Parity-Check Codes}.
\newblock Cambridge, MA, USA: MIT Press, 1963.

\bibitem{BGT93}
C.~Berrou, A.~Glavieux, and P.~Thitimajshima, ``Near {S}hannon limit
  error-correcting coding and decoding: {T}urbo-codes,'' in {\em Proc. IEEE
  Int. Conf. Commun.}, (Gen{\`e}ve, Switzerland), pp.~1064--1070, 1993.

\bibitem{MN96}
D.~J.~C. Mackay and R.~M. Neal, ``Near {S}hannon limit performance of low
  density parity check codes,'' {\em Electron. Lett.}, vol.~32, pp.~1645--1646,
  Aug. 1996.

\bibitem{LMSSS97}
M.~Luby, M.~Mitzenmacher, A.~Shokrollahi, D.~A. Spielman, and V.~Stemann,
  ``Practical loss-resilient codes,'' in {\em Proc. 29th Annu. ACM Symp. Theory
  of Computing}, (El Paso, TX, USA), pp.~150--159, 1997.

\bibitem{LMSS01a}
M.~Luby, M.~Mitzenmacher, A.~Shokrollahi, and D.~A. Spielman, ``Efficient
  erasure correcting codes,'' {\em IEEE Trans. Inf. Theory}, vol.~47,
  pp.~569--584, Feb. 2001.

\bibitem{LMSS01b}
M.~Luby, M.~Mitzenmacher, A.~Shokrollahi, and D.~A. Spielman, ``Improved
  low-density parity-check codes using irregular graphs,'' {\em IEEE Trans.
  Inf. Theory}, vol.~47, pp.~585--598, Feb. 2001.

\bibitem{RU01a}
T.~Richardson and R.~Urbanke, ``Efficient encoding of low-density parity-check
  codes,'' {\em IEEE Trans. Inf. Theory}, vol.~47, pp.~638--656, Feb. 2001.

\bibitem{RU01b}
T.~Richardson and R.~Urbanke, ``The capacity of low-density parity-check codes
  under message-passing decoding,'' {\em IEEE Trans. Inf. Theory}, vol.~47,
  pp.~599--618, Feb. 2001.

\bibitem{Sho99}
A.~Shokrollahi, ``New sequences of linear time erasure codes approaching the
  channel capacity,'' in {\em Proc. Int. Symp. Appl. Algebra, Algebraic
  Algorithms and Error-Correcting Codes}, (Honolulu, HI, USA), pp.~65--76,
  1999.

\bibitem{KRU13}
S.~Kudekar, T.~Richardson, and R.~Urbanke, ``Spatially coupled ensembles
  \,universally achieve capacity under belief propagation,'' {\em IEEE Trans.
  Inf. Theory}, vol.~59, pp.~7761--7813, Dec. 2013.

\bibitem{KRU11}
S.~Kudekar, T.~Richardson, and R.~Urbanke, ``Threshold saturation via spatial
  coupling: {W}hy convolutional {LDPC} ensembles perform so well over the
  {BEC},'' {\em IEEE Trans. Inf. Theory}, vol.~57, pp.~803--834, Feb. 2011.

\bibitem{SS11}
I.~Sason and B.~Shuval, ``On universal {LDPC} code ensembles over memoryless
  symmetric channels,'' {\em IEEE Trans. Inf. Theory}, vol.~57, pp.~5182--5202,
  Aug. 2011.

\bibitem{RU08}
T.~Richardson and R.~Urbanke, {\em Modern Coding Theory}.
\newblock Cambridge, UK: Cambridge University Press, 2008.

\bibitem{Bol01}
B.~Bollob{\'a}s, {\em Random Graphs}.
\newblock Cambridge, UK: Cambridge University Press, 2001.

\bibitem{Sho01}
A.~Shokrollahi, ``Capacity-achieving sequences,'' in {\em Codes, Systems, and
  Graphical Models. The IMA Volumes in Mathematics and its Applications}
  (B.~Marcus and J.~Rosenthal, eds.), vol.~123, pp.~153--166, New York, NY,
  USA: Springer-Verlag, 2001.

\bibitem{SB10}
H.~Saeedi and A.~Banihashemi, ``New sequences of capacity achieving {LDPC} code
  ensembles over the binary erasure channel,'' {\em IEEE Trans. Inf. Theory},
  vol.~56, pp.~6332--6346, Dec. 2010.

\bibitem{DRU06}
C.~Di, T.~Richardson, and R.~Urbanke, ``Weight distribution of low-density
  parity-check codes,'' {\em IEEE Trans. Inf. Theory}, vol.~52, pp.~4839--4855,
  Nov. 2006.

\bibitem{Zem01}
G.~Z\'{e}mor, ``On iterative decoding of cycle codes of graphs,'' in {\em
  Codes, Systems, and Graphical Models. The IMA Volumes in Mathematics and its
  Applications} (B.~Marcus and J.~Rosenthal, eds.), vol.~123, pp.~311--326, New
  York, NY, USA: Springer-Verlag, 2001.

\end{thebibliography}


\end{document}